\newif\ifnotes
\newif\ifcr
\newif\ifllncs
\newcommand{\omri}[1]{$\ll$\textsf{\color{blue} Omri: { #1}}$\gg$}
\newcommand{\omri}[1]{}
\definecolor{Maroon}{cmyk}{0, 0.87, 0.68, 0.32}
\numberwithin{algorithm}{section}
\renewcommand{\paragraph}[1]{\vspace{1.5mm}\noindent \textbf{#1}}
\newcommand{\Nat}{\mathbb{N}}
\newcommand{\bbZ}{\mathbb{Z}}
\newcommand{\bbC}{\mathbb{C}}
\newcommand{\bbR}{\mathbb{R}}
\newcommand{\bbE}{\mathbb{E}}
\newcommand{\Var}{\text{Var}}
\newcommand{\poly}{\mathsf{poly}}
\newcommand{\OI}{\textbf{OI}}
\newcommand{\OOI}{\mathcal{O}_{\OI}}
\newcommand{\BQP}{\textbf{BQP}}
\newcommand{\NP}{\textbf{NP}}
\newcommand{\QCMA}{\textbf{QCMA}}
\newcommand{\BQPOI}{\textbf{BQP}^{\text{OI}}}
\newcommand{\GCVP}{\mathsf{GapCVP}}
\newcommand{\LWE}{\mathsf{LWE}}
\newcommand{\GI}{\mathsf{GI}}
\newcommand{\SZK}{\textbf{SZK}}
\newcommand{\SISD}{\mathsf{SISD}}
\newcommand{\SD}{\mathsf{SD}}
\newcommand{\Basis}{\mathbf{B}}
\newcommand{\Lattice}{\mathcal{L}}
\newcommand{\sVector}{\mathbf{s}}
\newcommand{\aVector}{\mathbf{a}}
\newcommand{\bVector}{\mathbf{b}}
\newcommand{\tVector}{\mathbf{t}}
\newcommand{\vVector}{\mathbf{v}}
\newcommand{\pVector}{\mathbf{p}}
\newcommand{\eVector}{\mathbf{e}}
\newcommand{\uVector}{\mathbf{u}}
\newcommand{\YES}{\prod_{\text{YES}}}
\newcommand{\NO}{\prod_{\text{NO}}}
\newcommand{\dist}{\text{dist}}
\newcommand{\QFT}{\mathsf{QFT}}
\newcommand{\ket}[1]{|{#1}\rangle}
\newcommand{\bra}[1]{\langle{#1}|}
\newcommand{\textabbrevstyle}[1]{\mbox{#1}}
\newcommand{\textabbrevstylebol}[1]{\mbox{\textbf{#1}}}
\newcommand{\newtextabbrev}[1]{\expandafter\newcommand\csname #1\endcsname{\textabbrevstyle{#1}\xspace}}
\newcommand{\newtextabbrevbol}[1]{\expandafter\newcommand\csname #1\endcsname{\textabbrevstylebol{#1}\xspace}}
\newcommand{\renewtextabbrevbol}[1]{\expandafter\renewcommand\csname
#1\endcsname{\textabbrevstylebol{#1}\xspace}}
\newtheorem{definition}{Definition}[section]
\newtheorem{fact}{Fact}[section]
\newtheorem{lemma}{Lemma}[section]
\newtheorem{corollary}{Corollary}[section]
\newtheorem{theorem}{Theorem}[section]
\newtheorem{claim}{Claim}[section]
\theoremstyle{remark}
\newtheorem{remark}{Remark}[section]
\newenvironment{boxfig}[2]{\begin{figure}[#1]\fbox{\begin{minipage}{\linewidth}
                        \vspace{0.2em}
                        \makebox[0.025\linewidth]{}
                        \begin{minipage}{0.95\linewidth}
            {{
                        #2 }}
                        \end{minipage}
                        \vspace{0.2em}
                        \end{minipage}}}{\end{figure}}
\DeclarePairedDelimiter{\ceil}{\lceil}{\rceil}
\newcommand{\norm}[1]{\ensuremath{\left\lVert #1 \right\rVert}} %
\renewcommand{\paragraph}[1]{\smallskip\noindent{\bf #1}}
\title{Quantum Algorithms in a Superposition of Spacetimes}
\author{Omri Shmueli\thanks{Tel Aviv University, \texttt{omrishmueli@mail.tau.ac.il}. Supported in part by the European Research Council (ERC) under the European Union’s Horizon Europe research and innovation programme (grant agreement No. 101042417, acronym SPP), and by the Clore Israel Foundation.}}
\date{\vspace{-5ex}}
\date{}
\begin{document}

\maketitle

\vspace{-1.1cm}

\begin{abstract}
Quantum computers are expected to revolutionize our ability to process information. The advancement from classical to quantum computing is a product of our advancement from classical to quantum physics -- the more our understanding of the universe grows, so does our ability to use it for computation. A natural question that arises is, what will physics allow in the future? Can more advanced theories of physics increase our computational power, beyond quantum computing?

An active field of research in physics studies theoretical phenomena outside the scope of explainable quantum mechanics, that form when attempting to combine Quantum Mechanics (QM) with General Relativity (GR) into a unified theory of Quantum Gravity (QG). One expected phenomenon in QG is quantum uncertainty in the structure of spacetime: QM and GR together may imply that spacetime can be in superposition of curvatures. Under GR, gravitational time dilation asserts that the geometry of spacetime determines the order of events occurring. Accordingly, a superposition of curvatures is known to present the possibility of a superposition of event orderings.

In the literature of quantum information theory, the most natural model for a superposition of event orders, is a superposition of unitary evolution orders: For unitaries $U_0$, $U_1$ and state $\ket{\psi}$, execution in a superposition of orders aims to generate the normalization of the state $U_1 U_0 \cdot \ket{\psi} + U_0 U_1 \cdot \ket{\psi}$. Costa (Quantum 2022) proves that a model with such power cannot produce a valid process matrix (a generalization of standard quantum processes), and leaves the analysis of any relaxation of the model as an open question.

\paragraph{In this work} we show a first example of a natural computational model based on QG, that provides an exponential speedup over standard quantum computation (under standard hardness assumptions).
Formally, we define a relaxation of the previous model of superposition of unitary orders, where we do allow the generation of the state $U_1 U_0 \cdot \ket{\psi} + U_0 U_1 \cdot \ket{\psi}$, but with computational complexity inversely proportional to the state's norm (unlike the previous model, considering unconditional generation). We further provide physical intuition and assumptions behind our model.

We show that a quantum computer with the ability to create a superposition of unitary orders is able to solve in polynomial time two fundamental problems in computer science: The Graph Isomorphism Problem ($\GI$) and the Gap Closest Vector Problem ($\GCVP$), with gap $O\left( n\sqrt{n} \right)$. These problems are believed by experts to be hard to solve for a regular quantum computer. Interestingly, our model does not seem overpowered, and we found no obvious way to solve entire complexity classes that are considered hard in computer science, like the classes $\NP$ and $\SZK$. As part of our work we develop a new parameterization technique for the invertibility of classical circuits, which raises new questions in the analysis of probability distributions and induces a new parameter for hardness inside the complexity class $\SZK$. These new techniques are independent of any non-classical computation.
\end{abstract}

\ifllncs
\pagestyle{plain}
\else

\thispagestyle{empty}
\newpage
\tableofcontents
\newpage
\thispagestyle{empty}

\fi

\pagenumbering{arabic}

\section{Introduction} \label{section:introduction}
Theoretical physics is in a quest to draw a blueprint for the universe. The constructed different theories, which are sets of rules postulating how the universe works in certain scenarios, are our tool for predicting what will be the next step of a physical process. The two most accurate theories to date are Quantum Mechanics (QM) and General Relativity (GR). QM (along with Quantum Field Theory) describes the interaction between three out of the four fundamental forces in nature: Strong nuclear force, weak nuclear force and electromagnetism. GR describes the fourth force -- gravity. Both QM and GR are immensely successful theories, in understanding vastly different physical phenomena.

Formulating an experimentally-verified theory of physics that generalizes both QM and GR, is one of the grand challenges of physics. Such a theory, usually referred to as a theory of Quantum Gravity, will model the interaction between quantum effects and gravitational force; an interaction which is currently ignored in most analyses. The two most popular candidates as theories of quantum gravity are String Theory and Loop Quantum Gravity, however, numerous other proposals exist \cite{de2022frontiers}.

Moving from physics to computer science, we observe an interconnectedness. On the one hand, theories of physics dictate the potential computational power in humanity's hand. If a theory of physics allows for a machine with new information processing capabilities, this sometimes implies algorithmic breakthroughs: For example, quantum mechanics allows for a quantum computer, which in turn is needed to run Shor's algorithm \cite{shor1999polynomial}. Shor's algorithm factors integers in polynomial time, whereas our fastest (known) factoring algorithms on any classical computer run in sub-exponential time. In practical terms, the meaning of these timescales is the difference between possible and impossible.

On the other hand, computational insights shed light back on physics. To see this, consider two physical theories $A$ and $B$, where $B$ claims to be a more elaborate description of the universe and generalize theory $A$. In that case, $B$ can not only explain all physical processes that $A$ can, but can accurately describe processes which $A$ cannot. There is a subtlety here; While we can encounter a physical process that theory $A$ \emph{currently} does not know how to explain, it does not necessarily mean that there is no explanation -- it might exist, and we just haven't found it yet. Proving that there is no explanation, however, is a task of different magnitude. Turning to computer science for help, we can construct two computational models corresponding to the two theories (i.e., a strongest computer $C_A$ that can be constructed according to theory $A$ and a computer $C_B$ that can be constructed according to theory $B$). Now, if we find a computational problem $P$ that $C_B$ can solve but $C_A$ cannot, this gives strong evidence that theory $B$ can explain physical phenomena that $A$ cannot, namely, the solving of $P$!
Given the correspondence between algorithms and theoretical physics, a natural question which arises is the following: Just like the progression from classical to quantum mechanics allowed computational transcendence, how can a theory of quantum gravity transform our theory of computation?

\paragraph{Superposition of Spacetime Geometries.}
While a confirmed theory of quantum gravity remains elusive, there are some physical phenomena expected to take place within \emph{any} valid quantum gravity theory, in some form or another. Arguably the most basic expected phenomenon unique to quantum gravity, is the possibility of quantum uncertainty in the structure spacetime itself \cite{ashtekar1996large}. More elaborately, according to QM, all dynamic quantities are subject to quantum superposition. GR tells us that the geometry of spacetime is not fixed, and a straightforward combination implies the possibility for a superposition of spacetime geometries \cite{ashtekar1996large, giacomini2019quantum, christodoulou2019possibility, giacomini2020einstein, giacomini2021spacetime, giacomini2022quantum}.

\paragraph{Structure of the Introduction.}
The goal of this paper is to suggest and define a computational model for a quantum computer with the added ability of generating a superposition of spacetime geometries, and to analyze its power.
The rest of the Introduction is as follows.
In Section \ref{subsection:intro_gravity_superposition_entanglement} we provide a brief overview of the physics literature on a superposition of spacetime geometries, and then explain how it serves as inspiration for our computational model. We also provide the physical assumptions our model makes.
Section \ref{subsection:intro_gravity_superposition_entanglement} of the introduction deals with physics, it is self contained and can be skipped for the readers that are interested only in the complexity theoretical aspects of this work.
In Section \ref{subsection:intro_model} we present our computational model and its connection to both standard quantum computation and previous generalized models of quantum information processing. In Section \ref{subsection:intro_results} we present our complexity theoretical results for the model.

\subsection{Gravity, Superposition and Entanglement} \label{subsection:intro_gravity_superposition_entanglement}
Before we explain what a quantum spacetime may be able to do, we refer to one ability of a \emph{classical} spacetime. GR postulates the effects of gravitational time dilation, where the structure of spacetime determines the pace at which time flows. More precisely, gravitational time dilation is the effect of time moving slower in a region of space with greater gravitational potential: For some location $s \in \bbR^{3}$ in space, the time in $s$ slows down proportionally to how close $s$ is to a massive object, and the mass of that object. According to this logic, a quantum spacetime may enable a superposition of spacetime curvatures, which in turn induces a superposition of time dilation effects, on the same set of events.

\paragraph{Gravitational Decoherence Versus Gravitationally-Induced Entanglement.}
Let us be more concrete, by using (with small variations) a thought experiment by Zych, Costa, Pikovski and Brukner \cite{zych2019bell} (depicted in Figure 1 from \cite{zych2019bell}). In the experiment there are two \emph{spatially isolated} systems; one contains a gravitationally significant object $M$, and the second has two clocked unitary circuits $U_0$, $U_1$. More precisely,
\begin{itemize}
    \item
    In system $S_{M}$ we can generate a uniform superposition of two possible locations $L_0$, $L_1$ of a mass $M$. Note that we are using the fact that quantum theory does not prohibit superpositions of massive objects.

    \item 
    In system $S_{U}$ there is a quantum register $R$ in some state $\ket{\psi}$, two atomic clocks $C_0$, $C_1$, and a central processing unit $C_{c}$ that gets signals from the clocks. $C_{c}$ always executes $U_0$ and $U_1$ on $R$, and the order in which it executes the two unitary circuits depends on the signal from which clock, $C_0$ or $C_1$, arrives first. For the sake of order, system $S_{U}$ is constructed such that clock $C_1$ takes a bit more time (i.e., needs to make more clock ticks), so that when no disturbances are applied to system $S_{U}$, the order in which $C_{c}$ executes the unitaries is always $U_0$ first and then $U_1$. Such a system should also be realizable by quantum mechanics.

    \item 
    The last detail is the only one that uses effects from GR, specifically, gravitational time dilation. The system $S_{M}$ is located with respect to the second system $S_{U}$ such that, even though the systems are spatially disjoint in coordinates, if $M$ is at location $L_0$, which is closer to clock $C_1$, then $C_1$ ticks slower and $C_0$ concludes first, and if $M$ is at location $L_1$, which is closer to $C_0$, then $C_0$ ticks slower and $C_1$ concludes first.
\end{itemize}

Imagine we (1) execute $S_{M}$ and generate a uniform superposition of the two possible locations for $M$, followed by (2) executing $S_{U}$, the clocked execution of the unitaries $U_0$, $U_1$. We can analyze what happens in each of the branches of the superposition, separately, assuming as if spacetime is classical and there is no superposition of the locations of $M$. If $M$ is at $L_0$, the state in $R$ at the end of the execution is $U_{1} U_{0} \ket{\psi}$, and if $M$ is at $L_1$ then the state in $R$ at the end of execution is $U_{0} U_{1} \ket{\psi}$. The question becomes one in quantum gravity, when we ask what happens not in each branch of the superposition separately, but as a full joint system, in superposition. Experimental physics does not know the answer to this question.

The above question may translate to more basic questions about gravity, superposition and entanglement. Specifically, the first question is whether a massive object can be maintained in a quantum superposition, and the second question is whether entanglement can be created through gravity alone. To elaborate, on one side, the gravitational decoherence hypothesis \cite{diosi1989models, penrose1996gravity, diosi2014gravitation, diosi1987universal} says that a quantum superposition of a massive object will fundamentally decohere, turning into a probabilistic mixture of classical states. On the other end, gravitationally-induced entanglement (GIE) \cite{bose2017spin, marletto2017gravitationally} is hypothesizing that not only a massive object can be in a superposition, but that the gravitational field can be used to mediate entanglement.

In the context of the thought experiment from \cite{zych2019bell}, gravitational decoherence says that when executing the system $S_{M}$, there will not be a coherent superposition for the locations of $M$, so with probability $\frac{1}{2}$ the mass will be at $L_0$ and with the remaining probability at $L_1$. The end result of the execution of the two systems, in that case, is the probabilistic uniform mixture (or in other words, uniform mixed state) between
$$
\ket{L_{0}}_{S_{M}} \otimes U_{1} U_{0} \ket{\psi}_{S_{U}}
\; , \;
\ket{L_{1}}_{S_{M}} \otimes U_{0} U_{1} \ket{\psi}_{S_{U}}
\enspace ,
$$
where for $b \in \{ 0, 1 \}$, $\ket{L_{b}}$ is the state where the mass $M$ is at location $L_{b}$, and we assume that the states of the different locations are orthogonal (that is, $\bra{L_{0}}\ket{L_{1}} = 0$). If GIE is true, then the per-branch analysis from before holds, only that we get the quantum state. Assuming GIE, at the end of the execution of the joint system, the quantum state is the pure state
\begin{equation} \label{equation:gie_state}
    \frac{1}{\sqrt{2}} \cdot \ket{L_{0}}_{S_{M}} \otimes U_{1} U_{0} \ket{\psi}_{S_{U}}
    +
    \frac{1}{\sqrt{2}} \cdot \ket{L_{1}}_{S_{M}} \otimes U_{0} U_{1} \ket{\psi}_{S_{U}}
    \enspace .
\end{equation}

As mentioned earlier, using the effects of gravitational time dilation, gravity can be used to create correlations between two spatially isolated systems\footnote{Recall that this is exactly what happens in the above experiment -- depending on the location of $M$ in the system $S_{M}$, this determines a different spacetime geometry in the system $S_{U}$, which in turn cause the unitaries $U_0$, $U_1$ to be executed in one order or the other. This in particular creates a correlation between the states of the two systems.}. In the scope of this paper we call this phenomenon Gravitational Correlation. In the case of executing the experiment under gravitational decoherence, we can think of the result as a Classical Gravitational Correlation. In the case of executing the experiment under GIE, we consider the result as a Quantum Gravitational Correlation. Up to date and despite considerable efforts, no experiment exists that either proves or refutes gravitational decoherence or gravitationally-induced entanglement (that is, a reproducible experiment with statistical significance).

\paragraph{A complexity theoretical perspective.}
Both classical and quantum gravitational correlations are efficiently simulatable in the quantum circuit model. This is a basic exercise in quantum information processing: We require the information of the mass $M$ (but not an actual massive body). By using controlled versions of the unitary circuits $U_{0}$, $U_{1}$, we can create the same state from the GIE execution \ref{equation:gie_state}. Then, if we want the state from the case of gravitational decoherence, we only need to measure the left register (containing the system $S_{M}$) in the computational basis.

%\paragraph{Quantum Superposition of Unitary Orders.}
%Assume that gravity is quantum, and a quantum computer is added the power to manipulate it. Can such quantum gravitational computer efficiently solve problems that a standard quantum computer cannot? The view in computational complexity theory is that efficient computation is one where the algorithm running time scales at most polynomially with the problem size, so, when we refer to computational advantage, we mean a super-polynomial speedup. Indeed, recent results point to evidence that the full effects of gravity in a quantum context may be not efficiently simulatable (i.e., simulatable in polynomial time) by a strictly quantum mechanical algorithm \cite{bouland2019computational}.
Note that the two options of either a classical or quantum gravitational correlation (corresponding to gravitational decoherence or GIE, respectively), do not cover the entire field of possibilities for the outcome of the experiment. There is a possibility of quantum gravitational correlation \emph{without entanglement} (quantum correlations are known to be possible without entanglement, e.g., quantum discord \cite{ollivier2001quantum}). That is, a massive object can be maintained in a coherent superposition, as opposed to gravitational decoherence, but gravity alone cannot create entanglement, as opposed to GIE.
Indeed, some preliminary results suggest that even if spacetime is quantum, this does not necessarily implies it has to mediate full entanglement \cite{belenchia2018quantum, belenchia2019information, bera2019quantifying, danielson2022gravitationally, sugiyama2023quantum}. 

\paragraph{This work - quantum gravitational correlation without entanglement.}
The question we focus on in this research is, what are the computational implications of the the last scenario? Assuming quantum gravitational correlations without (full) entanglement, what happens when the system $S_{U}$ executes, while the system $S_{M}$ finished executing and is in the state $\frac{1}{\sqrt{2}} \cdot \ket{L_{0}}_{ S_{M} } + \frac{1}{\sqrt{2}} \cdot \ket{L_{1}}_{ S_{M} }$ ? Our computational model makes the following assumptions.
First, if quantum gravitational correlations are possible but do not force entanglement, when executing the experiment, this should create a joint state that resembles the un-normalized state
$$
\left(
\ket{L_{0}} + \ket{L_{1}}
\right)_{ S_{M} }
\otimes 
\left(
U_{1} U_{0} \ket{\psi} + U_{0} U_{1} \ket{\psi}
\right)_{ S_{U} }
\enspace .
$$
We observe that the norm of the state $U_{1} U_{0} \ket{\psi} + U_{0} U_{1} \ket{\psi}$ fluctuates as a function of all components $U_{0}$, $U_{1}$ and $\ket{\psi}$. This means that (similarly to calculations stemming from general relativity in quantum gravity) such transformation will not be linear, due to the need for a state re-normalization.
Next, we imagine that in case that the physical universe allows to execute transformations $\left( U_{0}, U_{1}, \ket{\psi} \right) \rightarrow U_{1} U_{0} \ket{\psi} + U_{0} U_{1} \ket{\psi}$, it is numerically intuitive that decoherence will be inversely proportional to the norm of the state, and that energy could be invested to maintain coherence, which should also be inversely proportional to the norm.
Finally, while we assume the ability to invest energy to maintain coherence of a superposition of spacetimes, we do not assume that matter-wave interference between different spacetimes to be necessarily identical to quantum interference within a single spacetime\footnote{Some theoretical results show evidence that standard interference dynamics is not necessarily the case when considering a superposition of spacetimes \cite{foo2021schrodinger, foo2022quantum, foo2023quantum}.}.
Later when we define our computational model, these assumptions will be more formally expressed.

\subsection{Our Computational Model - Computable Order Interference} \label{subsection:intro_model}
We now move from informal discussions in physics to a formal representation of our model.
In the literature of quantum information theory, the phenomenon of uncertainty of the background spacetime is referred to as an indefinite causal structure (ICS) \cite{hardy2007towards, hardy2005probability, myrvold2009quantum, oreshkov2012quantum, chiribella2013quantum, leifer2013towards, brukner2014quantum}. As the structure of spacetime determines the order of events happening (but not necessarily the set of events happening), ICS studies the possibility of quantum uncertainty of event occurrence order. In quantum information theory, an event is the time-evolution of a physical system, or equivalently, the execution of a quantum circuit on a quantum register. Following these baselines, the most natural interpretation of ICS is a quantum superposition of unitary execution orders \cite{oreshkov2012quantum, chiribella2013quantum}. Given two $n$-qubit unitary circuits $U_0$, $U_1$ and an $n$-qubit state $\ket{\psi}$, execution in ICS can be imagined as a superposition between two scenarios -- one scenario where $U_0$ is executed followed by an execution of $U_1$, generating the state $U_1 U_0 \cdot \ket{\psi}$, and a second scenario with flipped execution order, generating $U_0 U_1 \cdot \ket{\psi}$. Executing $U_0$, $U_1$ on $\ket{\psi}$ with indefinite causal structure should attempt to create the state $U_1 U_0 \cdot \ket{\psi} + U_0 U_1 \cdot \ket{\psi}$.

There is a crucial difference between a \emph{pure} superposition of unitary orders, which takes the above form $U_1 U_0 \cdot \ket{\psi} + U_0 U_1 \cdot \ket{\psi}$, and a mixed superposition, entangled with an auxiliary register, which takes the form $\ket{0} U_1 U_0 \cdot \ket{\psi} + \ket{1} U_0 U_1 \cdot \ket{\psi}$. First, the latter can be easily generated in the standard quantum circuit model, using controlled versions of $U_0$, $U_1$. Second, the former is generally not even a unitary transformation. More so, Costa \cite{costa2022no} shows that such transformation cannot produce a Process Matrix, which is, in a nutshell, a non-causally ordered generalization of quantum states \cite{chiribella2013quantum}. To elaborate, the Process Matrix Formalism \cite{chiribella2013quantum} (PMF) is a mathematical framework for defining quantum processes that are not constrained by a causal execution order.
The PMF defines what is a valid process matrix $W$, and while the direct physical meaning of process matrices is currently not well understood\footnote{In particular, the process matrix formalism is not known to be connected to gravitational interaction or general relativity, and is written only in the language of quantum information processing.}, the formalism remains an important tool. In particular, as in the case of \cite{costa2022no}, the formalism can be used to \emph{exclude} some transformations from being considered valid quantum transformations, in the more broad sense that is free of causal restrictions (that is, a transformation which is not a process matrix is in particular not a valid quantum process in standard quantum mechanics). Costa leaves open the question of understanding the computational power of any relaxation of the model of superposition of unitary orders.

\paragraph{Execution of superposition of unitary orders as a function of interference norm and phase alignment.}
We start our work by exploring more carefully the idea of a superposition of unitary orders, where our goal is to define a natural relaxation of it. From hereon we call the state $U_{1} U_{0} \ket{\psi} + U_{0} U_{1} \ket{\psi}$ the \emph{order interference} state, or OI state for short\footnote{As we mentioned before, generating the state $\ket{0} U_1 U_0 \cdot \ket{\psi} + \ket{1} U_0 U_1 \cdot \ket{\psi}$, which can be thought of as a superposition of evolution orders, is easy in quantum mechanics. However, while there is superposition, there is no interference between different evolution orders. Unlike some of the previous work on the subject (e.g. Quantum Switch \cite{chiribella2013quantum}), we view not the superposition, but the interference between different evolution orders as a possibly unique idea to quantum gravity.}. We can think of order superposition generalized to an arbitrary number $m \in \Nat$ of unitaries $\{ U_{i} \}_{i \in [m]}$, and the OI state is defined as
\begin{equation} \label{equation:introduction_OI_state}
    \OI
    \left(
    \{ U_{i} \}_{i \in [m]},
    \ket{\psi}
    \right)
    :=
    \sum_{\sigma \in S_{m}} \left( \prod_{i \in [m]} U_{\sigma^{-1}\left( i \right)} \right) \cdot \ket{\psi} \enspace ,
\end{equation}
where $S_{m}$ is the set of all permutations on the set $[m] := \{ 1, 2, 3, \cdots, m \}$, and the above product notation $\prod_{i \in [m]}$ creates multiplications between matrices enumerated from right to left, which is also the convention for matrix multiplication in the rest of this work, that is: 
$$
\prod_{i \in [m]} U_{\sigma^{-1}\left( i \right)}
:=
U_{\sigma^{-1}\left( m \right)} \cdot U_{\sigma^{-1}\left( m - 1 \right)} \cdots
U_{\sigma^{-1}\left( 2 \right)} \cdot U_{\sigma^{-1}\left( 1 \right)} \enspace .
$$

For $m$ unitaries, observe that the norm of the OI state ranges in $[0, m!]$. We view the norm of the OI state as a fundamental piece of information. As an example, in the case $\ket{\psi} = \ket{0}$, $m = 2$, $U_0 = X$, $U_1 = Z$, the norm is $0$:
$$
X\cdot Z\ket{0} + Z\cdot X\ket{0} = \ket{1} + (-1)\cdot \ket{1} = 0 \enspace .
$$
On the other hand, for every number of unitaries $m \in \Nat$ and state $\ket{\psi}$, if $U_1 = U_2 = \cdots = U_{m}$, one can verify that $m!$ is the norm.

The previous work of \cite{costa2022no} checks whether (according to the PMF) it is possible to always and unconditionally execute a superposition of unitary orders, for every unitaries $\{ U_{i} \}_{i \in [m]}$ and state $\ket{\psi}$. In our relaxed model, we have the following main differences. We take into consideration computational complexity, and specifically, the computational complexity of generating the OI state is determined by two properties of the interference\footnote{These complexity measures are tied to the physical assumptions given at the end of Section \ref{subsection:intro_gravity_superposition_entanglement}.}: (1) The norm of the OI state, and (2) The amount of phase alignment between execution orders (or spacetimes). More precisely, for every classical state $\ket{x}$ (for $x \in \{ 0, 1 \}^{n}$), we consider the phase and magnitude of $\ket{x}$ in the resulting state $\left( \prod_{ i \in [m] } U_{ \sigma^{-1}\left( i \right) } \right) \cdot \ket{\psi}$ of every execution order $\sigma \in S_{m}$. The amount of phase alignment is how similar are the phases of the same $\ket{x}$, between the results of differing execution orders, over all $x \in \{ 0, 1 \}^{n}$.

\paragraph{Connection to standard quantum computation.} 
Our next observation is that when trying to execute OI transformations in standard quantum mechanics, the consideration of the norm is in fact expressed (a limited phase alignment implies reduced norm, thus phase alignment possibly also plays a role).
More formally, we show there is a quantum algorithm $Q$ that given input unitary circuits $\{ U_{i} \}_{i \in [m]}$ (all act on the same number of qubits $n$) and an $n$-qubit state $\ket{\psi}$, executes in time $\poly\left( \sum_{i \in [m]}|U_{i}| \right)$, where $|U_{i}|$ is the circuit size of $U_{i}$, and outputs the normalized OI state $\OI\left( \{ U_{i} \}_{i \in [m]}, \ket{\psi} \right)$ with success probability $\left( \frac{ \norm{ \OI\left( \{ U_{i} \}_{i \in [m]}, \ket{\psi} \right) } }{ m! } \right)^{2}$. We elaborate formally on the above quantum procedure in Section \ref{subsection:OI_comparison_to_standard_quantum_computation}. There is no known way in quantum mechanics (and in particular, in the standard quantum circuit model) to generally and efficiently amplify the success probability of executing an OI transformation.

\paragraph{Computable Interference of Unitary Orders - Definition.}
Rather than either the bounded interference in quantum mechanics or an unconditional interference, in this work we investigate the computational power of quantum computing with \emph{computable interference} of unitary orders. In our model, while the probability to succeed in executing order interference is proportional to the norm of the OI state (as per standard quantum mechanics), we allow an investment of computational work (or energy) to amplify the success probability arbitrarily close to $1$ (up to phase misalignment). The ability to execute computable order interference is formally captured by an object we define, called an Order Interference (OI) oracle $\OOI$. The full definition (Definition \ref{definition:computable_order_interference_oracle}) of an OI oracle is given in Section \ref{subsection:new_definitions_order_interference}, and for the sake of this Section \ref{section:introduction} and Section \ref{section:overview}, it will be sufficient to focus on a relaxed variant of it (given in a Corollary \ref{corollary_definition:oi_pure_states} of the definition).

The relaxed definition of an OI oracle $\OOI$ is as follows. Let $\{ U_{i} \}_{i \in [m]}$ a set of $m \in \Nat$ unitary quantum circuits, all operating on the same number of qubits $n \in \Nat$, let $\ket{\psi}$ an $n$-qubit quantum state and let $\lambda \in \Nat$. An input to $\OOI$ is a triplet
$$
\left(
\{ U_{i} \}_{i \in [m]},
\ket{\psi}, 
\lambda
\right) \enspace ,
$$
and takes time complexity $O \left( \lambda + \sum_{i \in [m]}|U_{i}| \right)$. For $x \in \{ 0, 1 \}^{n}$, $\sigma \in S_{m}$ we write
$$
\prod_{i \in [m]} U_{\sigma^{-1}\left( i \right)} \cdot \ket{\psi}
:=
\sum_{ x \in \{ 0, 1 \}^{n} } \alpha_{x, \sigma} \ket{x}
\enspace .
$$
At the end of execution, with probability
\begin{equation} \label{equation:introduction_ICS_success_probability}
    \left(
    \frac
    { \sum_{ x \in \{ 0, 1 \}^{n} } |\sum_{ \sigma \in S_{m} } \alpha_{x, \sigma}| }
    { \sum_{ x \in \{ 0, 1 \}^{n} } \sum_{ \sigma \in S_{m} } |\alpha_{x, \sigma}| }
    \right)
    \frac
    { \frac{ \norm{\OI\left( \{ U_{i} \}_{i \in [m]}, \ket{ \psi } \right)} }{ m! } }
    { \frac{ \norm{\OI\left( \{ U_{i} \}_{i \in [m]}, \ket{ \psi } \right)} }{ m! } + \frac{1}{\lambda} } 
    \enspace ,
\end{equation}
the process measures a "success" signal for the transformation and obtains the normalization of $\OI\left( \{ U_{i} \}_{i \in [m]}, \ket{\psi}\right)$, and with the remaining probability the process measures a "fail" signal and obtains $\ket{0^n}$.

%A possible way to think about the model, is that while the OI oracle is probabilistic, one can imagine a pure, possibly unitary process acting on a bigger system (like in the case of quantum mechanics) such that measurement of one part of the system collapses the other part, with some probability, to the desired OI state. The difference from quantum mechanics is that the pure process may enable the amplification of interference between unitary orders.

%To summarize, in this work we explore whether quantum gravity may be able to enhance the computational power of quantum mechanics. To this end, we consider the quantum gravitational phenomenon of indefinite causal structure (ICS), and define a computational model that captures a natural analogue of it -- a model we call computable interference of unitary orders. Understanding whether order interference can provide a significant speedup over quantum mechanical computation requires us to look into some of the fundamental problems in computer science, that are believed by experts to be impossible to solve efficiently for standard quantum computers.

\subsection{Results} \label{subsection:intro_results}
We define a new complexity class named $\BQPOI$ (Definition \ref{definition:bqp_oi}), which stands for $\BQP$ (bounded-error quantum polynomial-time) with access to an Order Interference oracle $\OOI$. $\BQPOI$ is defined as the class of problems for which there exists a quantum algorithm with access to an OI oracle, such that the algorithm solves the problem (with bounded error) running in polynomial time in the input size.

We focus on two very well-studied computational tasks in computer science: The Graph Isomorphism problem ($\GI$, Definition \ref{definition:graph_isomorphism_problem}) and the Gap Closest Vector Problem ($\GCVP_{g(n)}$, Definition \ref{definition:gap_closest_vector_problem}). Both $\GI$ and $\GCVP_{ g(n) }$ for $g(n) := O\left( n\sqrt{n} \right)$ are believed by experts to be unsolvable in quantum polynomial time. In particular, $\GCVP_{ O\left( n\sqrt{n} \right) }$ is a key problem in the theory of computation and have an immense practical significance: Solving $\GCVP_{ O(n\sqrt{n}) }$ means solving the Learning With Errors ($\LWE$) problem \cite{regev2009lattices}, the computational hardness of which is the central pillar of post-quantum cryptography. We prove the following theorems. 

\begin{theorem}
    $\GI \in \BQPOI$.
\end{theorem}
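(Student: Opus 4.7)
I would reduce $\GI$ to a quantum state-discrimination task and then exploit the amplification provided by $\OOI$ to make the task tractable. For each $b\in\{0,1\}$ let $V_b$ be the polynomial-size circuit that prepares
\[
\ket{\tilde\Phi_b} \;:=\; \frac{1}{\sqrt{n!}}\sum_{\pi\in S_n}\ket{\pi}_A\ket{\pi(G_b)}_B
\]
from $\ket{0}$. The reduced states $\rho_b := \mathrm{Tr}_A\ketbra{\tilde\Phi_b}{\tilde\Phi_b}$ are the uniform mixtures over the orbits of $G_b$, and satisfy $\rho_0=\rho_1$ iff $G_0\cong G_1$, while $\rho_0\rho_1=0$ otherwise. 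A direct SWAP test on $\rho_0,\rho_1$ does not give a polynomial-time algorithm, because in the isomorphic case $\mathrm{Tr}(\rho_0\rho_1)=|\mathrm{Aut}(G_0)|/n!$ is typically exponentially small.

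The core of the plan is a carefully chosen OI experiment that produces an OI state of norm exactly zero in the non-isomorphic case and strictly positive in the isomorphic case. A natural first candidate works on two copies of the register: set $\ket{\psi} := \ket{\tilde\Phi_0}\otimes\ket{\tilde\Phi_1}$, take $U_0$ to be the Grover reflection about $\ket{\psi}$ (implementable via $V_0,V_1$ and their inverses), and take $U_1 := \mathrm{SWAP}_{B,B'}$ between the two graph subsystems. A short calculation then gives
\[
U_1U_0\ket{\psi}+U_0U_1\ket{\psi} \;=\; 2\alpha\,\ket{\psi},\qquad \alpha \;=\; \langle\psi|\mathrm{SWAP}_{B,B'}|\psi\rangle \;=\; \mathrm{Tr}(\rho_0\rho_1),
\]
which equals $2|\mathrm{Aut}(G_0)|/n!>0$ in the isomorphic case and $0$ otherwise, because the $\ket{\perp}$-components of the two orderings cancel exactly. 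Calling $\OOI$ with $\lambda=\poly(n!)$ then amplifies the norm factor $\frac{\|\OI\|/m!}{\|\OI\|/m!+1/\lambda}$ to be arbitrarily close to $1$ in the isomorphic case while keeping it identically $0$ in the non-isomorphic case; polynomially many repetitions decide $\GI$.

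The hard part is the phase-alignment factor $\frac{\sum_x|\sum_\sigma\alpha_{x,\sigma}|}{\sum_x\sum_\sigma|\alpha_{x,\sigma}|}$ of the OI success probability: in the naive construction above the cancelling $\ket{\perp}$-components carry $\ell_1$-mass of order $\|\ket{\psi}\|_1 \sim n!$, while the surviving component $2\alpha\ket{\psi}$ carries only $\sim|\mathrm{Aut}(G_0)|$, so this factor itself collapses to $\sim|\mathrm{Aut}(G_0)|/n!$ and threatens to wipe out the amplification. Overcoming this is where I expect the bulk of the technical work to lie. My proposed route is to (i) implement all subcircuits with real reversible gates so that both branches have purely real amplitudes and no inadvertent destructive interference appears beyond the intentional $\ket{\perp}$ cancellation, and (ii) precondition $\ket{\psi}$ by an amplitude-equalising step — for instance an LCU-style compression or a controlled symmetrisation across the two copies — that brings the $\ell_1$-masses of the surviving and cancelled components to the same order of magnitude before the OI call. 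I expect the paper's new parameterisation technique for the "invertibility of classical circuits" to be exactly the device that quantifies how many such preconditioning rounds suffice, yielding simultaneous constant lower bounds on both factors of the OI success probability and hence a polynomial-time $\BQPOI$ algorithm for $\GI$.
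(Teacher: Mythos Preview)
Your proposal has a fatal gap that precedes the phase-alignment difficulty you flag. You propose calling $\OOI$ with $\lambda=\poly(n!)$, but the oracle call itself costs $O(\lambda+\sum_i|U_i|)$ time, so this is super-exponential in the input size $|(G_0,G_1)|=\Theta(n^2)$ and the algorithm is not in $\BQPOI$. More concretely: in the isomorphic case your OI state has norm $2\alpha=2|\mathrm{Aut}(G_0)|/n!$, so by Lemma~\ref{lemma:OI_oracle_amplification_success_probability} the amplification parameter needed to push the norm factor close to $1$ is $\lambda\geq m!/\|\OI\|=1/\alpha=n!/|\mathrm{Aut}(G_0)|$, which is $n!$ for a rigid graph. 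This alone kills polynomial time, independently of the non-amplifiable phase-alignment factor (which, as you note, is also of order $|\mathrm{Aut}(G_0)|/n!$). Your ``preconditioning'' sketch is not a fix: equalising the $\ell_1$-masses of the surviving and cancelling components would require knowing $\alpha$ up to a constant, which is precisely the quantity you are trying to estimate.

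The paper's route is structurally different and sidesteps both obstructions. Rather than one OI call with two unitaries and exponentially small norm, it generates the \emph{pure} output-distribution state $\ket{C^b(R)}\propto\sum_{\pi\in S_n}\ket{\pi(G_b)}$ (no permutation register, hence $\langle C^0(R)|C^1(R)\rangle$ equals the fidelity between the orbit distributions, which is $1$ or $0$ rather than $|\mathrm{Aut}(G_0)|/n!$). The generation is done by decomposing the sampling of a random $\pi\in S_n$ via Fisher--Yates into $n-1$ steps, each using $\lceil\log n\rceil$ random bits and each an efficiently invertible classical permutation (a single transposition) of the adjacency matrix. Each step is realised as a \emph{choice interference} of $m\leq n$ unitaries using the counter trick of Lemma~\ref{lemma:choice_interference_oracle_from_OI}: in every execution order only the first unitary acts, so the OI collapses to $\sum_j U_j\ket{\psi}$. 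Because every $U_j$ maps classical strings to classical strings, all amplitudes remain real and non-negative at every step, so the phase-alignment factor is identically $1$, and by Claim~\ref{claim:minimal_norm_of_positive_states} the scaled norm is at least $1/\sqrt{m}\geq 1/\sqrt{n}$, requiring only $\lambda=\poly(n)$ per step. After $O(n)$ such steps one holds $\ket{C^0(R)}$ and $\ket{C^1(R)}$, and a SWAP test (gap $1$ versus $1/2$) decides. The missing idea in your plan is not a preconditioning trick but this sequential decomposition into polynomially many small, perfectly phase-aligned choice-interference steps --- i.e., the reduction $\GI\leq_p\SISD_{\poly,O(\log n)}$ followed by Theorem~\ref{theorem:SISD_in_BQP_OI}.
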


\begin{theorem}
    There exists a positive absolute constant $c \in \bbR_{> 0}$ such that
    $\GCVP_{ c \cdot n\sqrt{n} } \in \BQPOI$.
\end{theorem}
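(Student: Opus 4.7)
The plan is to reduce $\GCVP_{c n\sqrt{n}}$ to the problem of estimating the norm of a specific order interference state, mirroring (and strengthening) the classical reduction of $\GCVP$ to the Statistical Difference problem in $\SZK$. Let $(\Basis, \tVector)$ be the instance, $\Lattice = \Lattice(\Basis)$, and fix a Gaussian width $s$ of order $\sqrt{n} \cdot \lambda_1(\Lattice)$ so that the discrete Gaussian $D_{\Lattice, s}$ is well spread in the sense of the smoothing parameter. First I would fix a classical ``preparation'' circuit $C$ that on input a bit $b \in \{0,1\}$ samples a vector $\vVector \sim D_{\Lattice, s}$ together with a small rounding $\eVector$ and outputs $\vVector + b \cdot \tVector + \eVector$. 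In the YES case $\dist(\tVector, \Lattice) \le 1$ the two output distributions (for $b=0$ and $b=1$) have nearly identical supports and densities; in the NO case $\dist(\tVector, \Lattice) \ge c n \sqrt{n}$ they have essentially disjoint supports.

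Next I would build two unitaries $U_0, U_1$ by quantizing $C$ via the new parameterization-of-invertibility technique developed in the paper, so that $U_b \ket{0}$ is a coherent superposition $\sum_y \beta_{y,b} \ket{y}$ whose amplitudes faithfully track the output distribution of $C$ conditioned on $b$. The specific construction is designed so that the composition $U_{1-b} U_b \ket{0}$ leaves the ``collision'' part of the joint support (strings $y$ with non-trivial preimage probability under both $b=0$ and $b=1$) coherent and phase-aligned, while erasing $b$. Then the OI state $\OI(\{U_0, U_1\}, \ket{0}) = U_1 U_0 \ket{0} + U_0 U_1 \ket{0}$ has norm close to $2$ in the YES case (the two summands constructively overlap on almost all of the support) and norm close to $\sqrt{2}$ in the NO case (the two summands are nearly orthogonal because the per-$b$ supports are disjoint). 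Querying $\OOI$ on $(\{U_0, U_1\}, \ket{0}, \lambda)$ with $\lambda = \poly(n)$ and averaging the ``success'' indicator from \eqnref{equation:introduction_ICS_success_probability} over polynomially many independent runs then distinguishes the two norm regimes and decides the instance.

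\paragraph{Main obstacles.}
The hardest part will be controlling the phase-alignment factor in \eqnref{equation:introduction_ICS_success_probability}, which multiplicatively damps the success probability whenever the two execution orders assign mismatched phases to the same basis string $\ket{x}$. The construction of $U_0, U_1$ has to be tuned so that on every $\ket{x}$ in the joint support the amplitudes $\alpha_{x,\sigma}$ for $\sigma = \mathrm{id}$ and $\sigma = (01)$ share the same phase up to negligible error; this is precisely where the new parameterization-of-invertibility technique is essential, since it lets me fix the phase of each branch from the classical structure of $C$ without paying any quantum interference cost. A secondary obstacle is matching the $n\sqrt{n}$ gap: the naive $\SZK$-style argument only gives a $\sqrt{n}$-type gap, so I would need to combine the standard discrete-Gaussian smoothing argument (which lifts the gap from $O(\sqrt{n})$ to $O(n\sqrt{n})$ in the noise-flooding regime) together with the $\lambda$-amplification built into $\OOI$ in order to push the separation into the polynomial-distinguishability range. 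Once both of these are in place, the statistical test on $\OOI$'s output suffices to place $\GCVP_{c n\sqrt{n}}$ in $\BQPOI$.
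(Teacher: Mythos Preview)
Your proposal has a genuine gap at its core. You assume you can build, for each $b\in\{0,1\}$, a single efficient unitary $U_b$ with $U_b\ket{0}=\sum_y \beta_{y,b}\ket{y}$ encoding the output distribution of the Gaussian-over-lattice sampler. But producing exactly this state is the quantum state generation problem for a general (non-invertible) classical circuit, and the paper's entire point is that we do \emph{not} know how to do this, even with an OI oracle. The ``parameterization-of-invertibility'' technique does not hand you such a $U_b$; it decomposes the sampler into a length-$\ell$ sequence of per-step invertible circuits with $O(\log n)$ randomness each, and the state is then built by $\ell$ \emph{sequential} OI calls (realized as choice interference via the counter-register trick of Lemma~\ref{lemma:choice_interference_oracle_from_OI}), not by a single call on two unitaries. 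If a monolithic $U_b$ as you describe existed in the circuit model, the OI oracle would be irrelevant: you would simply prepare $U_0\ket{0}$ and $U_1\ket{0}$ and run a swap test.

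Relatedly, your analysis of the OI state $U_1U_0\ket{0}+U_0U_1\ket{0}$ is underspecified: you never say what $U_b$ does on inputs other than $\ket{0}$, so the claim that the two orders ``leave the collision part coherent and phase-aligned'' has no content. In the paper, phase alignment is guaranteed for free because every unitary in every OI call implements a classical permutation (Fact~\ref{fact:unitaries_for_classical_bijections}), so all amplitudes stay real and nonnegative throughout (Claim~\ref{claim:minimal_norm_of_positive_states}); this is a structural property of the sequential construction, not something one can impose after the fact on a black-box $U_b$. Finally, your discussion of the gap is inverted: the Goldreich--Goldwasser reduction already achieves gap $\sqrt{n/\log n}$, and the paper \emph{loses} ground to $O(n\sqrt{n})$ precisely because the uniform $n$-ball is not sequentially invertible and must be replaced by a (sparser) discrete Gaussian approximated by a sum of uniforms; there is no ``lifting'' of the gap, only a concession.
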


We find the precise computational power of $\BQPOI$ as an interesting open question. While computable OI shows algorithmic results seemingly out of the reach of standard quantum computation, it is not obviously overpowered, in the sense that we did not find a trivial way to solve computational problems that cause major implications to the theory of computational complexity, e.g., problems complete for the classes $\NP$ or $\SZK$.

\vspace{1.5mm}
\paragraph{Scientific Contribution.}
As a high-level summary, this work shows a new connection between information processing models in quantum gravity, and some of the fundamental computational problems in computer science. More specifically, we make the following contributions to computational complexity and quantum information theory (both are explained in length in Section \ref{section:overview}).
\begin{itemize}
    \item 
    \textbf{In computational complexity,} we show a novel parameterization technique for classical circuits $C : \{ 0, 1 \}^{k'} \rightarrow \{ 0, 1 \}^{k}$, allowing to view invertibility of circuits as a spectrum rather than a binary predicate. We define a new natural and relaxed variant of the known Statistical Difference \cite{sahai2003complete} problem ($\SD$, Definition \ref{definition:statistical_difference_problem}), called the Sequentially-Invertible Statistical Difference problem ($\SISD$, Definition \ref{definition:sisd_problem}). Previous work \cite{goldreich1991proofs}, \cite{goldreich1998limits} shows how to reduce both $\GI$ and $\GCVP$ (respectively) to $\SD$, and as a technical contribution we show new reductions to the relaxed problem $\SISD$.
    These results are in classical computational complexity theory, independent of any non-classical physics or computation. 

    \item 
    \textbf{In quantum information theory,} we define a new computational model and complexity measure, capturing a simple analogue of quantum computation with unitary order superposition. We observe a connection between the model and the problem of Statistical Difference $\SD$. As a further technical step, we show a new algorithmic technique in the computational model, which in turn implies that for some parameters, the $\SISD$ problem can be solved by a quantum computer with unitary order superposition (in formal terms, that for some parameters, $\SISD \in \BQPOI$).
\end{itemize}
Under both disciplines, several open questions stem from our results.

\vspace{1.5mm}
The remaining of the paper is as follows. In Section \ref{section:overview} we give an overview of the techniques we develop in this work, their connection to new computational problems and also the problems $\GI$ and $\GCVP$. The Preliminaries are given in Section \ref{section:preliminaries}. In Section \ref{section:new_notions_and_definitions} we provide all of the new definitions from this work, including the definition of the OI oracle, the complexity class $\BQPOI$ and the computational problem $\SISD$. Section \ref{section:sisd_algorithm} contains our main quantum algorithm, and in Sections \ref{section:gi_reduction}, \ref{section:gcvp_reduction} we provide new reductions for $\GI$ and $\GCVP$, which, together with the results in Section \ref{section:sisd_algorithm}, show the containment of the problems in $\BQPOI$.

\section{Technical Overview} \label{section:overview}
In this section we give an overview of the main technical ideas in this work. In Section \ref{subsection:overview_computable_OI_and_state_generation} we form an understanding of computational abilities that computable OI has and a standard quantum computer does not seem to posses. We also connect these understandings to the Statistical Difference problem. In Section \ref{subsection:overview_SISD_problem} we show a new technique to parameterize invertibility of classical circuits and define a new computational problem, called the Sequentially Invertible Statistical Difference Problem $\SISD$. In Section \ref{subsection:overview_how_to_solve_sisd} we show a polynomial-time quantum algorithm using an OI oracle, that solves $\SISD$. In Section \ref{subsection:overview_reductions} we show classical polynomial-time reductions from the Graph Isomorphism Problem $\GI$ and the $O(n\sqrt{n})$-Gap Closest Vector Problem $\GCVP_{ O\left( n\sqrt{n} \right) }$ to the $\SISD$ problem.

\subsection{Computable Order Interference and Quantum State Generation} \label{subsection:overview_computable_OI_and_state_generation}
Before we are trying to use computable OI to solve new problems, it is important to intuitively understand what it enables that a standard quantum computer does not.
In the introduction we saw that a standard quantum computer successfully executes an OI transformation with probability $\left( \frac{ \norm{\OI\left( \{ U_{i} \}_{i \in [m]}, \ket{\psi}\right)} }{ m! } \right)^{2}$, that is, in the order of the scaled norm of the OI state. In the setting of computable OI (modeled by an OI oracle $\OOI$), this probability can be amplified as described in \ref{equation:introduction_ICS_success_probability}, depending on the amount of phase alignment (i.e., how much the non-amplifiable, first component in \ref{equation:introduction_ICS_success_probability}, is close to $1$). One can verify that to make the success probability \ref{equation:introduction_ICS_success_probability} at least a constant using computable OI, the complexity of amplification (captured by the parameter $\lambda$) has to be at least the inverse of the scaled norm $\Theta \left(\frac{ m! }{ \norm{\OI\left( \{ U_{i} \}_{i \in [m]}, \ket{\psi}\right)} }\right)$, depends on the wanted constant\footnote{More generally, assuming perfect phase alignment, to get a success probability of $1 - \frac{1}{k}$, executing the OI oracle with $\lambda = k \cdot \frac{ m! }{ \norm{\OI\left( \{ U_{i} \}_{i \in [m]}, \ket{\psi}\right)} }$ suffices.}. However, if we are already allowed to execute in the complexity scales of $\frac{ m! }{ \norm{\OI\left( \{ U_{i} \}_{i \in [m]}, \ket{\psi}\right)} }$, a standard quantum computer can successfully generate the OI state with high probability: Just by repeatedly trying to execute the OI transformation, where as we know, each attempt succeeds with the bounded probability $\left( \frac{ \norm{\OI\left( \{ U_{i} \}_{i \in [m]}, \ket{\psi}\right)} }{ m! } \right)^{2}$. When an OI transformation fails in either of the cases (computable OI or standard quantum computing) the state is considered as destroyed. So, in both cases, the computational complexity of successfully executing a single OI transformation is $\approx \frac{ m! }{ \norm{\OI\left( \{ U_{i} \}_{i \in [m]}, \ket{\psi}\right)} }$.

What is the complexity of executing $2$ consecutive OI transformations? For two different sets of unitaries $\{ U_{0, i} \}_{i \in [m_{0}]}$, $\{ U_{1, i} \}_{i \in [m_{1}]}$. Assuming both transformations are successful, this will create a state of different structure than what can be created by a single OI transformation. The gap in complexity between standard quantum computation and computable OI, becomes more apparent when we think of running $\ell$ consecutive OI transformations, for a general $\ell$ (assuming perfect phase alignment in both cases). Computable OI lets us maintain a probability close to $1$ in that case, using only complexity $\approx \ell^3 \cdot \frac{ m! }{ \norm{\OI\left( \{ U_{i} \}_{i \in [m]}, \ket{\psi}\right)} }$ over all $\ell$ attempted transformations, while a standard quantum computer needs all $\ell$ tries to succeed consecutively, which generally has a complexity increasing exponentially with $\ell$. One deduction is thus, that the power of computable OI is by executing OI transformations \emph{sequentially}. Assume we are starting from the state $\ket{0^{n}}$, and we sequentially execute $\ell$ OI transformations, for $\ell$ different sets of unitaries $\{ U_{1, i} \}_{i \in [m_{1}]}$, $\cdots, \{ U_{\ell, i} \}_{i \in [m_{\ell}]}$. The resulting quantum state is
\begin{equation} \label{equation:sequential_oi_transformation_state}
    \sum_{
    \sigma_{1} \in S_{m_{1}},
    \cdots,
    \sigma_{\ell} \in S_{m_{\ell}}
    }
    \left(
    \prod_{ i_{\ell} \in [m_{\ell}] }
    U_{ \ell, \sigma_{\ell}^{-1}\left( i_{\ell} \right) }
    \right)
    \cdots 
    \left(
    \prod_{ i_{1} \in [m_{1}] }
    U_{ 1, \sigma_{1}^{-1}\left( i_{1} \right) }
    \right)
    \ket{\psi}
    \enspace ,
\end{equation}
and has a unique structure, corresponding to a uniform superposition over all partial execution orders, where the partial order rule is that for $i, j \in [\ell]$, $i < j$, the execution of unitaries in the $i$-th set $\{ U_{i, k} \}_{k \in [m_{i}]}$ always comes before the execution of unitaries in the $j$-th set $\{ U_{j, k} \}_{k \in [m_{j}]}$.

\paragraph{The statistical difference problem and quantum state generation.}
When we think of the ability to generate new quantum states, there comes to mind one of the most fundamental hard problems in computer science -- the Statistical Difference ($\SD$) problem \cite{sahai2003complete}. As an intuitive description, the input to the $\SD$ problem is a pair of classically samplable distributions $D_{0}$, $D_{1}$, and the job of a solving algorithm is to distinguish whether the total variation distance between the distributions is small or large. Formally, for functions $a, b : \Nat \rightarrow [0, 1]$ such that $\forall n \in \Nat: a(n) \leq b(n)$, an input to the $\SD_{a(n), b(n)}$ problem is a pair $\left( C_0, C_1 \right)$ of classical circuits with the same output size, that is, $C_{b} : \{ 0, 1 \}^{ k_{b} } \rightarrow \{ 0, 1 \}^{ k }$ for some $k_{0}, k_{1}, k \in \Nat$. For each of the two circuits, we consider the output distribution $D_{b}$ which is generated by executing $C_{b}$ on a uniformly random input $z \in \{ 0, 1 \}^{k_{b}}$. The problem comes with the promise that it is either the case that the total variation distance between $D_{0}$, $D_{1}$, is at most $a(n)$, in which case we denote $\left( C_{0}, C_{1} \right) \in \YES$, or the distance is more than $b(n)$, in which case we denote $\left( C_{0}, C_{1} \right) \in \NO$. Generally speaking, the larger the gap $b(n) - a(n)$, the easier the problem. In this work, our focus is on the family of problems $\SD_{a(n), b(n)}$, for functions $a(n), b(n)$ such that there exists a polynomial $\poly(n)$ with $b(n)^{2} - 2a(n) + a(n)^{2} \geq \frac{1}{\poly(n)}$. From hereon we drop the subscript $a(n)$, $b(n)$ from $\SD_{a(n), b(n)}$ and write $\SD_{\poly}$ to implicitly refer to the family of problems $\SD_{a(n), b(n)}$ for such functions $a(n)$, $b(n)$. The $\SD_{\poly}$ problem is a notoriously hard problem in computer science (also for quantum computers), and it is at least as hard as some of the widely studied computational problems in cryptography and in combinatorics, for example: Quadratic Residuosity, Lattice Isomorphism, Discrete Logarithm, Decisional Diffie-Hellman, Graph Isomorphism, the Gap Closest Vector Problem and the Gap Shortest Vector Problem.

The thing that connects $\SD_{\poly}$ and the ability to generate non-trivial quantum states is a quantum procedure called the Swap Test \cite{barenco1997stabilization, buhrman2001quantum}. It is observed in \cite{aharonov2003adiabatic} that the $\SD_{\poly}$ problem could be efficiently solved if we had the ability to efficiently generate the output distribution states of classical circuits. If given input $\left( C_0, C_1 \right)$ we could efficiently generate the normalized $k$-qubit quantum states,
$$
\ket{ C_{0}\left( R \right) } :=
\sum_{ x \in \{ 0, 1 \}^{ k_{0} } } \ket{C_{0}(x)} \enspace ,
$$
$$
\ket{ C_{1}\left( R \right) } :=
\sum_{ x \in \{ 0, 1 \}^{ k_{1} } } \ket{C_{1}(x)} \enspace ,
$$
then we can execute a Swap Test on the pair of states $\ket{ C_{0}\left( R \right) }$, $\ket{ C_{1}\left( R \right) }$. A Swap Test approximates (with some inverse polynomial error) the inner product between the states as vectors, which in turn relates to the total variation distance between the output distributions $D_{0}$, $D_{1}$ (by using the relations between fidelity and total variation distance). Finally, as long as the gap $b(n)^{2} - 2a(n) + a(n)^{2}$ is at least inverse polynomial, the error in the approximation of the Swap Test can be overcome and we can decide whether the pair of circuits is in $\YES$ or $\NO$. So, the $\SD_{\poly}$ problem can be reduced to the quantum state generation of $\ket{ C_{0}\left( R \right) }$, $\ket{ C_{1}\left( R \right) }$. With accordance to the hardness of the $\SD_{\poly}$ problem, there is no known way for a quantum computer to generate in polynomial time the output distribution state for a general classical circuit $C$.

\subsection{The Sequentially Invertible Statistical Difference Problem} \label{subsection:overview_SISD_problem}
It is a standard fact in quantum computing that for a classical circuit $C : \{ 0, 1 \}^{k'} \rightarrow \{ 0, 1 \}^{k}$ there is an $\left( k' + k \right)$-qubit unitary $U_{x, C}$, of roughly the same computational complexity as $C$, that maps: $\forall x \in \{ 0, 1 \}^{k'}, y \in \{ 0, 1 \}^{k} : U_{x, C} \cdot \ket{x, y} = \ket{x, y \oplus C(x)}$. It is in particular easy to quantumly generate the $\left( k' + k \right)$-qubit state
$$
\ket{ R, C\left( R \right) } :=
\frac{1}{\sqrt{2^{ k' }}} \sum_{ x \in \{ 0, 1 \}^{k'} } \ket{x, C(x)} \enspace .
$$
The above state is not a valid solution to the quantum state generation problem (and cannot be used to solve $\SD$ with Swap Test, as described in Section \ref{subsection:overview_computable_OI_and_state_generation}). The state we want is the output distribution state $\ket{ C\left( R \right) }$, where only the output should appear in the superposition, excluding the input. Another known fact in quantum computing is that if a circuit $C : \{ 0, 1 \}^{k} \rightarrow \{ 0, 1 \}^{k}$ is both efficiently computable \emph{and} efficiently invertible (i.e., there is a classically efficiently computable $C^{-1} : \{ 0, 1 \}^{k} \rightarrow \{ 0, 1 \}^{k}$), then there is an efficient unitary $U_{C}$ on $k$ (rather than $k' + k$) qubits, mapping $\forall x \in \{ 0, 1 \}^{k} : U_{C} \cdot \ket{x} = \ket{C(x)}$ (possibly using additional ancillary qubits). In particular, for efficiently computable and invertible circuits, the quantum state generation is quantumly efficiently solvable. However, for all "interesting" cases of the $\SD_{\poly}$ problem (in particular, all instances captured by important cryptographic hard problems), the circuits $C_{0}$, $C_{1}$ are conjectured to not be efficiently invertible, and are sometimes known to be not invertible at all (i.e., they are not injective). In first glance, it isn't clear how computable OI can help in generating the output distribution states for non-invertible circuits, in a way that a standard quantum computer cannot.

We get a dichotomous situation, where if a circuit $C : \{ 0, 1 \}^{k'} \rightarrow \{ 0, 1 \}^{k}$ is invertible then we can generate the output distribution state $\ket{ C\left( R \right) }$ using a standard quantum computer, and if it is not invertible, we don't know a way to generate the state, even with computable OI. Could we look at invertibility in a more high-resolution way? Could we think about invertible classical computation as a numerical parameter, rather than binary?

\paragraph{Sequentially invertible distributions.}
Our next idea is to parameterize invertibility. Our parameterization technique does not rely on any non-classical computation and may be of independent interest. Note that sampler circuits are equivalent to randomized circuits, which take the form $C : \{ 0, 1 \}^{k} \times \{ 0, 1 \}^{k'} \rightarrow \{ 0, 1 \}^{k}$, where the randomness $z \in \{ 0, 1 \}^{k'}$ is only considered as auxiliary input, and together with it there's a main input $x \in \{ 0, 1 \}^{k}$. We can thus consider sampler and randomized circuits interchangeably.

Our definition of parameterized invertibility of a circuit is as follows. Let $C : \{ 0, 1 \}^{k'} \rightarrow \{ 0, 1 \}^{k}$ a circuit. For $r \in \{ 0, 1, \cdots, k' \}$, we say that $C$ is $r$-sequentially invertible if there is an $r$-invertible circuit sequence that samples from the output distribution of $C$\footnote{The full definition of $\left( r, t, \ell, \varepsilon \right)$-sequential invertibility, which takes into consideration more parameters and not only $r$, is given in Section \ref{subsection:sequential_invertibility_definitions}. For the needs for this overview, we give the minimal definitions.}. An $r$-invertible circuit sequence on $k$ bits is a sequence of pairs of circuits $\left( C_{i, \rightarrow}, C_{i, \leftarrow} \right)_{i \in [\ell]}$ such that for every $i \in [\ell]$,
    \begin{itemize}
        \item 
        For every $Y \in \{ \rightarrow, \leftarrow \}$, the circuit $C_{i, Y}$ uses $r_{i} \leq r$ random bits and maps from $k$ to $k$ bits, that is,
        $$
        C_{i, Y} : \{ 0, 1 \}^{ k } \times \{ 0, 1 \}^{ r_{i} } \rightarrow \{ 0, 1 \}^{ k }
        \enspace .
        $$
    
        \item 
        The circuit directions are inverses of each other, per hard-wired randomness, that is:
        $$
        \forall z \in \{ 0, 1 \}^{ r_{i} }, x \in \{ 0, 1 \}^{ k } :
        x = C_{i, \leftarrow}\left( C_{i, \rightarrow}\left( x; z \right); z \right) \enspace .
        $$
        Note that this also implies that for each hard-wired randomness, the circuits act as permutations on the set $\{ 0, 1 \}^{k}$.
    \end{itemize}
We say that the sequence samples from the output distribution of $C$ if we have the following equivalence of distributions,
$$
    \Bigl\{
    C(z) \; | \; z \gets \{ 0, 1 \}^{k'}
    \Bigr\}
    \equiv 
    $$
    $$
    \biggl\{ C_{\ell, \rightarrow }\left( \cdots C_{2, \rightarrow }\left( C_{1, \rightarrow }\left( 0^{ k } ; z_1 \right); z_2 \right) \cdots ; z_{\ell} \right)
    \; \Big| \;
    z_{1} \gets \{ 0, 1 \}^{ r_{1} }, \cdots, z_{\ell} \gets \{ 0, 1 \}^{ r_{\ell} } \biggr\}
    \enspace .
$$
The intuition is that the smaller the parameter $r$ is, "the more the circuit is invertible". We give a few examples. For any invertible circuit $C : \{ 0, 1 \}^{k} \rightarrow \{ 0, 1 \}^{k}$, $C$ is $0$-sequentially invertible (for $\ell = 1$, that is, by a single pair $\left( C_{1, \rightarrow}, C_{1, \leftarrow} \right)$). For any (arbitrary) function $C : \{ 0, 1 \}^{k'} \rightarrow \{ 0, 1 \}^{k}$, $C$ is $k'$-sequentially invertible (again for $\ell = 1$). The function $C_{\oplus} : \{ 0, 1 \}^{2\cdot k} \rightarrow \{ 0, 1 \}^{k}$, that for two strings $a, b \in \{ 0, 1 \}^{k}$ outputs their bit-wise logical XOR: $C_{\oplus}(a, b) := a \oplus b$, is $1$-sequentially invertible (by a sequence of $\ell = k$ pairs), and $C_{\land} : \{ 0, 1 \}^{2\cdot k} \rightarrow \{ 0, 1 \}^{k}$, that for two strings $a, b \in \{ 0, 1 \}^{k}$ outputs their bit-wise logical AND: $C_{\land}(a, b) := a \land b$, is $2$-sequentially invertible (again, by a sequence of $k$ pairs). We leave it as an instructive exercise for the reader to verify the above claims.

\paragraph{Definition of the $\SISD$ problem.}
We combine the ideas of sequential invertibility with the $\SD$ problem, and define the $\left( a(n), b(n), r(n) \right)$-Sequentially Invertible Statistical Difference Problem $\left( \SISD_{a(n), b(n), r(n)} \right)$ for functions $a(n), b(n) : \Nat \rightarrow [0, 1]$, $r(n) : \Nat \rightarrow \Nat$, such that for every $n \in \Nat$, $a(n) \leq b(n)$ and $r(n) \in \{ 0, 1, \cdots , n \}$. An input is a pair of sequences,
$$
\left( \left( C^{0}_{i, \rightarrow}, C^{0}_{i, \leftarrow} \right)_{i \in [\ell]},
\left( C^{1}_{i, \rightarrow}, C^{1}_{i, \leftarrow} \right)_{i \in [\ell]} \right) \enspace ,
$$
where for each $b \in \{ 0, 1 \}$, $\left( C^{b}_{i, \rightarrow}, C^{b}_{i, \leftarrow} \right)_{i \in [\ell]}$ is an $r(n)$-invertible circuit sequence on $k$ bits, where $n$ is the input size,
$$
n :=
\bigg| \left(
\left( C^{0}_{i, \rightarrow}, C^{0}_{i, \leftarrow} \right)_{i \in [\ell]},
\left( C^{1}_{i, \rightarrow}, C^{1}_{i, \leftarrow} \right)_{i \in [\ell]}
\right) \bigg|
\enspace .
$$

Similarly to the case of the regular Statistical Difference problem $\SD_{a(n), b(n)}$, we have the following promise on the gap of the total variation distance between the circuits. Formally, for $b \in \{ 0, 1 \}$ denote
$$
    D_{b}
    :=
    \biggl\{
    C^{b}_{\ell, \rightarrow}\left(
    \cdots
    C^{b}_{2, \rightarrow}\left(
    C^{b}_{1, \rightarrow}\left( 0^{k(n)} ; z_1 \right); z_2 \right)
    \cdots
    ; z_{\ell} \right)
    \biggr\}
    \enspace ,
$$
where the randomness sequence is sampled in both cases uniformly $z_{1} \gets \{ 0, 1 \}^{r^{b}_{1}(n)}, \cdots, z_{\ell} \gets \{ 0, 1 \}^{r^{b}_{\ell}(n)}$ for the suitable $b \in \{ 0, 1 \}$. We denote the total variation distance between the distributions $D_0$, $D_1$ with $d$, and it is promised that either $d \leq a(n)$, in which case we say that the input belongs to $\YES$, or we have $d > b(n)$, in which case we determine that the input belongs to $\NO$. The goal of the problem is to decide whether the input belongs to $\YES$ or to $\NO$. As our previous convention for the $\SD$ problem, we also drop the subscript $a(n), b(n)$ to implicitly be any pair of functions with difference $b(n)^{2} - 2a(n) + a(n)^{2}$ lower-bounded inverse polynomially, and write $\SISD_{\poly, \, r(n)}$.

\subsection{How to Solve $\SISD_{ \poly, \, O\left( \log(n) \right) }$ with Computable Order Interference} \label{subsection:overview_how_to_solve_sisd}
The $\SISD_{ \poly, \, r(n) }$ problem is a private case of the $\SD_{ \poly }$ problem (for any randomness $r(n) \in [0, n]$), so a statement that holds for general circuits in $\SD_{ \poly }$ also holds for $\SISD_{\poly, \, r(n)}$. In order to solve $\SISD_{ \poly, \, O( \log(n) ) }$ we will show how to efficiently solve the quantum state generation problem for sequentially invertible circuits with randomness $O\left( \log(n) \right)$. More precisely, given an input $r(n)$-invertible circuit sequence $C = \left( C_{i, \rightarrow}, C_{i, \leftarrow} \right)_{i \in [\ell]}$, such that $r(n) = O(\log(n))$, we show how to generate the output distribution state
\begin{equation}
\ket{ C(R) }
:=
\sum_{
\substack{
z_{1} \in \{ 0, 1 \}^{ r_{1} }, \\ \vdots \\ z_{\ell} \in \{ 0, 1 \}^{ r_{\ell} }
}
}
\ket{
    C_{\ell, \rightarrow}\left(
    \cdots
    C_{2, \rightarrow}\left(
    C_{1, \rightarrow}\left( 0^{k} ; z_1 \right);
    z_2 \right)
    \cdots ;
    z_{\ell} \right)
}
\enspace ,
\end{equation}
by a quantum computer with access to a computable order interference oracle $\OOI$, in polynomial time.

We can think of the superposition in the state $\ket{ C(R) }$ as a tree -- in level $i \in \{ 0, 1, \cdots, \ell \}$ we have a superposition over the randomness of all $i$ circuits $C_{1, \rightarrow}, \cdots, C_{i, \rightarrow}$, and the state $\ket{ C(R) }$ is simply the full tree of depth $\ell$. We can thus attempt to start from level $0$, i.e., the state $\ket{0^{k}}$, and progress one level at a time.
Assume we have a superposition over level $i \in \{ 0, 1, \cdots, \ell - 1 \}$, that is, we have the state,
\begin{equation}
\ket{ C_{1, \cdots, i}(R_{1, \cdots, i}) }
:=
\sum_{
\substack{
z_{1} \in \{ 0, 1 \}^{ r_{1} }, \\ \vdots \\ z_{i} \in \{ 0, 1 \}^{ r_{i} }
}
}
\ket{
    C_{i, \rightarrow}\left(
    \cdots
    C_{1, \rightarrow}\left( 0^{k} ; z_1 \right)
    \cdots ;
    z_{i} \right)
}
\enspace ,
\end{equation}
and we want to obtain the state of the next level of the distribution (with high probability),
\begin{equation}
\ket{ C_{1, \cdots, i + 1}(R_{1, \cdots, i + 1}) }
:=
\sum_{
\substack{
z_{1} \in \{ 0, 1 \}^{ r_{1} }, \\ \vdots \\ z_{i + 1} \in \{ 0, 1 \}^{ r_{i + 1} }
}
}
\ket{
    C_{i + 1, \rightarrow}\left(
    \cdots
    C_{1, \rightarrow}\left( 0^{k} ; z_1 \right)
    \cdots ;
    z_{i + 1} \right)
}
\enspace .
\end{equation}
For each randomness $z_{i + 1} \in \{ 0, 1 \}^{r_{i + 1}}$, the circuit $C_{i + 1, \rightarrow}\left( 
\cdot; z_{i + 1} \right)$ is efficiently computable and invertible, which, as we know, means we have the efficient $k$-qubit unitary $U_{i + 1, z_{i + 1}}$, mapping for every $x \in \{ 0, 1 \}^{k}$, $U_{i + 1, z_{i + 1}}\cdot \ket{x} = \ket{C_{i + 1, \rightarrow}(x ; z_{i + 1})}$. Given this structure, observe that the transformation we want to apply to the state $\ket{ C_{1, \cdots, i}(R_{1, \cdots, i}) }$ is,
$$
\sum_{ z_{i + 1} \in \{ 0, 1 \}^{r_{i + 1}} } U_{i + 1, z_{i + 1}} \enspace ,
$$
which will exactly turn the state into $\ket{ C_{1, \cdots, i + 1}(R_{1, \cdots, i + 1}) }$. An additional way to look at this is that we want to apply a superposition over unitary \emph{choices}, rather than orders, over the set $\{ U_{j} \}_{j \in [m]}$, where we changed notation by $m = 2^{\left(r_{i + 1}\right)}$, $\forall j \in [m] : U_{j} := U_{i + 1, j}$. The amount of randomness $r_{i + 1} \leq r(n)$ is logarithmic in $n$, thus the number of unitaries in the set $\{ U_{ j } \}_{ j \in [m] }$ is polynomial in $n$ (and all of these unitaries are of polynomial time complexity).

Our problem is reduced to a clean setting at this point. Given input a set of $m$ unitaries $\{ U_{i} \}_{i \in [m]}$ and a state $\ket{\psi}$, we want to execute an interference of choices rather than orders, that is, to obtain the normalization of
$$
\sum_{i \in [m]} U_{i} \cdot \ket{\psi} 
\enspace .
$$
For the solution to suffice, we need the success probability of choice interference to behave as the success probability of order interference, in both components of the probability: (1) the scaled norm, and (2) the phase alignment. The scaled norm should be the state's norm divided by the number of choices $m$, and the phases should align between choices, rather than all orders. Formally, we want to construct an efficient quantum algorithm that has access to the order interference oracle, and can execute an interference of choices with success probability of:
\begin{equation} \label{equation:overview_CI_success probability}
    \left(
    \frac
    { \sum_{ x \in \{ 0, 1 \}^{n} } | \sum_{ i \in [m] } \alpha_{x, i} | }
    { \sum_{ x \in \{ 0, 1 \}^{n} }  \sum_{ i \in [m] } | \alpha_{x, i} | }
    \right)
    \frac
    { \frac{ \norm{ \sum_{i \in [m]} U_{i} \cdot \ket{\psi} } }{ m } }
    { \frac{ \norm{ \sum_{i \in [m]} U_{i} \cdot \ket{\psi} } }{ m } + \frac{1}{\lambda} } 
    \enspace ,
\end{equation}
where for $i \in [m]$ we write,
$$
U_{i} \cdot \ket{\psi} := \sum_{ x \in \{ 0, 1 \}^{n} } \alpha_{x, i} \cdot \ket{x} \enspace ,
$$
Lastly, for the specific type of unitaries that we use for the algorithm for state generation (i.e., unitaries that implement the action of invertible classical circuits), all amplitudes are real and non-negative in each step, and thus in particular phases perfectly align. One can observe that in order to finish our algorithm for state generation (which in turn will finish the proof for $\SISD_{\poly, O\left( \log(n) \right)} \in \BQP^{\OI}$), it remains to only explain how to efficiently execute the above transformation with the target success probability.

\paragraph{Order interference plus small quantum memory implies choice interference.}
The observation at the center of our solution is that we can make unitary paths converge, while paying with limiting the variety of the superposition. For example, by taking all unitaries to be identical $U_{1} = \cdots = U_{m}$ the norm of the OI state is $m!$ and the phases of all unitary evolution orders align, which makes the success probability $1$. However, this is equivalent to simply executing $U_1^{m}$, which can be done by a standard quantum computer. Our solution to this is as follows.

Let $n \in \Nat$ the number of qubits in the state $\ket{\psi}$ (and the number of qubits that each of the unitaries $\{ U_{i} \}_{i \in [m]}$ act on), and consider adding an ancilla of $\ceil{\log{m}}$ qubits, so the state we are operating on is $\ket{\psi} \ket{0^{\ceil{\log{m}}}}$. For each unitary $U_{i}$ we make a slight change, and execute the modified unitary $\Tilde{U}_{i}$ which consists of two steps. (1) Execute the circuit $U^{c}_{i}$, which executes $U_{i}$ on the left $n$ qubits, conditioned on the state of the $\ceil{\log{m}}$ right qubits being $\ket{ 0^{\ceil{\log{m}}} }$. (2) Execute the unitary $U_{+1}$ on the right $\ceil{ \log{m} }$ qubits, which treats the qubits as a binary representation of a number in $\{ 0, 1, \cdots, m - 1, \cdots, 2^{ \ceil{\log{m}} } - 1 \}$, and increments by $1$ modulo $2^{ \ceil{\log{m}} }$. For the Unitary $U_{+1}$ we use the following fact, which we will also use later:
\begin{fact} \label{fact:adding_unitarily_a_mod_N}
    For any pair of numbers $a, N \in \Nat$ such that $a \leq N$, the function $f_{a, N} : \{ 0, 1 \}^{\lceil \log(N) \rceil} \rightarrow \{ 0, 1 \}^{\lceil \log(N) \rceil}$ that adds to the input $a$ modulo $N$ is efficiently computable and invertible. It follows there is a unitary circuit $U_{a, N}$ on $\lceil \log(N) \rceil$ qubits, executing in time $\poly\left( \log\left( a \cdot N \right) \right)$ that maps
    $$
    \forall b\in \{ 0, 1, \cdots, N - 2, N - 1 \} :
    U_{a, N} \cdot \ket{b} = \ket{b + a \text{ (mod) }N}
    \enspace .
    $$
\end{fact}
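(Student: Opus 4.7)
The plan is to first exhibit an efficient classical bijection circuit realizing $f_{a,N}$ together with an efficient circuit for its inverse, and then invoke the standard lift from an efficient classical bijection with an efficient inverse to a quantum unitary.

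For the forward circuit I would compose three standard sub-circuits: (i) a ripple-carry adder that, on input $b$ viewed as a $\lceil \log N \rceil$-bit integer, produces $s := b + a$ as a $(\lceil \log N \rceil + 1)$-bit integer in size $O(\log(aN))$; (ii) a comparator that computes the indicator bit $[s \geq N]$; and (iii) a conditional subtractor that outputs $s$ when the indicator is $0$ and $s - N$ otherwise. The total size is $\poly(\log(aN))$. The inverse on $\{0, \ldots, N-1\}$ is $c \mapsto c + (N - a \bmod N) \bmod N$, i.e.\ the same template with the constant $a$ replaced by $N - a \bmod N$, so it has identical efficiency. To obtain a permutation on the full set $\{0,1\}^{\lceil \log N \rceil}$, I would extend $f_{a,N}$ to act as the identity on the extra basis states $\{N, \ldots, 2^{\lceil \log N \rceil} - 1\}$, which is realized by gating the adder/subtractor behind a classical check $b < N$ at a cost of $O(\log N)$ additional gates; the extended function and its inverse remain mutually inverse efficient circuits.

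The second step is the lift to a unitary. Given that both the extended $f_{a,N}$ and its inverse have efficient classical circuits, I invoke the standard Bennett compute-and-uncompute recipe: compute $f_{a,N}(b)$ into an ancilla register via the reversible XOR-form of the forward circuit, obtaining $\ket{b}\ket{f_{a,N}(b)}$; then run the reversible XOR-form of $f_{a,N}^{-1}$ using the ancilla as the input register and the original register as the XOR-target, which erases $b$ since $b \,\text{XOR}\, f_{a,N}^{-1}(f_{a,N}(b)) = 0$. After swapping (or equivalently relabeling) the registers and discarding the now-$\ket{0}$ scratch space, we obtain a $\lceil \log N \rceil$-qubit unitary $U_{a,N}$ with $U_{a,N}\ket{b} = \ket{b + a \bmod N}$ and gate complexity $\poly(\log(aN))$.

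The only real subtlety, rather than a genuine obstacle, is the domain-extension step: one must ensure that the extended circuit is efficiently invertible as a permutation on all of $\{0,1\}^{\lceil \log N \rceil}$, so that the uncomputation stage actually restores the ancilla to $\ket{0}$ on every basis state, not merely on the intended subset $\{0, \ldots, N-1\}$. The identity-extension above is the simplest way to guarantee this, and the $b < N$ gating preserves efficiency of both forward and inverse directions.
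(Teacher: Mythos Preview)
The paper does not actually prove this statement; it is stated as a standard fact without proof, and later in Section~\ref{section:sisd_algorithm} the paper restates essentially the same assertion (for $U_{(+c,N)}$) again as a bare fact. Your proposal is correct and is precisely the standard argument: the paper even records separately, as Fact~\ref{fact:unitaries_for_classical_bijections}, the Bennett compute--uncompute lift from a pair of mutually inverse classical circuits to an in-place unitary, which is exactly the second step you invoke. Your treatment is thus more explicit than the paper's, and the domain-extension subtlety you flag (extending by the identity on $\{N,\ldots,2^{\lceil\log N\rceil}-1\}$ so that the uncomputation restores the ancilla on \emph{all} basis states) is a genuine point the paper glosses over.
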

That is, if we know the numbers $a, N$ not as part of the computation but in advance, then it is possible to efficiently add $a$ modulo $N$, in superposition. The solution to executing choice interference is to execute a call to the OI oracle with input $\left( \{ \Tilde{U}_{i} \}_{i \in [m]} , \; \ket{\psi} \ket{0^{\ceil{\log{m}}}} \right)$, and then tracing out the rightmost $\ceil{\log{m}}$ qubits. 

Let us analyze the result of the above suggestion. We consider what happens to the modified state $\ket{\psi} \ket{0^{\ceil{\log{m}}}}$, in each of the unitary execution orders $\sigma \in S_{m}$, for the modified unitaries $\Tilde{U}_{i}$. Two things can be verified by the reader.
\begin{itemize}
    \item
    For $i \in [m]$ and any permutation $\sigma \in S_{m}$ such that $\sigma^{-1}\left( i \right) = 1$, the state in the left $n$ qubits at the end of the computation is $U_{i} \cdot \ket{\psi}$. In simple terms, the only unitary that effectively executes on the side of the input state $\ket{\psi}$ is the first one in the ordering $\sigma$.

    \item 
    When we look at the state of the right $\ceil{ \log(m) }$-qubit counter register at the end of the computation, for any execution order $\sigma \in S_{m}$, the state is always $m \mod 2^{ \ceil{ \log(m) } }$. This in particular means that the rightmost $\ceil{\log{m}}$ qubits are disentangled from the rest of the left $n$ qubits, and can be traced out without damaging coherence of the rest of the state.
\end{itemize}
Together, this also means that for every $i \in [m]$,
$$
\sum_{ \sigma \in S_{m} : \sigma^{-1}(1) = i }
\left( \prod_{j \in [m]} U_{\sigma^{-1}\left( j \right)} \right) \cdot \ket{\psi}
=
\left( m - 1 \right)!
\cdot
\left( U_{i} \cdot \ket{\psi} \right) \otimes \ket{ m \text{ mod } 2^{ \ceil{ \log(m) } } } \enspace .
$$
Finally, since all $m!$ permutations in $S_{m}$ can be partitioned into $m$ sets of equal size $\left( m - 1 \right)!$, such that for $i \in [m]$, set $i$ is all permutations $\sigma \in S_{m}$ such that $\sigma^{-1}(1) = i$, by calculation, we get exactly the desired success probability in \ref{equation:overview_CI_success probability}.

\subsection{Reducing $\GI$ and $\GCVP$ to $ \SISD_{\poly, O(\log(n))}$} \label{subsection:overview_reductions}
In the last part of the Technical Overview we will show that two problems that are believed to be hard for quantum computers, are solvable by a quantum computer with computable order interference, by reducing to the $\SISD_{\poly, O(\log(n))}$ problem. In the first part of this section we show the reduction from $\GI$ which is more immediate, and in the second part we show the more involved reduction from $\GCVP_{ O\left( n\sqrt{n} \right) }$. To this end, we will first review the existing reductions from each of the problems to the more general $\SD_{\poly}$ problem (which we don't know how to solve), and then move to our new reductions, to the easier problem of $\SISD_{\poly, O(\log(n))}$.

\paragraph{The Graph Isomorphism Problem and Statistical Difference.}
An input to the $\GI$ problem is a pair of (simple, undirected) graphs $\left( G_0, G_1 \right)$, and the problem is to decide whether the pair of graphs is isomorphic or not. More formally, for $n \in \Nat$ we define a graph as $G = \left( [n], E \right)$, where $E \subseteq [n] \times [n]$ is the set of edges. Two graphs are isomorphic iff there exists an isomorphism between them, that is, a bijection (or, permutation) $f$ on $[n]$ such that for every pair $u, v \in [n]$: $\{ u, v \} \in E_{0} \iff \{ f(u), f(v) \} \in E_{1}$, where for $b \in \{ 0, 1 \}$, $E_{b}$ is the set of edges in the graph $G_{b}$.

As stated earlier in Section \ref{subsection:overview_computable_OI_and_state_generation}, there is an existing reduction from $\GI$ to the standard statistical difference problem $\SD_{\poly}$. More precisely, \cite{goldreich1991proofs} shows a statistical zero-knowledge protocol for the $\GI$ problem, out of which there can be derived the reduction $\GI \leq_{p} \SD_{a, b}$, for $a = 0$, $b = 1$. The reduction to $\SD_{0, 1}$ means that for an input pair $\left( G_0, G_1 \right)$ to the $\GI$ problem, if the graphs are isomorphic then the output pair of circuits $\left( C_0, C_1 \right)$ are such that the statistical distance (another term for total variation distance) between their output distributions is $0$, and if the graphs are not isomorphic, the distance is $1$, which means that the supports of the output distributions are disjoint.

The reduction is as follows. Given the input graphs $\left( G_0, G_1 \right)$, the first circuit $C_0$ is defined such that for input randomness $r$ of polynomial length, it samples a uniformly random permutation $\sigma \in S_n$ on the elements $[n]$, applies $\sigma$ to the graph $G_0$, and the output of $C_0$ is the $\sigma$-permuted graph $\sigma\left( G_0 \right)$\footnote{Recall that permuting $G$ means applying $\sigma$ to the vertices of $G$, but keeping the edges connecting the same vertex numbers.}. The circuit $C_1$ is identical, only that it applies $\sigma$ to the second graph, $G_1$. To see the correctness of the reduction, recall that for two isomorphic graphs, their sets of all isomorphic graphs are equal, thus in that case, the circuits $C_0, C_1$ sample from the exact same distribution (uniform distribution over their isomorphic graphs!) and the statistical distance is $0$. In the other case where the graphs are not isomorphic, there does not exist a permutation $\sigma$ that will map one graph to the other (because such permutation will act as an isomorphism between the graphs), and in particular, the sets of permutations of the two graphs will never intersect, which will result in a statistical distance of $1$.

\paragraph{Our reduction $\GI \leq_{p} \SISD_{ \left( 0, 1 , \log(n) \right) }$ .}
It remains to give a more fine-grained reduction for $\GI$. The goal of our reduction is to stick to the previous reduction of \cite{goldreich1991proofs}, in the sense that the circuit $C_b$ will still aim to output a random permutation of the graph $G_{b}$ (for $b \in \{ 0, 1 \}$), but additionally, the sampling will be done in a $\log(n)$-sequentially invertible manner.

Our reduction follows by two arguments. The first step is that constant permutation mappings are efficiently computable and invertible. Formally, assume that we use adjacency matrix representation of a graph, so $n^{2}$ bits are sufficient to represent any $n$-vertex graph. Then, for a fixed permutation $\sigma \in S_{n}$, the circuit $C_{\sigma} : \{ 0, 1 \}^{ n^{2} } \rightarrow \{ 0, 1 \}^{ n^{2} }$ that permutes a given input graph $G$ with the permutation $\sigma$, is both efficiently computable and efficiently invertible. The second step is that a random permutation over $[n]$, which has $n!$ options, can be sampled in many small steps, each having a small (that is, polynomially-bounded) number of options, and crucially, each step is invertible. Specifically, the Fisher-Yates shuffle \cite{fisher1953statistical}, shows how to sample a random permutation in $n - 1$ steps, each step $i = n, n - 1, \cdots, 3, 2$ having $i$ options, and importantly, each step is a fixed permutation in of itself, which means the circuit that implements it is efficiently computable and invertible. Formally, for a decreasing $i = n, n - 1, \cdots, 3, 2$, pick a random number $j \in [i]$, and swap between $i$ and $j$.

It remains to compile our observations into a $\log(n)$-invertible circuit sequence. Consider $n - 1$ circuit pairs $\left( C_{i, \rightarrow}, C_{i, \leftarrow} \right)_{ i \in \{ n, n - 1, \cdots, 3, 2 \} }$, such that $C_{i, \rightarrow} : \{ 0, 1 \}^{ n^{2} } \times \{ 0, 1 \}^{ \lceil \log(n) \rceil } \rightarrow \{ 0, 1 \}^{ n^{2} }$ maps between $n$-vertex graphs and uses the bits of randomness to pick a random element in $j \in [i]$, and then applies a swap between vertices $i$ and $j$. Per fixed randomness $j \in [i]$, the circuit $C_{i, \rightarrow}\left( \cdot, j \right)$ performs a swap between vertices $i, j$, which one can observe to be a fixed permutation. By generating a circuit sequence $\left( C^{b}_{i, \rightarrow}, C^{b}_{i, \leftarrow} \right)_{i \in \{ n, n - 1, \cdots, 3, 2 \}}$ to permute the graph $G_b$ (for $b \in \{ 0, 1 \}$), where both sequences are generated as above according to the Fisher-Yates algorithm, the rest of the correctness of the reduction then follows similar lines to that of \cite{goldreich1991proofs}. The full proof is given in Section \ref{section:gi_reduction}.

\paragraph{The $g(n)$-Gap Closest Vector Problem.}
For a gap function $g : \Nat \rightarrow \bbR_{\geq 1}$, an input to the $\GCVP_{g(n)}$ problem is a triplet $\left( \Basis, \tVector, d \right)$, such that
\begin{itemize}
    \item
    $\Basis = \{ \Basis_{1}, \Basis_{2}, \cdots, \Basis_{n} \}$ is a basis for $\bbR^{n}$ with integer vectors, i.e., $\Basis \subseteq \bbZ^{n}$.

    \item 
    $\tVector \in \bbZ^{n}$ is an arbitrary integer vector.

    \item 
    $d \in \Nat$ is a natural number.
\end{itemize}
We define $\Lattice_{\Basis}$, the lattice given by the basis $\Basis$, as the set of all integer linear combinations of the vectors in $\Basis$, that is:
$$
\Lattice_{\Basis}
:=
\Bigl\{
\aVector \in \bbZ^{n}
\; \Big| \;
\exists \sVector \in \bbZ^{n} : \aVector = \Basis \cdot \sVector
\Bigr\}
\enspace .
$$
The input comes with a promise: It is either the case that the vector $\tVector$ is $d$-close to the lattice i.e., $\dist\left( \Lattice_{\Basis}, \tVector \right) \leq d$, in which case we denote $\left( \Basis, \tVector, d \right) \in \YES$, or the vector is farther than $d \cdot g(n)$ from the lattice i.e., $\dist\left( \Lattice_{\Basis}, \tVector \right) > d \cdot g(n)$, in which case we denote $\left( \Basis, \tVector, d \right) \in \NO$. In simple words, a solving algorithm for $\GCVP_{g(n)}$ gets the promise that the vector $\tVector$ is either close or far away from the lattice $\Lattice_{\Basis}$, and needs to decide which is the case. The bigger the gap $g(n)$ is, the easier the problem becomes\footnote{In particular, it is known that for an exponential $g(n) \approx 2^{n}$ the $\GCVP_{2^{n}}$ problem is solvable in classical polynomial time \cite{ajtai2001sieve}.}. What we will show is that for $g(n) = O\left( n\sqrt{n} \right)$ there is a reduction $\GCVP_{ g(n) } \leq_{p} \SISD_{\poly, O(\log(n))}$. The inspiration and starting point of our reduction is the reduction $\GCVP_{ \sqrt{ \frac{n}{O\left( \log\left( n \right) \right)} } } \leq_{p} \SD_{\poly}$ by Goldreich and Goldwasser \cite{goldreich1998limits}, which we recall next.

\paragraph{The Goldreich-Goldwasser reduction $\GCVP_{ \sqrt{ \frac{n}{O\left( \log\left( n \right) \right)} } } \leq_{p} \SD_{\left( 1 - \frac{1}{n^{c}}, \; 1 \right)}$ .}
The \cite{goldreich1998limits} reduction shows how, for a gap $\sqrt{ \frac{n}{O\left( \log\left( n \right) \right)} }$, we can reduce the Gap Closest Vector Problem to the Statistical Difference Problem $\SD$. Given a $\GCVP_{g(n)}$ instance $\left( \Basis, \tVector, d \right)$ (for $g(n) = \sqrt{ \frac{n}{O\left( \log\left( n \right) \right)} }$), the reduction executes in classical polynomial time and outputs $\left( C_{0}, C_{1} \right)$, such that in case the $\GCVP$ input is in $\YES$ then the statistical distance between the output distributions of the circuits is bounded by $1 - \frac{1}{n^{c}}$ (for some constant $c \in \Nat$), and in case the $\GCVP$ input is in $\NO$ then the statistical distance between the output distributions of the circuits is the maximal $1$. Given the input, the \cite{goldreich1998limits} reduction first defines an upper bound $M := M_{\left( \Basis, \tVector \right)} := 2^n \cdot \max_{\vVector \in \{ \bVector_{1}, \cdots, \bVector_{n}, \tVector \} } \norm{\vVector}$. The upper bound is used to treat an inherently infinite lattice in a finite way, by considering only $M$-bounded coordinates. The definition of the circuits $C_{0}$, $C_{1}$ is as follows.
The circuit $C_{0}$ first samples a uniformly random $M$-bounded coordinates vector $\sVector \in \bbZ_{M}^{n}$ and transforms the coordinates to a lattice vector $\vVector := \Basis \cdot \sVector$. The final output of $C_{0}$ is a uniformly random point $\pVector$ inside the $n$-dimensional ball of radius $\frac{d \cdot g(n)}{2}$ around $\vVector$. The circuit $C_{1}$ is the same, with one change: At the end, we add $\tVector$ to $\pVector$. Note that if the distance of the vector $\tVector$ from the lattice is $d$, then it can be written as $\tVector = \aVector + \eVector$, such that $\aVector \in \Lattice_{\Basis}$ and $\norm{ \eVector } = d$, such that $\eVector$ is the shortest vector which satisfies this equality.

The rationale behind the reduction is that when having a uniform distribution over a lattice, then adding a lattice vector to the output keeps the distribution identical. This logic stays consistent for a distribution that samples balls around the lattice points. So, when looking at the change between $C_0$, $C_1$, the lattice component $\aVector$ in the vector $\tVector$ gets "swallowed", and the only vector that skews the distribution is $\eVector$. One can imagine how adding $\eVector$ pushes all $n$-balls around the lattice points in its direction and length, and this is exactly the move from the output distribution of $C_0$ to the output distribution of $C_1$.
If $\left( \Basis, \tVector, d \right) \in \YES$ then the skewing vector $\eVector$ has short length $\leq d$ compared to the radius $\frac{d \cdot g(n)}{2}$ of the balls, which makes the balls have significant intersection, which in turn implies an upper bounded statistical distance between the output distributions of the two circuits (specifically, it can be proven that the statistical distance is $\leq 1 - \frac{1}{n^{c}}$ for some constant $c \in \Nat$) and thus $\left( C_{0}, C_{1} \right) \in \YES$. If $\left( \Basis, \tVector, d \right) \in \NO$ and $\eVector$ has length $> d \cdot g(n)$, one can observe how it pushes the $n$-balls (which have radius $\frac{d \cdot g(n)}{2}$, which is in turn half of the distance to the vector $\eVector$) to be completely disjoint from the original ones, which implies a statistical distance of $1$ and thus $\left( C_{0}, C_{1} \right) \in \NO$. 

\paragraph{Our reduction $\GCVP_{ O\left( n\sqrt{n} \right) } \leq_{p} \SISD_{\left( \frac{1}{4} + O\left( n^{ -\frac{1}{4} } \right) , \; \frac{3}{4} - e^{ -\Omega\left( n \right) }, \; 1 \right)}$ .}
The \cite{goldreich1998limits} reduction maps to circuits $\left( C_{0}, C_{1} \right)$ that sample a uniform point $\pVector$ inside the radius-$\frac{d \cdot g(n)}{2}$, $n$-dimensional ball, around a uniform lattice vector $\vVector$. While sampling a random lattice vector can be done in a sequentially-invertible manner, sampling a uniform point inside the $n$-ball by a sequentially invertible algorithm seems like a fundamental problem. More specifically (but still in a nutshell), the reason that sampling from the $n$-ball seems hard, is that known algorithmic techniques to get a uniformly random sample from the $n$-ball rely on \emph{normalization}. That is, we sample from some large set $S$, vectors that are not necessarily inside the wanted radius of the $n$-ball, and then normalize. The problem arises because normalization is a function of the entire vector, and it is inherently a non-invertible function (for example, all vectors of the same direction but different lengths, will map to the same normalized vector).

In our reduction we will still aim to sample from a distribution $P$ around uniform lattice vectors, but not from the uniform $n$-ball. We note that our distribution $P$ should have two main characteristics.
\begin{itemize}
    \item
    $n$-dimensional symmetry: The distribution $P$ should not have a preferred direction, out of the $n$ directions in $\bbZ^{n}$. This requirement is to deal with arbitrary locations of the vector $\tVector$ with respect to the lattice $\Lattice_{\Basis}$ (that is, if the $P$ we pick prefers some direction $i \in [n]$, then the vector $\tVector$ can be chosen as a function of that direction, tricking the reduction).
    
    \item
    Density: We imagine the support of $P$ as an $n$-dimensional shape, by looking at the elements that are farthest away from the origin and thinking on them as a boundary for the shape. The distribution $P$ should be as dense as possible in its shape, which effectively means having a uniform distribution within the shape's boundaries. The density requirement emerges from the case where $\left( \Basis, \tVector, d \right) \in \YES$ -- in that case we get a uniform sample from the shape around the lattice vectors, and the more the shape is sparse, the less our ability to give a good upper bound on the statistical distance between the output distributions of the circuits $C_{0}$, $C_{1}$.
\end{itemize}
It turns out that the distribution which maximizes the above two requirements is indeed the uniform distribution inside the $n$-ball. The challenge in our reduction is to find a proper distribution $P$ that satisfies the above, and also has a $O\left( \log(n) \right)$-sequentially-invertible sampling algorithm. In fact, we will show a distribution $P$ with a $1$-sequentially-invertible sampling algorithm.

\paragraph{A sequentially-invertible distribution $P$.}
The distribution $P$ we choose is the $n$-dimensional discrete truncated Gaussian distribution $D^{n}_{B}$ (as per Definition \ref{definition:discrete_truncated_gaussian}, $B \in \Nat$ is the truncation parameter, assume it to be $\approx \frac{d \cdot g(n)}{2 \cdot \sqrt{n}}$). It is a known fact that $D^{n}_{B}$ is invariant under rotations in $n$-space, which exactly means it does not prefer any direction of the $n$ directions in $\bbZ^{n}$. In terms of density, the Gaussian distribution is obviously less dense in its shape than the uniform $n$-ball, which is (part of) the reason that we need a bigger gap $g(n) = O\left( n\sqrt{n} \right)$ for which we know how to reduce $\GCVP_{g(n)}$ to $\SISD$, compared to the better gap $\sqrt{ \frac{n}{O\left( \log\left( n \right) \right)} }$ from the previous reduction. As part of the full proof of the reduction (given in Section \ref{section:gcvp_reduction}) we analyze the Gaussian distribution's sparsity in this manner, and the loss of hardness (which translates to the larger $g(n) \approx n\sqrt{n} >> \sqrt{ \frac{n}{O\left( \log\left( n \right) \right)} }$) that the sparsity incurs.

It remains to show a sequentially-invertible sampling algorithm for $P := D^{n}_{B}$. One of the basic properties of the Gaussian distribution says that sampling from $D^{n}_{B}$ can be done by sampling $n$ i.i.d. samples from $D_{B}$ (one sample for each of the coordinates). All we need to find is a sequentially-invertible algorithm to sample from the discrete truncated Gaussian distribution $D_{B}$.

Sampling from $D_{B}$ has a plethora of existing techniques and algorithms (e.g., \cite{gentry2008trapdoors, peikert2010efficient, micciancio2012trapdoors, rossi2019simple}), but it is unclear how to show that these can be executed by a sequentially invertible algorithm. In fact, we did not find any sequentially invertible algorithm that samples from a distribution with negligible statistical distance to $D_{B}$. Once we relax our needs to sample from the discrete Gaussian distribution only approximately (that is, with an inverse-polynomial statistical distance), we find a solution. We give the intuitive steps behind our solution in the following.
\begin{itemize}
    \item 
    We recall one of the basic theorems of probability theory - The Central Limit Theorem (CLT). The CLT implies that for a distribution $X$ with expectation $0$ and standard deviation $\sigma \in \bbR_{\geq 0}$, when we take the sum of $\kappa$ i.i.d. samples from $X$, that is, $S := X_{1} + \cdots + X_{\kappa}$, then the normalized sum $\frac{S}{\sqrt{\kappa}}$ approaches the Gaussian distribution (with standard deviation $\sigma$) as $\kappa$ approaches infinity. This helps us because now, instead of sampling from the Gaussian distribution (which we do not know how to do with a sequentially invertible algorithm), we can repeatedly sample from any distribution $X$, which we aim to be sequentially invertible. There are still two inherent barriers for the above approach:
    \begin{enumerate}
        \item
        While the CLT asserts closeness to the Gaussian distribution for \emph{some} number $\kappa$ of summands, it does not guarantee a specific rate of convergence, as the rate depends on the chosen distribution $X$. The first barrier implies the efficiency of the reduction.

        \item 
        The second barrier is more delicate: The CLT uses normalization (just like the vector normalization used for sampling from the $n$-ball). Since we are dealing with discrete distributions, normalizing means executing integer (that is, rounded) division by $\sqrt{\kappa}$, which is again a non-invertible operation. Without the normalization in the CLT, it is known that it does not always work.
    \end{enumerate}

    To overcome both barriers, we need to choose a distribution $X$ with two properties: (1) It can be sampled by a sequentially invertible algorithm, and (2) The rate of convergence of the sum $S$ (and \emph{not only} the normalized sum $\frac{S}{\sqrt{\kappa}}$) to the Gaussian distribution, is polynomially fast.
    
    \item
    We observe that the uniform distribution has two properties that work in our favour, for the distribution $X$. It can be shown that if $X$ has a certain discrete continuity property (which the discrete uniform distribution does have), then the un-normalized sum $S$ gives a good convergence rate to the Gaussian distribution. Formally, in Section \ref{section:gcvp_reduction} we show that summing $\kappa$ i.i.d. samples of the uniform distribution over $\{ -2^{\beta}, \cdots, -1, 0, 1, \cdots, 2^{\beta} \}$ for $2^{\beta} \approx \frac{B}{\sqrt{\kappa}}$ has statistical distance bounded by $O\left( \frac{1}{\sqrt{\kappa}} \right)$ to the discrete Gaussian distribution $D_{B}$.

    \item 
    A second property of the uniform distribution is that it can be sampled by a sequentially-invertible algorithm, in fact, by a $1$-invertible circuit sequence. To see this, observe that to sample uniformly at random from $\{ 0, 1, \cdots, 2^{\beta + 1} - 1 \}$, all we need to do is to break the sampled number into its summation as powers of $2$, and in the $i$-th circuit for $i \in [\beta]$, we add (or not) the number $2^{i}$. Summing random powers of $2$ up to power $\beta$ exactly gives a uniformly random sample in $\{ 0, 1, \cdots, 2^{\beta + 1} - 1 \}$. At the end we can deterministically move the distribution to center around zero, by subtracting $2^{\beta}$.

    \item 
    To see why the circuit sequence is invertible, observe that for every $i \in [\beta]$, for every possible randomness for the circuit $C_{i}$ (which is just one bit $b_{i} \in \{ 0, 1 \}$), the fixed-randomness circuit $C_{i}\left( \cdot ; b_{i} \right)$ always adds modulo $2^{\beta + 1}$ the same number $b_{i} \cdot 2^{i}$. By going back to Fact \ref{fact:adding_unitarily_a_mod_N}, it follows that the fixed-randomness $C_{i}\left( \cdot ; b_{i} \right)$ is efficiently computable and invertible, as needed.
\end{itemize}
To conclude, the circuit sequence $C^{0}$ samples a uniformly random lattice vector $\vVector$, and then adds an approximate discrete Gaussian vector $\eVector$, where the approximation is done by summing i.i.d. samples from the uniform distribution. The circuit sequence $C^{1}$ is the same, only that we add $\tVector$ in the end. Both circuit sequences are shown to be $1$-sequentially invertible. We refer the reader to Section \ref{section:gcvp_reduction} for the full details of the proof, including the analysis of the statistical distance between the output distributions of the circuits.

\section{Preliminaries} \label{section:preliminaries}
This section contains previously existing notions and definitions from the literature that are relevant to this work, mostly from quantum computation and computational complexity theory.

\begin{itemize}
    \item
    For $n \in \Nat$, define $[n] := \{ 1, 2, 3, \cdots, n \}$.

    \item 
    When we use the notation $\log(\cdot)$ we implicitly refer to the base-$2$ logarithm function $\log_{2}(\cdot)$.

    \item 
    The only non-standard notation will be for iterated matrix multiplication. Specifically, our default convention for iterated matrix multiplication, using the notation $\prod$, will be by enumerating from right to left (unlike the usual, left to right). Specifically, for a natural number $m \in \Nat$ and square complex matrices $\{ M_{i} \}_{i \in [m]}$ ($\forall i \in [m] : M_{i} \in \bbC^{n \times n}$ for some $n \in \Nat$), we define:
    $$
    \prod_{i \in [m]} M_{i}
    :=
    M_{m}\cdot M_{m - 1} \cdots M_{2} \cdot M_{1}
    \enspace . 
    $$

    \item 
    We follow the standard Dirac notations for quantum states in quantum information processing.
    For a variable $x$ (i.e., notation) that denotes a classical binary string, i.e., $x \in \{ 0, 1 \}^{*}$, the notation $\ket{x}$ refers to the pure quantum state of $|x|$ qubits, which equals the $x$-th standard basis element of the space $\bbC^{2^{|x|}}$. That is,
    $$
    \forall n \in \Nat : \forall x \in \{ 0, 1 \}^{n} : 
    \ket{x}
    :=
    e_{x}
    :=
    \left(
    0_{1}, 0_{2}, \cdots, 0_{x - 1}, 1_{x}, 0_{x + 1}, \cdots ,0_{2^{n} - 1}, 0_{2^{n}}
    \right)^{T}
    \enspace .
    $$
    In case $x \in \bbZ$ is a natural number, the notation $\ket{x}$ first translates $x$ to its $\ceil{ \log_{2}\left( x \right) }$-bit binary representation, and then when $x$ is now a binary string, applies the Dirac operator as above.

    \item 
    For a natural number $N \in \Nat$, the $N$-th complex root of unity is defined to be $\omega_{N} := e^{\frac{2 i \pi}{N}}$.

    \item 
    When we use the notation $\norm{ \cdot }$ for norm of a vector, we implicitly refer to the Euclidean, or $\ell_{2}$ norm $\norm{ \cdot }_{2}$, unless explicitly noted otherwise.

    \item 
    For two classical random variables (i.e., distributions) $D_{0}$, $D_{1}$ over binary strings of finite length $\leq k$, the total variation distance (sometimes referred to as statistical distance) between $D_{0}$ and $D_{1}$ is defined as follows:
    $$
    \norm{D_{0} - D_{1}}_{TV}
    :=
    \frac{1}{2}\norm{D_{0} - D_{1}}_{1}
    :=
    \frac{1}{2}
    \sum_{i \in [k], x \in \{ 0, 1 \}^{i}} |D_{0}\left( x \right) - D_{1}\left( x \right)|
    \enspace ,
    $$
    where for $b \in \{ 0, 1 \}$, $i \in [k]$ and $x \in \{ 0, 1 \}^{i}$, the notation $D_{b}\left( x \right)$ is the probability to sample $x$ in the distribution $D_{b}$.

    \item 
    For two classical random variables (i.e., distributions) $D_{0}$, $D_{1}$ over binary strings of finite length $\leq k$, the fidelity between $D_{0}$ and $D_{1}$ is defined as follows:
    $$
    F\left( D_{0}, D_{1} \right)
    :=
    \sum_{i \in [k], x \in \{ 0, 1 \}^{i}} \sqrt{ D_{0}\left( x \right)\cdot D_{1}\left( x \right) }
    \enspace ,
    $$
    where for $b \in \{ 0, 1 \}$, $i \in [k]$ and $x \in \{ 0, 1 \}^{i}$, the notation $D_{b}\left( x \right)$ is the probability to sample $x$ in the distribution $D_{b}$.

    \item 
    For every pair of classical random variables $D_{0}$, $D_{1}$ over binary strings of finite length $\leq k$, the following relation between the fidelity and total variation distance between distributions is known:
    $$
    1 - F\left( D_{0}, D_{1} \right) \leq
    \norm{D_{0} - D_{1}}_{TV}
    \leq
    \sqrt{ 1 - F\left( D_{0}, D_{1} \right)^{2} }
    \enspace .
    $$
    
    \item 
    Unless specifically noted otherwise, for any variable $u$ (not necessarily denoting a binary string), the notation $\ket{u}$ denotes a unit vector in complex vector space, of some finite dimension. That is, the notation usually refers to a quantum state and is thus a normalized vector, unless explicitly noted otherwise. Also, for any $n \in \Nat$, an $n$-qubit state is always a unit vector in $\bbC^{2^{n}}$.

    \item
    A \DPT algorithm is a classical deterministic polynomial-time Turing machine. It is a known fact in computational complexity theory that a \DPT is equivalent to a uniform family of polynomial-sized classical deterministic circuits. We will use these two notions interchangeably throughout this work as our model for efficient classical computation.

    \item
    A \PPT algorithm is a classical probabilistic polynomial-time Turing machine. It is a known fact in computational complexity theory that a \PPT is equivalent to a uniform family of polynomial-sized classical probabilistic circuits. We will use these two notions interchangeably throughout this work as our model for efficient probabilistic classical computation.

    \item
    A \QPT algorithm is a quantum polynomial-time Turing machine. It is a known fact in computational complexity theory that a \QPT is equivalent to a uniform family of polynomial-sized quantum circuits. We will use these two notions interchangeably throughout this work as our model for efficient quantum computation.

    \item 
    Unitary transformations are the transformations which quantum mechanics allows, when excluding measurements (and adding ancilla qubits). Throughout this work we will consider unitaries as matrices, transformations or circuits, interchangeably, depending on the context, using the following conventions.
    
    \item 
    For $n \in \Nat$, an $n$-qubit unitary is a unitary matrix $U \in \bbC^{2^{n} \times 2^{n}}$.

    \item 
    For $n \in \Nat$, an $n$-qubit general circuit (or $n$-qubit quantum algorithm) is a quantum circuit $C$ that can perform the constant-sized quantum gates $\{ X, Z, P, H, T, CNOT, SWAP, Tof \}$, execute measurement gates and add ancillary qubits registers containing $\ket{0^{k}}$.

    \item 
    For $n \in \Nat$, an $n$-qubit unitary circuit is a quantum circuit on $n$ qubits that only uses quantum gates $\{ X, Z, P, H, T, CNOT, SWAP, Tof \}$, adds no ancilla and uses no measurement gates.

    \item 
    For $n, k\in \Nat$, an $n$-qubit, $k$-ancilla unitary circuit $C$ is a quantum circuit on $n$ qubits that first concatenates a $k$-qubit ancilla $\ket{0^{k}}$ to the input, then executes only quantum gates $\{ X, Z, P, H, T, CNOT, SWAP, Tof \}$. $C$ satisfies that the ancillary qubits start and (crucially) return to $0^{k}$, that is, there exists an $n$-qubit unitary $U_{C}$ such that,
    $$
    \text{For every $n$-qubit state $\ket{\psi}$ } :
    C\left( \ket{\psi}\ket{0^{k}} \right) = \left( U\ket{\psi} \right) \ket{0^{k}} \enspace .
    $$
    In that case, we say that the quantum circuit $C$ implements the unitary $U_{C}$.

    \item 
    For $n \in \Nat$ and an $n$-qubit unitary transformation $U$, a unitary circuit for $U$, denoted $C_{U}$ (or denoted $U$ by overloading notation, when it is clear from the context whether we are referring to the matrix or circuit version of $U$) is an $n$-qubit, $k$-ancilla (for some $k \in \left( \Nat \cup \{ 0 \} \right)$) unitary quantum circuit that implements the unitary $U$.

    \item 
    Representing permutations: For $m \in \Nat$, our way for presenting a permutation $\sigma \in S_{m}$ in the form of a binary string in this paper, is by using $m \cdot \ceil{\log(m)}$ bits, interpreted as a list of $m$ numbers in $[m]$ -- an ordering of the set $[m]$ according to the permutation $\sigma$. Specifically, the number in the $i$-th packet of $\ceil{\log_{2}(m)}$ bits, is $\sigma^{-1}(i)$.

    \item 
    In this work we assume that quantum systems have finite dimensions. Formally, for every quantum register $R$ of $n \in \Nat$ qubits we assume there exists a quantum register $R'$ of finite size containing $n' \in \Nat$ qubits such that the quantum state of the joint system $\left( R, R' \right)$ is in a pure state, that is, the joint system is not entangled with any other outside quantum system.

    \item 
    For natural numbers $N_{0}, N_{1} \in \Nat$, the set $\bbZ_{\left( -N_{0}, N_{1} \right)}$ is defined as the set of integers between (and including) $-N_{0}$ and $N_{1}$, that is,
    $$
    \bbZ_{\left( -N_{0}, N_{1} \right)}
    :=
    \{ -N_{0}, \cdots, -1, 0, 1, \cdots, N_{1} \}
    \enspace .
    $$
\end{itemize}

\subsection{Standard Notions from Computational Complexity Theory}
In this work we study generalization of decision problems, which are called promise problems. Promise problems are defined as follows.

\begin{definition} [Promise Problem]
    Let $\{ 0, 1 \}^{*} := \bigcup_{ i \in \left( \Nat \cup \{ 0 \} \right) } \{ 0, 1 \}^{i}$ the set of all binary strings, of all lengths. A promise problem $\prod := \left( \YES , \NO \right)$ is given by two (possibly infinite) sets $\YES \subseteq \{ 0, 1 \}^{*}$, $\NO \subseteq \{ 0, 1 \}^{*}$ such that $\YES \cap \NO = \emptyset$.
\end{definition}

A decision problem, which is a private case of promise problems, is defined as follows. It basically means a promise problem where the input can be any string, and is not promised to be in any particular set.

\begin{definition} [Decision Problem]
    Let $\prod := \left( \YES , \NO \right)$ a promise problem. We say that $\prod$ is a decision problem if $\YES \cup \NO = \{ 0, 1 \}^{*}$.
\end{definition}

We next define languages, which are derived from decision problems.

\begin{definition} [Language]
    Let $\prod := \left( \YES , \NO \right)$ a decision problem. We say that $L$ is the language induced by the decision problem $\prod$ if $\YES = L$. 
\end{definition}

The complexity class $\BQP$ represents all computational problems that have a feasible solution on a quantum computer, and more broadly (and informally), $\BQP$ represents humanly-feasible quantum computation. The writing $\BQP$ stands for Bounded-Error Quantum Polynomial-Time, and defined as follows (which is one out of many equivalent definitions of this class).

\begin{definition} [The Complexity Class $\BQP$]
    The complexity class $\BQP$ is the set of promise problems $\prod := \left( \YES , \NO \right)$ that are solvable by a quantum polynomial-time algorithm. Formally, such that there exists a \QPT $M$ such that for every $x \in \YES$, $M(x) = 1$ with probability at least $p(x)$, and if $x \in \NO$, $M(x) = 0$ with probability at least $p(x)$, such that
    $$
    \forall x \in \left( \YES \cup \NO \right) : p(x) \geq 1 - 2^{-\poly\left(|x|\right)} \enspace ,
    $$
    for some polynomial $\poly(\cdot)$.
\end{definition}

We next discuss computational problems (that is, promise problems) on arbitrary mathematical objects. As in the standard conventions in computational complexity theory, arbitrary mathematical objects can be represented by binary strings, and thus promise problems capture any problem.

\paragraph{The Statistical Difference Problem.}
We next define the Statistical Difference ($\SD$) problem \cite{sahai2003complete}, a well-known promise problem in computer science.

\begin{definition} [The Statistical Difference Problem] \label{definition:statistical_difference_problem}
    Let $a, b : \Nat \rightarrow [0, 1]$ functions such that $\forall n \in \Nat : a(n) \leq b(n)$. The $\left( a(n), b(n) \right)$-gap Statistical Difference problem, denoted $\SD_{a(n), b(n)}$ is a promise problem, where the input is a pair of classical circuits $\left( C_{0}, C_{1} \right)$ with the same output size, $C_{0} : \{ 0, 1 \}^{k_{0}} \rightarrow \{ 0, 1 \}^{k}$, $C_{1} : \{ 0, 1 \}^{k_{1}} \rightarrow \{ 0, 1 \}^{k}$. For a pair of circuits we define their output distributions:
    $$
    \forall b \in \{ 0, 1 \} :
    D_{b} := 
    \{ C_{b}\left( x \right) | x \gets \{ 0, 1 \}^{k_{b}} \}
    \enspace .
    $$
    The problem is defined as follows.
    \begin{itemize}
        \item
        $\YES$ is the set of pairs of circuits $\left( C_{0}, C_{1} \right)$ such that $\norm{D_{0} - D_{1}}_{TV} \leq a(n)$,

        \item
        $\NO$ is the set of pairs of circuits $\left( C_{0}, C_{1} \right)$ such that $\norm{D_{0} - D_{1}}_{TV} > b(n)$,
    \end{itemize}
    such that $n$ is the description size of the pair of circuits,
    $$
    n :=
    |\left( C_{0}, C_{1} \right)|
    \enspace .
    $$
\end{definition}

In this work we mainly care about the statistical difference problem where the gap is not too small, and formally, lower-bounded inverse-polynomially. To this end we define the \emph{family} of problems $\SD_{\poly}$ as follows.

\begin{definition} [The Polynomial-Gap Statistical Difference Problem]
    The polynomially-bounded statistical difference problem, denoted $\SD_{\poly}$ is in fact a set of promise problems (and not a promise problem by itself). It is defined as follows.
    $$
    \SD_{\poly}
    :=
    \bigcup_{
    \substack{
    a(n), b(n) : \Nat \rightarrow [0,1], \\
    \exists \text{ a polynomial } \poly : \Nat \rightarrow \Nat \text{ such that } \forall n \in \Nat : b(n)^{2} - 2a(n) + a(n)^{2} \geq \frac{1}{\poly(n)}
    }
    }
    \{ \SD_{a(n), b(n)} \}
    \enspace .
    $$
\end{definition}

\paragraph{The Graph Isomorphism Problem.}
As part of this work we show how a strengthening of the model of quantum computation solves the graph isomorphism problem. To this end, we define (simple and undirected) graphs and then the graph isomorphism problem. First, a (simple and undirected) graph with $n$ vertices is a set of $n$ points, implicitly numbered $[n]$, where some pairs of points have edges between them. The edges are directionless lines, or connections, between a pair of points $i, j \in [n]$, $i \neq j$. In this work when we refer to graphs, they are implicitly simple and undirected.

\begin{definition} [Simple and Undirected Graph]
    For $n \in \Nat$, an $n$-vertex (simple and undirected) graph $G$ is given by the pair $\left( [n], E \right)$, such that $E$ contains size-$2$ subsets of $[n]$. 
\end{definition}

Observe that the maximal amount of edges, that is, the maximal size of $E$ for an $n$-vertex graph $G$, is ${n \choose 2} := \frac{n\left( n - 1 \right)}{2}$. This means that a graph can be represented by a binary string of length $O(n^2)$. Next, we define isomorphic graphs. The intuition behind graph isomorphism is that isomorphic graphs $G_{0}$, $G_{1}$ are essentially the exact same graph (they contain the same information on connections), and visually, by moving the vertices of (any embedding of) one of the graphs $G_{b}$, we get a picture of an embedding of the other $G_{\lnot b}$.

\begin{definition} [Isomorphic Graphs]
    Let $n \in \Nat$ and let $G_0 = \left( [n], E_{0} \right)$, $G_1 = \left( [n], E_{1} \right)$ a pair of $n$-vertex graphs. We say that $G_{0}$ and $G_{1}$ are isomorphic and denote $G_{0} \simeq G_{1}$ if there exists a bijection (or equivalently, permutation) $f$ on the set $[n]$ such that,
    $$
    \forall i, j \in [n] :
    \{ i, j \} \in E_{0} \iff \{ f(i), f(j) \} \in E_{1}
    \enspace .
    $$
\end{definition}

The Graph Isomorphism ($\GI$) problem is the computational problem of deciding whether a pair of input graphs are isomorphic or not. The $\GI$ problem has a long history of attempted solutions using polynomial-time algorithms, both classically and quantumly.

\begin{definition} [The Graph Isomorphism Problem] \label{definition:graph_isomorphism_problem}
    The Graph Isomorphism problem, denoted $\GI$ is a decision problem, where the input is a pair of graphs $\left( G_{0}, G_{1} \right)$ with the same number of vertices $n$. The problem is defined as follows.
    \begin{itemize}
        \item
        $\YES$ is the set of pairs of graphs $\left( G_{0}, G_{1} \right)$ such that $G_{0} \simeq G_{1}$.

        \item
        $\NO$ is $\{ 0, 1 \}^{*} \setminus \YES$.
    \end{itemize}
\end{definition}

As a final comment, the $\GI$ is a decision problem (and not only the more general, promise problem) since if it had a promise, this promise would turn out to be computationally trivial. That is, the set $\NO$ is the sets of strings that either (1) just don't represent a pair of graphs, or (2) represent a pair of graphs $\left( G_{0}, G_{1} \right)$, but this pair is not isomorphic. Checking whether an input string has the correct format and indeed represents a pair of graphs can be done in classical polynomial-time, and thus the graph isomorphism problem is usually thought of as a decision problem and not just a promise problem.

We use the following known facts about the distributions of random permutations of isomorphic graphs, and also on known algorithmic techniques for sampling a uniformly random permutation.

\begin{fact} [Permutations of isomorphic and non-isomorphic graphs] \label{fact:random_permutations_of_isomorphic_graphs}
    Let $\left( G_{0}, G_{1} \right)$ a pair of graphs on $n \in \Nat$ vertices each.
    
    In case the pair is isomorphic, then for a permutation $\sigma \in S_{n}$ on the set $[n]$, denote by $\sigma\left( G \right)$ the graph generated by rearranging the vertices of $G$ with accordance to the permutation $\sigma$. Then, the following distributions are identical:
    $$
    \{ \sigma\left( G_{0} \right) | \sigma \gets S_{n} \}
    \equiv
    \{ \sigma\left( G_{1} \right) | \sigma \gets S_{n} \}
    \enspace .
    $$

    In case the pair of graphs is not isomorphic, then the above distributions have disjoint supports.
\end{fact}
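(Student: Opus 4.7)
The plan is to handle the two cases of the fact separately, using the fact that composition with a fixed permutation is a bijection on $S_n$.

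For the isomorphic case, I will start from the definition of graph isomorphism: since $G_0 \simeq G_1$, there is some permutation $\tau \in S_n$ such that $\tau(G_0) = G_1$. Then for any $\sigma \in S_n$,
\[
\sigma(G_1) \;=\; \sigma\bigl(\tau(G_0)\bigr) \;=\; (\sigma \circ \tau)(G_0),
\]
so the right-hand distribution can be rewritten as $\{(\sigma \circ \tau)(G_0) \mid \sigma \gets S_n\}$. Because right-multiplication by the fixed element $\tau$ is a bijection on the group $S_n$, if $\sigma$ is uniform on $S_n$ then so is $\sigma \circ \tau$; therefore $\{(\sigma \circ \tau)(G_0) \mid \sigma \gets S_n\}$ has the same distribution as $\{\sigma'(G_0) \mid \sigma' \gets S_n\}$, yielding the claimed equality of distributions.

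For the non-isomorphic case, I would argue by contradiction via the contrapositive. Suppose the supports of the two distributions are not disjoint; then there exist $\sigma_0, \sigma_1 \in S_n$ with $\sigma_0(G_0) = \sigma_1(G_1)$. Applying $\sigma_1^{-1}$ to both sides (where $\sigma_1^{-1}$ denotes the inverse permutation, which also acts on graphs by relabeling vertices) gives $(\sigma_1^{-1} \circ \sigma_0)(G_0) = G_1$, exhibiting an explicit isomorphism between $G_0$ and $G_1$, contradicting the assumption that they are non-isomorphic. Hence the supports must indeed be disjoint.

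There is no substantive obstacle here; both statements are essentially immediate from the group-theoretic structure of the action of $S_n$ on $n$-vertex graphs. The only small point to verify carefully is that the relabeling action is well-defined and commutes with composition in the expected way (i.e.\ $(\sigma \circ \tau)(G) = \sigma(\tau(G))$), which follows directly from the definition $\{i,j\} \in E \iff \{f(i), f(j)\} \in f(E)$ used in the isomorphism definition. Once that is noted, both parts reduce to the observation that right-multiplication by a fixed permutation is a bijection on $S_n$.
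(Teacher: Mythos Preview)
Your argument is correct. The paper does not actually prove this statement: it is presented as a known fact without proof, so there is no paper-side argument to compare against; your group-action proof is exactly the standard justification one would give.
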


\begin{fact} [The Fisher-Yates permutation-sampling algorithm \cite{fisher1953statistical}] \label{fact:fisher_yates_algorithm}
    Let $n \in \Nat$, the following process produces a uniformly random permutation $\sigma \in S_{n}$.
    Start with any permutation of $\left( 1, 2, \cdots, n - 1, n \right)$, and for a decreasing $i = n, n - 1, \cdots, 3, 2$, for every $i$, make a swap between the element currently in slot $i$ and the element currently in slot $r_{i}$, where $r_{i}$ is a uniformly random element in $[i]$.
\end{fact}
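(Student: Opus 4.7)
The plan is to prove correctness by induction on $n$, showing that after the first iteration the value placed in the last slot is uniformly distributed over $[n]$, and that the remainder of the algorithm is exactly Fisher-Yates on $n - 1$ elements, to which the inductive hypothesis applies. The base case $n = 1$ is immediate since the loop range $\{n, n - 1, \ldots, 2\}$ is empty and there is a single permutation in $S_1$.

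For the inductive step, I would fix an arbitrary initial arrangement $\pi_0$ of $(1, 2, \ldots, n)$ in the $n$ slots and analyze the iteration with $i = n$. Because $r_n$ is uniform on $[n]$ and $\pi_0 : [n] \to [n]$ is a bijection, the element $\pi_0(r_n)$ placed into slot $n$ by the first swap is uniformly distributed on $[n]$. Conditioning on the value $v \in [n]$ now sitting in slot $n$, the remaining $n - 1$ slots hold a specific rearrangement $\pi_v$ of $[n] \setminus \{v\}$ (obtained from $\pi_0$ by moving $\pi_0(n)$ into the slot that originally held $v$), which depends deterministically on $v$ and $\pi_0$.

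Next I would observe that the remaining iterations for $i = n - 1, \ldots, 2$ use fresh independent uniform samples $r_{n-1}, \ldots, r_2$ and only touch slots $1, \ldots, n - 1$, so they implement the Fisher-Yates process on the $(n - 1)$-element initial arrangement $\pi_v$. By the inductive hypothesis this yields, conditional on $v$, a uniform distribution over the $(n-1)!$ orderings of $[n] \setminus \{v\}$ in slots $1, \ldots, n - 1$. Combining, each $\sigma \in S_n$ is produced with probability $\tfrac{1}{n} \cdot \tfrac{1}{(n-1)!} = \tfrac{1}{n!}$, as required.

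The main subtlety, and essentially the only step that deserves explicit care, is the conditional independence claim: the random choices $r_{n-1}, \ldots, r_2$ must remain independent and uniform even after conditioning on $r_n$ (equivalently, on $v$), so that the inductive hypothesis legitimately applies to the sub-process on slots $1, \ldots, n - 1$. This follows directly from the independence of the $r_i$'s built into the algorithm's specification, but it is worth stating explicitly to avoid entangling the distribution of the final element with those of the subsequent choices. As a sanity check I would also note the cardinality argument: the sample space of choice sequences $(r_n, r_{n-1}, \ldots, r_2)$ has size $n \cdot (n-1) \cdots 2 = n! = |S_n|$, so uniform output on $S_n$ is equivalent to the map from choice sequences to output permutations being a bijection for each fixed $\pi_0$, which I would use as a secondary consistency check on the inductive argument.
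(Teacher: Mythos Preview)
Your inductive proof is correct and is the standard argument for Fisher--Yates. Note, however, that the paper does not actually prove this statement: it is listed as a \emph{Fact} with a citation to \cite{fisher1953statistical} and is used as a black box in the $\GI$ reduction and in Claim~\ref{claim:fisher_yates_sampler}. So there is no ``paper's own proof'' to compare against; you have supplied a proof where the paper simply invokes the literature.
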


\paragraph{The Gap Closest Vector Problem.}
We turn to the definition of the gap closest vector problem ($\GCVP$). We define lattices and then the $\GCVP$ problem. For a basis $\Basis$ of $\bbR^{n}$, a lattice with basis $\Basis$ is the infinite set of all \emph{integer} linear combinations of the basis vectors in $\Basis$. In this work, in order to not deal with real-valued vectors and finite precision, we consider only lattice bases with integer coordinates, i.e., bases of $\bbR^{n}$ from $\bbZ^{n}$. The computational hardness of these problems are polynomially equivalent.

\begin{definition} [An Integer Lattice]
    Let $n \in \Nat$ and let $\Basis := \{ \bVector_{1}, \bVector_{2}, \cdots, \bVector_{n} \}$ a basis for $\bbR^{n}$ with integer vectors, that is, $\Basis \subseteq \bbZ^{n}$. The lattice of $\Basis$, denoted $\Lattice_{\Basis}$, is defined as follows.
    $$
    \Lattice_{\Basis}
    :=
    \Bigl\{
    \aVector \in \bbZ^{n}
    \; \Big| \;
    \exists \sVector \in \bbZ^{n} : \aVector = \Basis \cdot \sVector
    \Bigr\}
    \enspace .
    $$
\end{definition}

For a lattice $\Lattice$ and a vector $\tVector \in \bbZ^{n}$, the distance $\Delta\left( \Lattice, \tVector \right)$ between $\Lattice$ and $\tVector$ is the minimum (Euclidean, $\ell_{2}$) distance between $\vVector$ and $\tVector$, where the minimum is taken over $\vVector \in \Lattice$. Next, we define the Gap Closest Vector Problem.

\begin{definition} [The Gap Closest Vector Problem] \label{definition:gap_closest_vector_problem}
    Let $g : \Nat \rightarrow \bbR_{\geq 1}$ a function. The $g(n)$-Gap Closest Vector Problem, denoted $\GCVP_{g(n)}$ is a promise problem, where the input is a triplet $\left( \Basis, \tVector, d \right)$ such that $\Basis \subseteq \bbZ^{n}$ is a basis for $\bbR^{n}$, $\tVector \in \bbZ^{n}$ is an integer vector and $d \in \Nat$. For an input triplet, the promise problem is defined as follows.
    \begin{itemize}
        \item
        $\YES$ is the set of triplets $\left( \Basis, \tVector, d \right)$ such that $\Delta\left( \Lattice_{\Basis}, \tVector \right) \leq d$.

        \item
        $\NO$ is the set of triplets $\left( \Basis, \tVector, d \right)$ such that $\Delta\left( \Lattice_{\Basis}, \tVector \right) > d\cdot g(n)$.
    \end{itemize}
\end{definition}

\subsection{Statistical and Algorithmic Tools}
We use numerous algorithmic and statistical results in this work.

\paragraph{Quantum Swap Test.} We use the Swap Test \cite{barenco1997stabilization, buhrman2001quantum} algorithm, which, among other things, can be used to distinguish between quantum states as a function of their inner product as vectors.

\begin{theorem} [Swap Test Algorithm] \label{theorem:swap_test}
    There exists a quantum algorithm $Q_{ST}\left( \cdot \right)$ that gets as input a $2 n$-qubit register $R$ and outputs a bit $b \in \{ 0, 1 \}$, and executes in complexity $O(n)$. If there exist two $n$-qubit states $\ket{\psi}$, $\ket{\phi}$ such that $R$ is in the separable state $\ket{\psi} \otimes \ket{\phi}$, then the algorithm outputs $1$ with probability $\frac{1}{2} + \frac{|\bra{\psi}\ket{\phi}|^{2}}{2}$, and with the remaining probability outputs $0$.
\end{theorem}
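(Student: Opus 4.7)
The plan is to implement the standard Swap Test circuit: introduce a single ancilla qubit, use it as a control to conditionally swap the two $n$-qubit halves of $R$, and then measure the ancilla in the Hadamard basis. Concretely, I would first append an ancilla qubit $a$ in state $\ket{0}$, apply a Hadamard $H$ to $a$, then apply a Fredkin-style controlled swap between the two halves of $R$ (implemented as $n$ Toffoli-based controlled SWAP gates, one per qubit position), apply $H$ again to $a$, and finally measure $a$ in the computational basis, returning the negation of the outcome if needed to match the sign convention in the statement.

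For the analysis, I would trace the evolution of the state under the assumption $R = \ket{\psi}\otimes\ket{\phi}$. After the first Hadamard the joint state of $(a, R)$ equals $\tfrac{1}{\sqrt{2}}(\ket{0}+\ket{1})\otimes\ket{\psi}\ket{\phi}$. The controlled swap turns this into
\[
\tfrac{1}{\sqrt{2}}\bigl(\ket{0}\ket{\psi}\ket{\phi}+\ket{1}\ket{\phi}\ket{\psi}\bigr),
\]
and the second Hadamard produces
\[
\tfrac{1}{2}\ket{0}\bigl(\ket{\psi}\ket{\phi}+\ket{\phi}\ket{\psi}\bigr)+\tfrac{1}{2}\ket{1}\bigl(\ket{\psi}\ket{\phi}-\ket{\phi}\ket{\psi}\bigr).
\]
Taking the squared norm of the $\ket{0}$ branch gives $\tfrac{1}{4}\bigl(2+2|\bra{\psi}\ket{\phi}|^{2}\bigr)=\tfrac{1}{2}+\tfrac{|\bra{\psi}\ket{\phi}|^{2}}{2}$, which is the desired acceptance probability; defining $Q_{ST}$ to output $1$ on that branch (or equivalently $1 \oplus b$ where $b$ is the raw measurement outcome) matches the statement.

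For the complexity, I would observe that the circuit consists of two single-qubit Hadamards, $n$ controlled-SWAP operations (each a constant-depth subcircuit of CNOTs and Toffolis in the standard gate set listed in the preliminaries), and a single-qubit measurement, for a total of $O(n)$ basic gates and one auxiliary qubit. I expect no real obstacle here: the only mild subtlety is book-keeping the $\ket{1}$-branch sign convention so that the theorem's "outputs $1$ with probability $\tfrac{1}{2}+\tfrac{|\bra{\psi}\ket{\phi}|^{2}}{2}$" formulation comes out correctly, and verifying that the controlled-SWAP across $n$ qubit pairs really does implement $\ket{\psi}\ket{\phi}\mapsto\ket{\phi}\ket{\psi}$ tensor-factorwise, which follows immediately from the qubitwise action of SWAP.
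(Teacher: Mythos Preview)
Your proof is correct and is the standard derivation of the swap test. Note, however, that the paper does not actually prove this theorem: it is stated in the preliminaries as a known result with citations to \cite{barenco1997stabilization, buhrman2001quantum}, so there is no proof in the paper to compare against.
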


\paragraph{Deviation Bounds.}
We use the following known formulation of Chernoff's bound, which shows an exponentially fast (in the number of samples) decaying probability for the deviation of the average of many binary i.i.d. samples. One of the main usages of the below statement in repetition theorems in computer science, is that the repetition parameter $n \in \Nat$ in the below statement can be chosen as a function of both $\varepsilon \in [0, 1]$ and $p \in [0, 1]$.

\begin{theorem} [Chernoff's Bound for an Average of i.i.d. Samples] \label{theorem:chernoff_bound}
    Let $n \in \Nat$, let $X$ a random variable over $\{ 0, 1 \}$ (that is, $X = 1$ with probability $p \in [0, 1]$ and $X = 0$ with probability $1 - p$) and let $\{ X_{i} \}_{i \in [n]}$ $n$ i.i.d. samples from $X$. Let $A_{X, n} := \frac{ \sum_{i \in [n]} X_{i} }{ n }$ the average of the sum of samples, and let $E\left( A_{X, n} \right)$ the expectation of the average. Then for every $\varepsilon \in (0, 1)$,
    $$
    \Pr_{X_{1}, \cdots, X_{n}}\left[ \big| A_{X, n} - E\left( A_{X, n} \right) \big| \geq \varepsilon \cdot p  \right]
    \leq
    2 \cdot e^{ - n \cdot \frac{p \cdot \varepsilon^{2}}{ 3 } }
    \enspace .
    $$
\end{theorem}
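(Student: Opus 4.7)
The plan is to prove this standard Chernoff bound via the moment generating function (MGF) method. Let $S := \sum_{i \in [n]} X_{i}$, so that $A_{X,n} = S/n$ and $E(A_{X,n}) = p$. The event in question is equivalent to $|S - np| \geq \varepsilon \cdot np$, and I would split it via a union bound into an upper tail event $S \geq (1+\varepsilon) np$ and a lower tail event $S \leq (1-\varepsilon) np$, aiming to bound each by $e^{-np\varepsilon^{2}/3}$.

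For the upper tail, the first step is Markov's inequality applied to the exponentiated variable: for any $t > 0$, $\Pr[S \geq (1+\varepsilon) np] = \Pr[e^{tS} \geq e^{t(1+\varepsilon)np}] \leq e^{-t(1+\varepsilon)np} \cdot E[e^{tS}]$. By independence, $E[e^{tS}] = \prod_{i \in [n]} E[e^{tX_{i}}] = (1 + p(e^{t}-1))^{n}$, and the standard inequality $1 + y \leq e^{y}$ upgrades this to $e^{np(e^{t}-1)}$. Optimizing the resulting exponent $np(e^{t} - 1 - t(1+\varepsilon))$ over $t > 0$ at $t = \ln(1+\varepsilon)$ yields the classical bound $\Pr[S \geq (1+\varepsilon) np] \leq \bigl(e^{\varepsilon}/(1+\varepsilon)^{1+\varepsilon}\bigr)^{np}$. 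The final step is to check that on $\varepsilon \in (0,1]$ the base is at most $e^{-\varepsilon^{2}/3}$, which is a one-variable calculus exercise: one verifies that $f(\varepsilon) := \varepsilon - (1+\varepsilon)\ln(1+\varepsilon) + \varepsilon^{2}/3$ satisfies $f(0)=0$ and $f'(\varepsilon) \leq 0$ on $(0,1]$.

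For the lower tail I would repeat the argument with the MGF of $-S$, obtaining $\Pr[S \leq (1-\varepsilon)np] \leq \bigl(e^{-\varepsilon}/(1-\varepsilon)^{1-\varepsilon}\bigr)^{np} \leq e^{-np\varepsilon^{2}/2}$, which is in particular bounded by $e^{-np\varepsilon^{2}/3}$. A union bound over the two tails produces the leading factor of $2$ and completes the proof.

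The main obstacle is not conceptual, since the entire argument is a textbook derivation, but rather the numerical tightening that yields the specific constant $1/3$ in the exponent. That step reduces to the monotonicity check on $f(\varepsilon)$ above, and in any case, since the paper invokes this bound only as a tool for exponentially decaying concentration, any fixed positive constant in the exponent would have sufficed for the applications.
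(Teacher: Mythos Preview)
Your derivation is the standard textbook proof of this multiplicative Chernoff bound and is correct. The paper does not actually prove this theorem; it is stated in the Preliminaries as a known result and used as a black box, so there is no paper-side argument to compare against. Your MGF approach with the optimization $t = \ln(1+\varepsilon)$ and the calculus check on $f(\varepsilon)$ is exactly how the bound with constant $1/3$ is obtained in standard references.
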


\paragraph{Gaussian distribution variants.}
As part of the reduction $\GCVP \leq_{p} \SISD_{\poly, 1}$, which is presented in Section \ref{section:gcvp_reduction}, we will approximate the discrete truncated Gaussian distribution, which is defined as follows.

\begin{definition} [Discrete truncated Gaussian distribution] \label{definition:discrete_truncated_gaussian}
    For $B \in \Nat$, we define the discrete truncated Gaussian distribution $D_{B}$ as the distribution with sample space $\{ -B, \cdots, -1, 0, 1, \cdots, B \}$ defined as,
    $$
    \forall x \in \{ -B, \cdots, -1, 0, 1, \cdots, B \} :
    $$
    $$
    D_{B}\left( x \right)
    :=
    \frac
    {
    e^{ -\frac{ \pi \cdot |x|^{2} }{ B^{2} } }
    }
    {
    \sum_{ y \in \{ -B, \cdots, -1, 0, 1, \cdots, B \} }
    e^{ -\frac{\pi \cdot |y|^{2} }{ B^{2} } }
    }
    \enspace .
    $$
\end{definition}

As part of the approximation of the discrete Gaussian distribution we will use a closely related variant, which is the \emph{rounded} truncated Gaussian distribution. To this end we first define the continuous Gaussian distribution, then rounded Gaussian distribution, and then the truncated version of the rounded distribution.

\begin{definition} [Continuous Gaussian distribution] \label{definition:continuous_gaussian}
    Let $\mu \in \bbR$ and $\sigma \in \bbR_{\geq 0}$. The continuous Gaussian (or normal) distribution with mean $\mu$ and standard deviation $\sigma$, denoted $\mathcal{N}\left( \mu, \sigma \right)$ is the distribution with sample space $\bbR$ such that,
    $$
    \forall x \in \bbR :
    \mathcal{N}\left( \mu, \sigma \right)\left( x \right)
    :=
    \frac{1}{ \sigma \sqrt{ 2 \pi } } \cdot e^{ -\frac{1}{2}\left( \frac{ x - \mu }{ \sigma } \right)^{2} }
    \enspace .
    $$
\end{definition}

\begin{definition} [Rounded Gaussian distribution] \label{definition:rounded_gaussian}
    Let $\mu \in \bbR$ and $\sigma \in \bbR_{\geq 0}$. The rounded Gaussian distribution with mean $\mu$ and standard deviation $\sigma$, denoted $\mathcal{N}^{\bbZ}\left( \mu, \sigma \right)$ is the distribution with sample space $\bbZ$ such that,
    $$
    \forall x \in \bbZ :
    \mathcal{N}^{\bbZ}\left( \mu, \sigma \right)\left( x \right)
    :=
    \int_{ ( x - \frac{1}{2}, x + \frac{1}{2} ] }\mathcal{N}\left( \mu, \sigma \right)\left( x \right)
    \enspace ,
    $$
    where $\mathcal{N}\left( \mu, \sigma \right)$ is the probability density function of the continuous Gaussian distribution (Definition \ref{definition:continuous_gaussian}) with mean $\mu$ and standard deviation $\sigma$.
\end{definition}

\begin{definition} [Rounded truncated Gaussian distribution] \label{definition:rounded_truncated_gaussian}
    Let $\mu \in \bbR$, $\sigma \in \bbR_{\geq 0}$ and $B \in \Nat$. The rounded $B$-truncated Gaussian distribution with mean $\mu$ and standard deviation $\sigma$, denoted $\mathcal{N}^{B}\left( \mu, \sigma \right)$ is the distribution with sample space $\{ -B, \cdots, -1, 0, 1, \cdots, B \}$ such that,
    $$
    \forall x \in \{ -B, \cdots, -1, 0, 1, \cdots, B \} :
    $$
    $$
    \mathcal{N}^{B}\left( \mu, \sigma \right)\left( x \right)
    :=
    \frac
    {
    \mathcal{N}^{\bbZ}\left( \mu, \sigma \right)\left( x \right)
    }
    {
    \sum_{ y \in \{ -B, \cdots, 1, 0, 1, \cdots, B \} }
    \mathcal{N}^{\bbZ}\left( \mu, \sigma \right)\left( y \right)
    }
    \enspace ,
    $$
    where $\mathcal{N}^{\bbZ}\left( \mu, \sigma \right)$ is the probability density function of the rounded Gaussian distribution (Definition \ref{definition:rounded_gaussian}) with mean $\mu$ and standard deviation $\sigma$.
\end{definition}

We use the following known statistical relations between the rounded Gaussian distribution and variations of the Gaussian distribution.

\begin{fact} [Distance properties of the rounded Gaussian] \label{fact:rounded_gaussian_properties}
    There exists a positive absolute constant $c_{G} \in \bbR_{> 0}$ such that, for every $B \in \Nat$, all of the above total variation distances are bounded by $\frac{ c_{G} }{ B }$:
    $$
    \norm{
    \mathcal{D}\left( \mathcal{N}^{ \bbZ }\left( 0, B - 1 \right) \right)
    -
    \mathcal{D}\left( \mathcal{N}^{ \bbZ }\left( 0, B \right) \right)
    }_{TV}
    \enspace ,
    $$
    $$
    \norm{
    \mathcal{D}\left( \mathcal{N}^{ \bbZ }\left( 0, B \right) \right)
    -
    \mathcal{D}\left( \mathcal{N}^{ B }\left( 0, B \right) \right)
    }_{TV}
    \enspace ,
    $$
    $$
    \norm{
    \mathcal{D}\left( \mathcal{N}^{ B }\left( 0, B \right) \right)
    -
    \mathcal{D}\left( D_{B} \right)
    }_{TV}
    \enspace .
    $$
\end{fact}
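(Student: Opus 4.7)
The plan is to prove the three total variation bounds in turn, each by reducing to a standard Gaussian estimate, and to take $c_G$ to be the maximum of the three implicit constants. Throughout I use that total variation distance equals half the $L^1$ distance of the mass functions, that deterministic post-processing (such as rounding to the nearest integer) cannot increase total variation distance, and that conditioning a distribution on an event $S$ costs at most $\Pr[\overline{S}]$ in total variation from the original.

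For the first bound, I would compare the continuous Gaussians $\mathcal{N}(0,B-1)$ and $\mathcal{N}(0,B)$ via the fundamental theorem of calculus applied to $\sigma \mapsto \mathcal{N}(0,\sigma)$. A direct computation shows that $\int \abs{\partial_\sigma \mathcal{N}(0,\sigma)(x)}\,dx = C/\sigma$ for an absolute constant $C$ (after the substitution $u = x/\sigma$ the integrand reduces to $\frac{1}{\sigma\sqrt{2\pi}}e^{-u^2/2}\abs{u^2-1}$), so integrating over $\sigma \in [B-1,B]$ gives a continuous $L^1$ distance of $C\log(B/(B-1)) = O(1/B)$. Since both rounded distributions arise by applying the same deterministic rounding map to the continuous ones, the bound transfers to $\norm{\mathcal{N}^{\bbZ}(0,B-1) - \mathcal{N}^{\bbZ}(0,B)}_{TV}$. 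For the second bound, note that $\mathcal{N}^{B}(0,B)$ is exactly the conditional of $\mathcal{N}^{\bbZ}(0,B)$ on the event $\abs{x}\le B$, so the total variation distance equals $p := \Pr_{X \sim \mathcal{N}^{\bbZ}(0,B)}[\abs{X} > B]$; I would bound $p$ by the continuous tail (up to an $O(1/B)$ rounding correction) together with a standard Gaussian tail estimate such as Mills' ratio.

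For the third bound, both distributions are supported on $\{-B,\ldots,B\}$, and I plan to estimate each unnormalized weight $\mathcal{N}^{\bbZ}(0,B)(x) = \int_{x-1/2}^{x+1/2}\frac{1}{B\sqrt{2\pi}}e^{-y^2/(2B^2)}\,dy$ by a midpoint rule, which yields a multiplicative approximation of the Gaussian density with pointwise relative error $O(1/B^2)$ uniformly on the support. Comparing against the weights $e^{-\pi x^2/B^2}$ defining $D_B$ (after matching the Gaussian-parameter convention so that scales agree) and controlling both normalizing sums, the $2B+1$ pointwise discrepancies accumulate to an $L^1$ total of $O(1/B)$, as required.

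The main technical care lies in step two, where the Gaussian-parameter conventions must be aligned so that the tail of $\mathcal{N}^{\bbZ}(0,B)$ beyond radius $B$ is genuinely $\exp(-\Omega(B^2))$ rather than a constant: concretely, the proof must interpret the parameter $B$ in $\mathcal{N}^{B}(0,B)$ consistently with $D_B$, under which $B$ is a Gaussian parameter corresponding to standard deviation $B/\sqrt{2\pi}$, so that the event $\abs{X} > B$ lies in a $\sqrt{2\pi}$-standard-deviation tail. Once this convention alignment is in place, the three inequalities combine via the triangle inequality to yield the claimed uniform bound $c_G/B$.
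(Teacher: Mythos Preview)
The paper states this result as a \emph{Fact} without proof, so there is no paper argument to compare against; I evaluate your proposal on its own terms.

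Your treatment of the first bound is clean and correct: differentiating the continuous density in $\sigma$, integrating over $[B-1,B]$, and invoking the data-processing inequality for rounding is exactly the right move.

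There is, however, a genuine gap in bounds two and three that you correctly flag but do not actually resolve. Under the paper's explicit Definition~\ref{definition:continuous_gaussian}, the second argument of $\mathcal{N}(\cdot,\cdot)$ is the \emph{standard deviation}, so $\mathcal{N}^{\bbZ}(0,B)$ has standard deviation $B$ and $\Pr[|X|>B]\approx 2\Phi(-1)\approx 0.317$ is an absolute constant---hence the second TV distance cannot be $O(1/B)$. Your proposed fix, reinterpreting $B$ as a ``Gaussian parameter'' so that the standard deviation is $B/\sqrt{2\pi}$, still leaves the truncation at a fixed ($\sqrt{2\pi}\approx 2.5$) number of standard deviations, and that tail is again an absolute constant ($\approx 0.012$), not $O(1/B)$. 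The same mismatch bites in the third bound: $D_B$ has weights $e^{-\pi x^2/B^2}$, corresponding to standard deviation $B/\sqrt{2\pi}$, while $\mathcal{N}^B(0,B)$ has standard deviation $B$; two (truncated) Gaussians whose scales differ by a fixed multiplicative constant have constant TV distance, and no midpoint-rule argument can make it $O(1/B)$.

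In short, your instinct that ``the main technical care lies in step two'' is right, but the issue is not one of care---as literally written, the second and third inequalities are false, and your proof inherits this. A correct version of the Fact (and of your argument) would need either the truncation radius to be a growing multiple of the standard deviation, or the $\sigma$-parameter in $\mathcal{N}^{\bbZ}(0,\sigma)$ and $\mathcal{N}^B(0,\sigma)$ to be chosen as $B/\sqrt{2\pi}$ so that it matches $D_B$ exactly; under the latter reading your midpoint-rule plan for bound three goes through, and bound two still requires the former adjustment.
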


\begin{fact} [Multivariate discrete Gaussian vector norm concentration bounds] \label{fact:gaussian_vector_norm_concentration}
    There exists an absolute positive constant $c_{D} \in \bbR_{> 0}$ such that for all $B, n \in \Nat$,
    $$
    \Pr_{ u \gets D_{B}^{n} }
    \Big[
    \frac{\sqrt{n} \cdot B}{\sqrt{2}}
    \leq
    \norm{u}
    \leq
    \left( 2 - \frac{1}{\sqrt{2}} \right)\cdot \sqrt{n} \cdot B
    \Big]
    \geq
    1 - e^{-c_{D} \cdot n}
    \enspace .
    $$
\end{fact}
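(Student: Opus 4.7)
The plan is to prove the concentration bound by treating the squared norm as a sum of i.i.d.\ bounded random variables and applying a Chernoff/Hoeffding-type inequality. Concretely, write $u = (u_1, \ldots, u_n)$ where the coordinates $u_1, \ldots, u_n$ are i.i.d.\ samples from $D_B$, so that
\[
\norm{u}^2 \;=\; \sum_{i=1}^{n} u_i^2,
\]
and observe that each summand $u_i^2$ lies in the bounded interval $[0, B^2]$ (since the support of $D_B$ is $\{-B, \ldots, B\}$). This makes $\norm{u}^2$ a sum of i.i.d.\ bounded random variables, which is the canonical setting for exponential concentration.

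First I would compute (or crudely estimate) the expectation $\mu := \mathbb{E}_{x \sim D_B}[x^2]$ of a single squared coordinate. Using the explicit density proportional to $e^{-\pi |x|^2 / B^2}$ together with the standard Gaussian integral $\int_{-\infty}^{\infty} e^{-\pi x^2/B^2}\, dx = B$ and $\int_{-\infty}^{\infty} x^2 e^{-\pi x^2/B^2}\, dx = B^3/(2\pi)$, one can compare the discrete truncated sums appearing in the normalization and in $\sum_{y} y^2 e^{-\pi y^2 /B^2}$ to their continuous counterparts. The truncation and discretization error contribute only lower-order terms, yielding a tight estimate of $\mu$ that pins down the typical value of $\norm{u}^2$ as $n\mu$, up to a multiplicative constant.

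Second, I would apply Hoeffding's inequality to the random variable $S := \sum_i u_i^2$ with each summand bounded by $B^2$. This yields, for any deviation $\delta > 0$,
\[
\Pr\Big[\, \big| S - n\mu \big| \;\geq\; n \delta B^2 \,\Big]
\;\leq\; 2 \cdot e^{-2 n \delta^2},
\]
which is exactly the exponential-in-$n$ concentration needed to match the $1 - e^{-c_D n}$ form in the statement. Choosing $\delta$ to be a small absolute constant fits the desired window $\big[\frac{\sqrt{n} B}{\sqrt{2}}, \big(2 - \frac{1}{\sqrt{2}}\big)\sqrt{n} B\big]$ for $\norm{u}$ (equivalently, for $\norm{u}^2$ sitting inside $\big[\tfrac12, \big(2-\tfrac{1}{\sqrt{2}}\big)^2\big] \cdot n B^2$), after translating between $S$ and $\sqrt{S}$ via monotonicity of the square root. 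Finally one sets $c_D$ to be the resulting constant $2\delta^2$.

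The main obstacle is the bookkeeping around $\mu$: verifying that the two-sided window specified in the statement comfortably contains the true mean $n \mu$ of the squared norm with slack at least a constant fraction of $nB^2$, so that Hoeffding can absorb the deviation. Once the constants are pinned down (and $c_D$ is defined as the resulting exponent), the lower tail $\norm{u} \geq \sqrt{n}B/\sqrt{2}$ and upper tail $\norm{u} \leq (2-1/\sqrt{2})\sqrt{n}B$ both follow from the single two-sided concentration event, and a union bound gives the claimed probability $1 - e^{-c_D n}$.
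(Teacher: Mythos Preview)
The paper states this result as a Fact without proof, so there is no argument in the paper to compare against. Your overall plan---writing $\norm{u}^2 = \sum_i u_i^2$ as a sum of i.i.d.\ random variables bounded in $[0,B^2]$ and applying Hoeffding---is the natural route and is sound in principle.

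However, the ``main obstacle'' you flag is not mere bookkeeping; it is fatal to the lower bound. For $D_B$ with mass proportional to $e^{-\pi x^2/B^2}$, the second moment you yourself compute is $\mu = \bbE[x^2] \approx B^2/(2\pi) \approx 0.159\,B^2$ (truncation at $\pm B$ only decreases it further), so $\bbE[\norm{u}^2] \approx 0.159\,nB^2$. The window for $\norm{u}^2$ required by the statement is $\big[\tfrac12\,nB^2,\ (2-\tfrac{1}{\sqrt{2}})^2\,nB^2\big] \approx [0.5,\ 1.67]\cdot nB^2$, whose lower endpoint lies strictly \emph{above} the mean. A concentration inequality therefore cannot yield $\Pr[\norm{u}^2 \geq nB^2/2] \geq 1 - e^{-\Omega(n)}$; on the contrary, that event has exponentially \emph{small} probability (already for $B=1$ one has $\bbE[x^2]\approx 0.08$, far below $1/2$). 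The lower bound in the Fact appears to be misstated in the paper, and no concentration argument can establish it as written.

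Your argument does establish the upper bound: since $(2-1/\sqrt{2})^2\,nB^2 \approx 1.67\,nB^2$ sits well above $n\mu$, Hoeffding gives $\Pr[\norm{u}^2 > (2-1/\sqrt{2})^2\,nB^2] \leq e^{-\Omega(n)}$ exactly as you outline. This is in fact the only part of the Fact the paper actually invokes later (in the soundness step of the $\GCVP$ reduction, only the weaker upper tail $\norm{u} \leq 2\sqrt{n}\,\Tilde{B}$ is used), so your approach suffices for everything the paper needs.
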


We also use the following standard property of total variation distance between functions of independent r.v.'s.

\begin{fact} [Total variation distance between switched-input random variable] \label{fact:tv_distance_switched_rv}
    Let $X, Y$ independent random variables and let $f, g$ (not necessarily deterministic) functions on the union of the supports of $X$ and $Y$. Then, there exists a number $d_{\left( X, Y \right)} \in [ -\norm{ \mathcal{D}\left( X \right) - \mathcal{D}\left( Y \right) }_{TV}, \norm{ \mathcal{D}\left( X \right) - \mathcal{D}\left( Y \right) }_{TV} ]$, such that
    $$
    \norm{ \mathcal{D}\left( f\left( X \right) \right) - \mathcal{D}\left( g\left( X \right) \right) }_{TV}
    =
    \norm{ \mathcal{D}\left( f\left( Y \right) \right) - \mathcal{D}\left( g\left( Y \right) \right) }_{TV}
    +
    2 \cdot d_{\left( X, Y \right)}
    \enspace .
    $$
\end{fact}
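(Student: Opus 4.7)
The plan is to reduce the claim to two standard facts about total variation distance: (i) the triangle inequality, since $\norm{\cdot - \cdot}_{TV}$ is a metric on the space of probability distributions, and (ii) the data processing inequality, which asserts that for any (possibly randomized) map $h$ defined on the union of the supports of random variables $A$ and $B$,
\begin{equation*}
    \norm{ \mathcal{D}(h(A)) - \mathcal{D}(h(B)) }_{TV}
    \;\leq\;
    \norm{ \mathcal{D}(A) - \mathcal{D}(B) }_{TV} \enspace .
\end{equation*}
For deterministic $h$ this is immediate: writing the total variation distance as a sum over outputs and grouping terms by preimages gives the bound. For randomized $h$, write $h(a) = h'(a, R)$ for a deterministic $h'$ and an auxiliary random variable $R$ drawn independently of both $A$ and $B$; then by independence $\norm{ \mathcal{D}(A, R) - \mathcal{D}(B, R) }_{TV} = \norm{ \mathcal{D}(A) - \mathcal{D}(B) }_{TV}$, and applying the deterministic version to the pair $(A, R)$ versus $(B, R)$ yields the claim.

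\textbf{Main argument.} Set $T := \norm{ \mathcal{D}(X) - \mathcal{D}(Y) }_{TV}$. Applying the data processing inequality separately to $f$ and to $g$ gives
\begin{equation*}
    \norm{ \mathcal{D}(f(X)) - \mathcal{D}(f(Y)) }_{TV} \leq T
    \; , \;
    \norm{ \mathcal{D}(g(X)) - \mathcal{D}(g(Y)) }_{TV} \leq T \enspace .
\end{equation*}
Then, using the triangle inequality on the four distributions $\mathcal{D}(f(X))$, $\mathcal{D}(f(Y))$, $\mathcal{D}(g(Y))$, $\mathcal{D}(g(X))$ (in that order),
\begin{equation*}
    \norm{ \mathcal{D}(f(X)) - \mathcal{D}(g(X)) }_{TV}
    \leq
    \norm{ \mathcal{D}(f(X)) - \mathcal{D}(f(Y)) }_{TV}
    + \norm{ \mathcal{D}(f(Y)) - \mathcal{D}(g(Y)) }_{TV}
    + \norm{ \mathcal{D}(g(Y)) - \mathcal{D}(g(X)) }_{TV} \enspace ,
\end{equation*}
and combining with the two previous bounds yields
\begin{equation*}
    \norm{ \mathcal{D}(f(X)) - \mathcal{D}(g(X)) }_{TV}
    -
    \norm{ \mathcal{D}(f(Y)) - \mathcal{D}(g(Y)) }_{TV}
    \leq
    2T \enspace .
\end{equation*}
The symmetric argument (swapping the roles of $X$ and $Y$) gives the matching lower bound $-2T$.

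\textbf{Extracting $d_{(X,Y)}$.} Define
\begin{equation*}
    d_{(X,Y)}
    :=
    \tfrac{1}{2}\left(
    \norm{ \mathcal{D}(f(X)) - \mathcal{D}(g(X)) }_{TV}
    -
    \norm{ \mathcal{D}(f(Y)) - \mathcal{D}(g(Y)) }_{TV}
    \right) \enspace .
\end{equation*}
By the bounds above, $d_{(X,Y)} \in [-T, T]$, and by construction the desired equality
$\norm{ \mathcal{D}(f(X)) - \mathcal{D}(g(X)) }_{TV} = \norm{ \mathcal{D}(f(Y)) - \mathcal{D}(g(Y)) }_{TV} + 2 \cdot d_{(X,Y)}$
holds identically. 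The only subtle point in the proof is the randomized-function case of the data processing inequality, but this is cleanly resolved by the product-space trick above, which crucially uses that the internal randomness of $f$ and $g$ is independent of $X$ and $Y$; all remaining steps are direct applications of the metric and contraction properties of total variation distance.
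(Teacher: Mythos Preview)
Your proof is correct. The paper states this as a ``Fact'' without proof, treating it as a standard property of total variation distance; your argument via the data processing inequality and the triangle inequality is exactly the natural way to justify it, and the definition of $d_{(X,Y)}$ as half the difference makes the equality hold trivially once the two-sided bound $|d_{(X,Y)}| \leq T$ is established.
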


\paragraph{Strengthening of the central limit theorem.}
The central limit theorem (CLT), one of the fundamental theorems in statistics, tells us that when we take $\kappa \in \Nat$ independent and continuous random variables $X_{1}, \cdots, X_{\kappa}$ with mean $0$, the random variable $\frac{ \sum_{i \in [\kappa]} X_{i} }{ \sqrt{\kappa} }$ approaches the Gaussian distribution, as $\kappa$ goes to infinity. Our main barrier with using the CLT is the combination of two facts: (1) The CLT needs division for convergence, i.e. it looks at the variable $\frac{ \sum_{i \in [\kappa]} X_{i} }{ \sqrt{\kappa} }$ rather than $\sum_{i \in [\kappa]} X_{i}$, and (2) As part of the Technical Overview (Section \ref{section:overview}) and Section \ref{section:gcvp_reduction} we explain why we need to sample from the Gaussian distribution using a sequence of invertible functions. As we are dealing with integers only, this means that in order to use the CLT we need to execute integer (i.e., rounded) division, which is not invertible. Other, more minor problems are that the CLT guarantees convergence in Wasserstein-1 distance and we would like convergence in the stronger total variation distance, and finally, the CLT does not tell us the rate at which the samples approach the Gaussian distribution, as this depends, among other things, on the probability density functions of the summed variables $\{ X_{i} \}_{i \in [\kappa]}$. We will use the following strengthening Theorem from \cite{chen2010normal}:
%The Berry-Esseen Theorem is a quantified (i.e., stronger) version of the central limit theorem -- it bounds the Wasserstein-1 distance between the sum $S := \sum_{i \in [\kappa]} X_{i}$ and the Gaussian distribution, as a function of $\kappa$ and the density functions of the variables $\{ X_{i} \}_{i \in [\kappa]}$. The original formulation of the Berry-Essen Theorem is insufficient for our needs, and we will use the following strengthening Theorem from \cite{chen2010normal}:

\begin{theorem} [Discrete un-normalized Berry-Esseen bound for total variation distance -- Theorem 7.4 from \cite{chen2010normal}] \label{theorem:discrete_TV_distance_berry_esseen}
    Let $\kappa \in \Nat$ and let $\{ X_{i} \}_{i \in [\kappa]}$ independent random variables with finite supports contained in $\bbZ$, and denote $S := \sum_{i \in [\kappa]} X_{i}$. Denote,
    \begin{itemize}
        \item
        For every $i \in [\kappa]$, the expectation of $X_{i}$, $\mu_{i} := \bbE\left( X_{i} \right)$.

        \item 
        For every $i \in [\kappa]$, the variance of $X_{i}$, $\sigma_{i}^{2} := \Var\left( X_{i} \right) := \bbE\left( X_{i}^{2} \right) - \bbE\left( X_{i} \right)^{2}$.

        \item 
        The variance of the sum $S$, $\sigma^{2} := \Var\left( S \right)$, which equals $\sum_{i \in [\kappa]} \sigma_{i}^{2}$, due to the random variables being independent.

        \item 
        For every $i \in [\kappa]$, the third moment of $X_{i}$, $\gamma_{i} := \frac{ \bbE_{X_{i}}\left( |X_{i} - \mu_{i}|^{3} \right) }{ \sigma^{3} }$.

        \item 
        The expectation of $S$, $\mu := \sum_{ i \in [\kappa] }\mu_{i}$.

        \item 
        The sum of third moments $\gamma := \sum_{ i \in [\kappa] }\gamma_{i}$.

        \item 
        For every $i \in [ \kappa ]$, the random variable $S^{(i)} := S - X_{i}$, that is, the original full sum $S$, excluding the $i$-th variable.
    \end{itemize}
    Then we have the following upper bound on the total variation distance,
    $$
    \norm{ \mathcal{D}\left( S \right) - \mathcal{D}\left( \mathcal{N}^{ \bbZ }\left( \mu, \sigma \right) \right) }_{TV}
    $$
    $$
    \leq
    \frac{3}{2} \cdot \sigma \cdot \sum_{i \in [\kappa]}
    \left(
    \left( \gamma_{i} + \frac{ 2 \cdot \sigma_{i}^{2} }{ 3 \cdot \sigma^{3} } \right)
    \cdot
    \norm{ \mathcal{D}\left( S^{(i)} \right) - \mathcal{D}\left( S^{(i)} + 1 \right) }_{TV}
    \right)
    $$
    $$
    +
    \left( 5 + 3\sqrt{\frac{\pi}{8}} \right)\cdot \gamma
    +
    \frac{1}{\sigma \cdot 2 \sqrt{2\pi}}
    \enspace .
    $$
\end{theorem}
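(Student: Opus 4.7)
The plan is to adapt Stein's method for Gaussian approximation to the discrete integer-valued setting, and then to pass from smooth test functions (for which Stein-type bounds are classical) to indicator functions (required for total variation distance) using the local smoothness quantities $\norm{\mathcal{D}(S^{(i)}) - \mathcal{D}(S^{(i)}+1)}_{TV}$ that appear in the statement.

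First I would fix an arbitrary $A \subseteq \bbZ$ and consider the centered test function $h := \I_A - \bbE_{\mathcal{N}^{\bbZ}(\mu,\sigma)}[\I_A]$; taking the supremum over $A$ recovers the total variation distance in question. I would then solve (explicitly, via the standard summation formula) the discrete Stein equation
$$\sigma^{2}\bigl(f_{h}(x+1) - f_{h}(x)\bigr) - (x - \mu)\, f_{h}(x) \;=\; h(x),$$
and record the classical bounds on $\|f_{h}\|_{\infty}$ and on $\|f_{h}(\cdot+1) - f_{h}(\cdot)\|_{\infty}$ in terms of $1/\sigma$ and $1/\sigma^{2}$ respectively. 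These are the integer analogs of the well-known regularity bounds for the continuous Gaussian Stein equation, and the sum $\bbE[h(S)]$ can then be read off from $\bbE[\sigma^{2}(f_{h}(S+1) - f_{h}(S)) - (S-\mu)f_{h}(S)]$.

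Next I would expand $\bbE[(S - \mu)\, f_{h}(S)]$ using independence: writing $S = S^{(i)} + X_{i}$ and conditioning on $S^{(i)}$, each summand $\bbE[(X_{i} - \mu_{i})\, f_{h}(S^{(i)} + X_{i})]$ can be rewritten as a telescoping sum of finite differences of $f_{h}$ around $S^{(i)}$. The first-order term, after summing over $i$, cancels against $\sigma^{2}\,\bbE[f_{h}(S+1) - f_{h}(S)]$ from the Stein equation itself, and the residual is a third-moment-type quantity whose natural scale is $\gamma_{i}$; summed over $i$, this yields the $\bigl(5 + 3\sqrt{\pi/8}\bigr)\gamma$ term. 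The $1/(\sigma \cdot 2\sqrt{2\pi})$ term, finally, absorbs the intrinsic discretization gap between the rounded Gaussian $\mathcal{N}^{\bbZ}(\mu,\sigma)$ appearing in the conclusion and the continuous Gaussian with respect to which the Stein characterization is exact; this gap is $\Theta(1/\sigma)$ by a direct density comparison.

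The main obstacle is that, for an indicator test function $h = \I_{A}$, the Stein solution $f_{h}$ is not uniformly smooth on the scale of $1/\sigma^{2}$ --- its finite differences are only of order $1/\sigma$ --- so a naive bound on the telescoping remainder loses a factor of $\sigma$ and would only recover a Kolmogorov/Wasserstein bound, not total variation. To repair this, after conditioning on $S^{(i)}$, I would replace $f_{h}(S^{(i)} + k)$ by $f_{h}(S^{(i)})$ and control the resulting error by $|k|\cdot \norm{\mathcal{D}(S^{(i)}) - \mathcal{D}(S^{(i)}+1)}_{TV}$ using a telescoping triangle-inequality argument on unit shifts of $S^{(i)}$. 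This is the step that exploits local smoothness of the convolution $S^{(i)}$ to convert the $1/\sigma$ deficit into the explicit smoothness factor, and it is exactly what produces the first term of the bound (with the coefficient $\gamma_{i} + 2\sigma_{i}^{2}/(3\sigma^{3})$ tracking, respectively, the third-moment cost and the second-moment cost of the shift). Assembling the three contributions yields the stated inequality.
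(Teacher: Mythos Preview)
The paper does not prove this theorem at all: it is quoted verbatim in the Preliminaries (Section~\ref{section:preliminaries}) as ``Theorem 7.4 from \cite{chen2010normal}'' and used as a black box in the proof of Lemma~\ref{lemma:distance_between_discrete_uniform_sum_and_discrete_gaussian}. There is therefore no proof in the paper to compare your proposal against.

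That said, your sketch is broadly the right shape for how such local-limit-theorem bounds are obtained in the Stein's-method literature (and in particular in the Chen--Goldstein--Shao monograph you are implicitly reconstructing): solve the Stein equation for the test function, expand $\bbE[(S-\mu)f(S)]$ via leave-one-out conditioning, and trade the missing smoothness of the Stein solution for indicators against the empirical smoothness of $S^{(i)}$ under unit shifts. The one place your outline is vague is the accounting that produces the exact constants $\tfrac{3}{2}$, $5+3\sqrt{\pi/8}$, and $\tfrac{1}{2\sqrt{2\pi}}$; getting these requires the specific sharp bounds on the Stein solution and a careful bookkeeping of the telescoping remainder, not just the qualitative $O(1/\sigma)$ and $O(1/\sigma^2)$ estimates you invoke. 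If you only need the result as the present paper uses it---to conclude an $O(1/\sqrt{\kappa})$ bound for sums of i.i.d.\ bounded integer variables---then the constants are irrelevant and your sketch, once fleshed out, would suffice.
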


\section{New Notions and Definitions} \label{section:new_notions_and_definitions}
In this section we define the new notions that this work presents. We start with the computational model and then a comparison to standard quantum computation. We then proceed to defining sequentially invertible circuits and the sequentially invertible statistical difference problem $\SISD$.

\subsection{Quantum Computation with Computable Order Interference} \label{subsection:new_definitions_order_interference}
At the heart of our model is the order interference (OI) oracle, which gets as input an amplification parameter $\lambda \in \Nat$, a set of $m \in \Nat$ unitaries $\{ U_{i} \}_{i \in [m]}$ (all acting on the same number of qubits, $n \in \Nat$) and an $n$-qubit state $\ket{\psi}$. The oracle then makes an attempt to execute an OI transformation, and if the OI transformation succeeds, we get the OI state. The relevant definitions follow. 

\begin{definition} [The Uniform Order Interference State] \label{definition:uniform_order_interference_state}
    Let $\{ U_{i} \}_{i \in [m]}$ a set of $m \in \Nat$ unitary transformations which all operate on the same number of qubits $n \in \Nat$ and let $\ket{\psi}$ an $n$-qubit quantum state.
    
    The uniform order interference state of the pair $\left( \{ U_{i} \}_{i \in [m]}, \ket{\psi} \right)$ is
    $$
    \OI\left( \{ U_{i} \}_{i \in [m]}, \ket{\psi} \right)
    :=
    \sum_{\sigma \in S_{m}} \left( \prod_{i \in [m]} U_{\sigma^{-1}\left( i \right)} \right) \cdot \ket{\psi} \enspace .
    $$
\end{definition}

After defining the uniform OI state, we define the uniform OI oracle as follows.

\begin{definition} [Computable Uniform Order Interference Oracle] \label{definition:computable_uniform_order_interference_oracle}
    The computable uniform order interference oracle, denoted $\OOI$, is an oracle with input triplet $\left( \{ U_{i} \}_{i \in [m]}, R, \lambda \right)$, such that,
    \begin{itemize}
        \item
        $\{ U_{i} \}_{i \in [m]}$ is a set of $m \in \Nat$ unitary circuits which all operate on the same number of qubits $n \in \Nat$,

        \item
        $R$ is an $n$-qubit quantum register, and

        \item 
        $\lambda \in \Nat$ is a natural number.
    \end{itemize}

    Given valid input $\left( \{ U_{i} \}_{i \in [m]}, R, \lambda \right)$, let $R'$ a $n'$-qubit register such that the joint system $\left( R, R' \right)$ is in a pure state, written as,
    $$
    \sum_{ y \in \{ 0, 1 \}^{n'} }
    \alpha_{y} \cdot \ket{\psi_{y}}_{R}\ket{y}_{R'}
    \enspace .
    $$

    The oracle call takes time complexity $\sum_{i \in [m]}|U_{i}| + \lambda$. At the end of the oracle execution it returns a success/failure bit $b \in \{ 0, 1 \}$, where the success probability is
    $$
    \min_{ y \in \{ 0, 1 \}^{n'} }
    \left(
    \left(
    \frac
    { \sum_{ x \in \{ 0, 1 \}^{n} } |\sum_{ \sigma \in S_{m} } \alpha_{y, x, \sigma}| }
    { \sum_{ x \in \{ 0, 1 \}^{n} } \sum_{ \sigma \in S_{m} } |\alpha_{y, x, \sigma}| }
    \right)
    \cdot 
    \frac
    { \frac{ \norm{\OI\left( \{ U_{i} \}_{i \in [m]}, \ket{\psi_{y}} \right)} }{ m! } }
    { \frac{ \norm{\OI\left( \{ U_{i} \}_{i \in [m]}, \ket{\psi_{y}} \right)} }{ m! } + \frac{1}{\lambda} } 
    \right)
    \enspace ,
    $$
    where for $y \in \{ 0, 1 \}^{n'}$, $\sigma \in S_{m}$ we write
    $$
    \prod_{i \in [m]} U_{\sigma^{-1}\left( i \right)} \cdot \ket{\psi_{y}}
    :=
    \sum_{ x \in \{ 0, 1 \}^{n} } \alpha_{y, x, \sigma} \ket{x}
    \enspace .
    $$
    If the call succeeded, the state in $\left( R, R' \right)$ is the normalization of,
    $$
    \sum_{ y \in \{ 0, 1 \}^{n'} }
    \alpha_{y} \cdot \OI\left( \{ U_{i} \}_{i \in [m]}, \ket{\psi_{y}} \right)_{R}\ket{y}_{R'}
    \enspace ,
    $$
    and if it failed, the state in $\left( R, R' \right)$ may be arbitrary.

    - The unitaries $\{ U_{i} \}_{i \in [m]}$ can also be given in the form of controlled oracle access. In that case, the complexity $|U_{i}|$ for $i \in [m]$ is the complexity of making an oracle call to the $(n + 1)$-qubit unitary which is the controlled $U_{i}$. 
\end{definition}

To make using an OI oracle easier, when the $n$-qubit input register $R$ to the oracle is in a pure state, the outcome of an oracle access is simplified.

\begin{corollary} [Corollary - Uniform OI on pure states] \label{corollary_definition:oi_pure_states}
    Let $\left( \{ U_{i} \}_{i \in [m]}, R, \lambda \right)$ such that $\{ U_{i} \}_{i \in [m]}$ is a set of $m \in \Nat$ unitary circuits which all operate on the same number of qubits $n \in \Nat$, $\lambda \in \Nat$ is a natural number and $R$ is an $n$-qubit register in some pure state $\ket{\psi}$.

    Then, the oracle call $\OOI\left( \{ U_{i} \}_{i \in [m]}, R, \lambda \right)$ succeeds with probability,
    $$
    \left(
    \frac
    { \sum_{ x \in \{ 0, 1 \}^{n} } |\sum_{ \sigma \in S_{m} } \alpha_{x, \sigma}| }
    { \sum_{ x \in \{ 0, 1 \}^{n} } \sum_{ \sigma \in S_{m} } |\alpha_{x, \sigma}| }
    \right)
    \frac
    { \frac{ \norm{\OI\left( \{ U_{i} \}_{i \in [m]}, \ket{ \psi } \right)} }{ m! } }
    { \frac{ \norm{\OI\left( \{ U_{i} \}_{i \in [m]}, \ket{ \psi } \right)} }{ m! } + \frac{1}{\lambda} } 
    \enspace ,
    $$
    where for $\sigma \in S_{m}$ we write
    $$
    \prod_{i \in [m]} U_{\sigma^{-1}\left( i \right)} \cdot \ket{\psi} := \sum_{ x \in \{ 0, 1 \}^{n} } \alpha_{x, \sigma} \ket{x}
    \enspace .
    $$
    and with (at least) the same probability, the state in $R$ after the oracle call is the normalization of the uniform OI state, $\OI\left( \{ U_{i} \}_{i \in [m]}, \ket{ \psi } \right)$. In case the call fails, the state in $R$ may be arbitrary.
\end{corollary}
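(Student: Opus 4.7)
The plan is to specialize Definition \ref{definition:computable_uniform_order_interference_oracle} to the case where the input register $R$ is already in a pure state $\ket{\psi}$. First I would invoke the finite-dimensional purification convention from the Preliminaries to write the joint system $(R, R')$ as a pure state, where we are free to choose $R'$. Since $R$ itself is pure, the joint state must be a product $\ket{\psi}_R \otimes \ket{\phi}_{R'}$ for some $\ket{\phi}$. The cleanest choice is to take $R'$ to be a $0$-qubit register (so $n' = 0$ and the only binary string $y$ is the empty string), which gives the decomposition
$$
\sum_{ y \in \{ 0, 1 \}^{n'} }
\alpha_{y} \cdot \ket{\psi_{y}}_{R}\ket{y}_{R'}
= 1 \cdot \ket{\psi}_{R} \ket{\,}_{R'},
$$
in the notation of Definition \ref{definition:computable_uniform_order_interference_oracle}, with $\ket{\psi_y} = \ket{\psi}$ and $\alpha_y = 1$ at the single relevant $y$.

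Next I would push this decomposition through the formula for the success probability. Because there is only one value of $y$, the minimum over $y \in \{0,1\}^{n'}$ collapses to a single term, and because $\ket{\psi_y}$ does not depend on $y$ the coefficients $\alpha_{y, x, \sigma}$ reduce to the claimed coefficients $\alpha_{x, \sigma}$ defined by $\prod_{i \in [m]} U_{\sigma^{-1}(i)} \cdot \ket{\psi} = \sum_x \alpha_{x,\sigma} \ket{x}$. Substituting these two simplifications into the probability expression of Definition \ref{definition:computable_uniform_order_interference_oracle} yields exactly the formula stated in the corollary. For the post-success state, the same substitution into
$$
\sum_{ y \in \{ 0, 1 \}^{n'} }
\alpha_{y} \cdot \OI\left( \{ U_{i} \}_{i \in [m]}, \ket{\psi_{y}} \right)_{R}\ket{y}_{R'}
$$
leaves only the term proportional to $\OI(\{U_i\}_{i \in [m]}, \ket{\psi})_R$ tensored with the trivial $R'$, so after normalization the state in $R$ is precisely the normalized uniform OI state.

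The ``at least'' clause is handled by noting that the decomposition above is one valid choice of purification; any other decomposition of the same physical input $\ket{\psi}_R$ (e.g., embedding $R$ into a larger already-entangled ancilla, then disregarding it) could only introduce additional $y$-indexed terms, and the $\min_y$ in Definition \ref{definition:computable_uniform_order_interference_oracle} is monotone: the success probability for the pure-state input is at least the value obtained at the trivial decomposition, which is the stated formula. There is no serious technical obstacle here; the entire argument is a direct unfolding of the definition, and the only point that requires care is confirming that the $\min_y$ is consistent across the different admissible purifications of the same pure state, which follows because $\ket{\psi_y} = \ket{\psi}$ uniformly whenever $\alpha_y \neq 0$.
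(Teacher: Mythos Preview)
Your proposal is correct and matches the paper's treatment: the paper states this corollary without proof, treating it as an immediate specialization of Definition~\ref{definition:computable_uniform_order_interference_oracle}, which is exactly what your unfolding accomplishes. One small point: your reading of the ``at least'' clause is slightly off. The corollary asserts the success probability \emph{equals} the displayed formula (indeed, when $R$ is pure every admissible purification is a product state, so $\ket{\psi_y}=\ket{\psi}$ for all relevant $y$ and the $\min_y$ is automatically attained); the parenthetical ``(at least)'' refers instead to the probability that the post-call state in $R$ is the normalized OI state, which is at least the success probability because on failure the state is arbitrary and could happen to coincide with the target. This does not affect the validity of your argument.
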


\paragraph{General form of OI states and oracles.}
While this work studies only the the computational power of a quantum computer having the ability to generate uniform superpositions of unitary execution orders, it is an interesting mathematical question for future work to understand what's possible to compute when the superposition of orders is not necessarily uniform. To this end, we next define the generalized OI state and generalized OI oracle. Both of the latter have a fourth parameter $U_{O}$, a unitary circuit which determines the superposition of orders. We are overloading notations and still use $\OI$, such that when we are dropping the parameter for $U_{O}$, we are automatically referring to the uniform definitions of OI state and OI oracle.

\begin{definition} [The Order Interference State] \label{definition:order_interference_oracle}
    Let $\{ U_{i} \}_{i \in [m]}$ a set of $m \in \Nat$ unitary transformations which all operate on the same number of qubits $n \in \Nat$, let $\ket{\psi}$ an $n$-qubit quantum state, and let $U_{O}$ a $k$-qubit unitary circuit on registers $\left( O, A \right)$ where $O$ has $m\cdot \ceil{\log\left( m \right)}$ qubits and $A$ has the rest.

    Write the generated state by $U_{O}$ in the following way (recall that we use $m\cdot \ceil{\log\left( 
m \right)}$ bits to represent a permutation in $S_{m}$):
    $$
    U_{O} \cdot \ket{0^{k}}
    =
    \sum_{\sigma \in S_{m}} \alpha_{\sigma} \cdot \ket{\sigma}_{O}\ket{\psi_{\sigma}}_{A} \enspace .
    $$
    
    The order interference state of the triplet $\left( \{ U_{i} \}_{i \in [m]}, \ket{\psi}, U_{O} \right)$ is defined as follows.
    $$
    \OI\left( \{ U_{i} \}_{i \in [m]}, \ket{\psi}, U_{O} \right)
    :=
    \sum_{\sigma \in S_{m}} \alpha_{\sigma} \cdot \left( \prod_{i \in [m]} U_{\sigma^{-1}\left( i \right)} \right) \cdot \ket{\psi} \enspace .
    $$

    - When the notation $\OI\left( \cdot \right)$ is used with $2$ parameters instead of $3$, we automatically refer to the uniform OI state, as in definition \ref{definition:uniform_order_interference_state}.
\end{definition}

We next define the generalized OI oracle. One thing to notice is that unlike in the case of the uniform OI state, the general OI state is such that amplitudes are taken into consideration. The consideration makes a shorter vector. This is taken into consideration when calculating the norm of the general OI state, which is relevant for the success probability of an oracle call. Formally, this pops up in the below definition of a general OI oracle, where the success probability of the call scales down the OI state by a factor of $\sqrt{m!}$ rather than by a factor of $m!$, as in the uniform case.

\begin{definition} [Computable Order Interference Oracle] \label{definition:computable_order_interference_oracle}
    A computable order interference oracle, denoted $\OOI$, is an oracle with input quadruple $\left( \{ U_{i} \}_{i \in [m]}, R, U_{O}, \lambda \right)$, such that,
    \begin{itemize}
        \item
        $\{ U_{i} \}_{i \in [m]}$ is a set of $m \in \Nat$ unitary circuits which all operate on the same number of qubits $n \in \Nat$,

        \item
        $R$ is an $n$-qubit quantum register,

        \item
        $U_{O}$ is a unitary on registers $\left( O, A \right)$ where $O$ has $m\cdot \ceil{\log\left( m \right)}$ qubits and $A$ has some number of qubits $k \in \Nat$,

        \item 
        and $\lambda \in \Nat$ is a natural number.
    \end{itemize}
    
    Given valid input $\left( \{ U_{i} \}_{i \in [m]}, R, U_{O}, \lambda \right)$, let $R'$ a $n'$-qubit register such that the joint system $\left( R, R' \right)$ is in a pure state, written as,
    $$
    \sum_{ y \in \{ 0, 1 \}^{n'} }
    \alpha_{y} \cdot \ket{\psi_{y}}_{R}\ket{y}_{R'}
    \enspace .
    $$

    The oracle call takes time complexity $|U_{O}| + \sum_{i \in [m]}|U_{i}| + \lambda$. At the end of the oracle execution it returns a success/failure bit $b \in \{ 0, 1 \}$, where the success probability is
    $$
    \min_{ y \in \{ 0, 1 \}^{n'} }
    \left(
    \left(
    \frac
    { \sum_{ x \in \{ 0, 1 \}^{n} } |\sum_{ \sigma \in S_{m} } \alpha_{y, x, \sigma}| }
    { \sum_{ x \in \{ 0, 1 \}^{n} } \sum_{ \sigma \in S_{m} } |\alpha_{y, x, \sigma}| }
    \right)
    \frac
    { \frac{ \norm{\OI\left( \{ U_{i} \}_{i \in [m]}, \ket{\psi_{y}}, U_{O} \right)} }{ \sqrt{ m! } } }
    { \frac{ \norm{\OI\left( \{ U_{i} \}_{i \in [m]}, \ket{\psi_{y}}, U_{O} \right)} }{ \sqrt{ m! } } + \frac{1}{\lambda} } 
    \right)
    \enspace ,
    $$
    where for $y \in \{ 0, 1 \}^{n'}$, $\sigma \in S_{m}$ we write
    $$
    \prod_{i \in [m]} U_{\sigma^{-1}\left( i \right)} \cdot \ket{\psi_{y}}
    :=
    \sum_{ x \in \{ 0, 1 \}^{n} } \alpha_{y, x, \sigma} \ket{x}
    \enspace .
    $$
    If the call succeeded, the state in $\left( R, R' \right)$ is the normalization of,
    $$
    \sum_{ y \in \{ 0, 1 \}^{n'} }
    \alpha_{y} \cdot \OI\left( \{ U_{i} \}_{i \in [m]}, \ket{\psi_{y}}, U_{O} \right)_{R}\ket{y}_{R'}
    \enspace ,
    $$
    and if it failed, the state in $\left( R, R' \right)$ may be arbitrary.

    - When the notation of an oracle call $\OOI\left( \cdot \right)$ is used with $3$ parameters instead of $4$, we automatically refer to the computable uniform OI oracle, as in definition \ref{definition:computable_uniform_order_interference_oracle}.

    - The unitaries $\{ U_{i} \}_{i \in [m]}$ can also be given in the form of controlled oracle access. In that case, the complexity $|U_{i}|$ for $i \in [m]$ is the complexity of making an oracle call to the $(n + 1)$-qubit unitary which is the controlled $U_{i}$. 
\end{definition}

\begin{remark} [General computable OI implies uniform computable OI]
    The explanation for why the uniform versions of the OI notions are indeed private cases of the general cases, and formally, why the general computable OI oracle can simulate the uniform OI oracle, is as follows. By using the unitary $U_{R}$ from Claim \ref{claim:factorial_transformation}, followed by the unitary $U_{FY}$ from Claim \ref{claim:fisher_yates_sampler}, we have a unitary $U_{uni}$ on $2\cdot m \cdot \ceil{ \log\left( m \right) }$ qubits (possibly using extra ancillary qubits) that generates,
    $$
    U_{uni} \cdot \ket{ 0^{ 2 \cdot m \cdot \ceil{ \log\left( m \right) } } }
    =
    \frac{1}{ \sqrt{m!} }
    \sum_{ \sigma \in S_{m} }
    \ket{ r_{\sigma} }\ket{ \sigma }
    \enspace .
    $$
    More elaboration on this is given as part of the proof of Lemma \ref{lemma:OI_quantum_simulation}.
    Thus, in order to apply uniform OI transformations using the general OI oracle, we execute a call to the general OI oracle with $U_{O} := U_{uni}$.
\end{remark}

Throughout we use the following lemma, which compiles the definition of a uniform OI oracle and the needed amplification parameter to make the success probability arbitrarily close to $1$.

\begin{lemma}[Amplification of Success Probability of OI Transformations] \label{lemma:OI_oracle_amplification_success_probability}
    Let $n, m \in \Nat$, let $\{ U_{i} \}_{i \in [m]}$ a set of $n$-qubit unitaries, and let $\ket{\psi}$ an $n$-qubit quantum state.
    Let $k \in \Nat$ and let
    $$
    \lambda
    :=
    \ceil{
    k\cdot
    \frac{m!}{ \norm{ \OI\left( \{ U_{i} \}_{i \in [m]}, \ket{\psi} \right) } }
    }
    \enspace .
    $$
    Then,
    $$
    \frac
        { \frac{ \norm{\OI\left( \{ U_{i} \}_{i \in [m]}, \ket{\psi}\right)} }{ m! } }
        { \frac{ \norm{\OI\left( \{ U_{i} \}_{i \in [m]}, \ket{\psi}\right)} }{ m! } + \frac{1}{\lambda} }
    \geq 1 - \frac{1}{k}
    \enspace .
    $$
\end{lemma}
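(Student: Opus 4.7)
The plan is to abbreviate $N := \norm{\OI\left( \{ U_{i} \}_{i \in [m]}, \ket{\psi} \right)}$ and $s := \frac{N}{m!}$, so that the quantity we need to lower-bound becomes $\frac{s}{s + 1/\lambda}$, and the definition of $\lambda$ becomes $\lambda = \ceil{k/s}$. The claim is then a short calculation that reduces to monotonicity in $\lambda$.

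First I would observe that since $\lambda$ is defined as a ceiling, we have the clean inequality $\lambda \geq k/s$, which is equivalent to $\frac{1}{\lambda} \leq \frac{s}{k}$. Next I would note that the function $t \mapsto \frac{s}{s + t}$ is monotonically decreasing in $t \geq 0$, so plugging $t = 1/\lambda$ and $t = s/k$ gives
$$
\frac{s}{s + 1/\lambda} \;\geq\; \frac{s}{s + s/k} \;=\; \frac{k}{k+1}.
$$

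Finally, I would finish by comparing $\frac{k}{k+1}$ with $1 - \frac{1}{k} = \frac{k-1}{k}$. Cross-multiplying (both denominators are positive), this is equivalent to $k^{2} \geq (k-1)(k+1) = k^{2} - 1$, which clearly holds. Chaining the two inequalities yields
$$
\frac{\frac{N}{m!}}{\frac{N}{m!} + \frac{1}{\lambda}} \;\geq\; \frac{k}{k+1} \;\geq\; 1 - \frac{1}{k},
$$
which is precisely the lemma's conclusion. There is no real obstacle here: the only subtlety worth remarking on is that the ceiling in the definition of $\lambda$ can only make $\lambda$ larger than $k/s$, which is the ``right'' direction for the bound, and one should briefly note that the case where $N = 0$ is vacuous since the amplification parameter would not be well-defined (and in any event the oracle's success probability is trivially zero, consistent with any lower bound that is at most zero).
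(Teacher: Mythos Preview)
Your proof is correct and follows essentially the same approach as the paper: both reduce to the inequality $\lambda \geq k/s$ (with $s := \norm{\OI}/m!$) and finish by a one-line algebraic manipulation. The paper writes $\frac{s}{s+1/\lambda} = 1 - \frac{1}{\lambda s + 1} \geq 1 - \frac{1}{\lambda s} \geq 1 - \frac{1}{k}$, whereas you route through the intermediate value $\frac{k}{k+1}$; this is a cosmetic difference only.
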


\begin{proof}
    To prove the claim, all that is needed is to calculate the success probability, with accordance to the definition of the OI oracle.
    \begin{align*}
    &
        \frac
        { \frac{ \norm{\OI\left( \{ U_{i} \}_{i \in [m]}, \ket{\psi}\right)} }{ m! } }
        { \frac{ \norm{\OI\left( \{ U_{i} \}_{i \in [m]}, \ket{\psi}\right)} }{ m! } + \frac{1}{\lambda} }
        \\&=
        \frac
        { \frac{ \norm{\OI\left( \{ U_{i} \}_{i \in [m]}, \ket{\psi}\right)} }{ m! }  + \frac{1}{\lambda} - \frac{1}{\lambda}}
        { \frac{ \norm{\OI\left( \{ U_{i} \}_{i \in [m]}, \ket{\psi}\right)} }{ m! } + \frac{1}{\lambda} }
        \\&=
        1 - 
        \frac
        { \frac{1}{\lambda} }
        { \frac{ \norm{\OI\left( \{ U_{i} \}_{i \in [m]}, \ket{\psi}\right)} }{ m! } + \frac{1}{\lambda} }
        \\&=
        1 - 
        \frac
        { 1 }
        { \lambda \cdot \frac{ \norm{\OI\left( \{ U_{i} \}_{i \in [m]}, \ket{\psi}\right)} }{ m! } + 1 }
        \\&\geq
        1 - 
        \frac
        { 1 }
        { \lambda \cdot \frac{ \norm{\OI\left( \{ U_{i} \}_{i \in [m]}, \ket{\psi}\right)} }{ m! } }
        \\&
        \underset{\left( \text{Definition of $\lambda$} \right)}{\geq}
        1 - 
        \frac
        { 1 }
        { k }
        \enspace .
    \end{align*}
\end{proof}

\paragraph{Quantum Computation with Order Interference - Complexity Class.}
We give the definition of the set of computational problems solvable in polynomial time, by a quantum computer with the ability to create superposition of unitary execution orders. The class $\BQP^{\OI}$ capturing the computational power of a polynomial-time quantum computer with access to a (generalized) OI oracle.

\begin{definition} [The Complexity Class $\BQP^{\OI}$] \label{definition:bqp_oi}
    The complexity class $\BQP^{\OI}$ is the set of promise problems $\prod := \left( \YES , \NO \right)$ such that there exists $M^{\OI}$ a $\QPT$ with access to an OI oracle $\OOI$ (as in Definition \ref{definition:computable_order_interference_oracle}), such that for every $x \in \YES$, $M^{\OI}(x) = 1$ with probability at least $p(x)$, and if $x \in \NO$, $M^{\OI}(x) = 0$ with probability at least $p(x)$, such that
    $$
    \forall x \in \left( \YES \cup \NO \right) : p(x) \geq 1 - 2^{-\poly\left(|x|\right)} \enspace ,
    $$
    for some polynomial $\poly(\cdot)$.
\end{definition}

\begin{remark} [Different versions of the OI oracle] \label{remark:stronger_OI_oracle_versions}
The definition of the success probability we use in Definitions \ref{definition:computable_uniform_order_interference_oracle}, \ref{definition:computable_order_interference_oracle} is restrictive in two regards.
First, the non-amplifiable component,
$$
\frac
{ \sum_{ x \in \{ 0, 1 \}^{n} } | \sum_{ \sigma \in S_{m} } \alpha_{x, \sigma} | }
{ \sum_{ x \in \{ 0, 1 \}^{n} } \sum_{ \sigma \in S_{m} } | \alpha_{x, \sigma} | }
$$
is intended to restrict the success probability, proportionally to how much the phases of states in different spacetimes (or, unitary evolution orders) are misaligned, as we do not assume a particular destructive interference dynamics between the same states from different spacetimes. The below Definition \ref{definition:no_phase_alignment_computable_order_interference_oracle} is equivalent to the original definition of the OI oracle, without the phase alignment restriction. This definition was used as the main definition for the OI oracle in an earlier version of this work.

Second, the amplifiable norm component 
$$
\frac
{ \frac{ \norm{\OI\left( \{ U_{i} \}_{i \in [m]}, \ket{ \psi } \right)} }{ m! } }
{ \frac{ \norm{\OI\left( \{ U_{i} \}_{i \in [m]}, \ket{ \psi } \right)} }{ m! } + \frac{1}{\lambda} }
$$
is intended to restrict the ability to arbitrarily execute OI transformation, ignoring the norm of the OI state to-be-generated. 
The below Definition \ref{definition:no_norm_amplification_computable_order_interference_oracle} is equivalent to the original definition of the OI oracle, without the norm amplification restriction. Note that in the case of Definition \ref{definition:no_norm_amplification_computable_order_interference_oracle}, the OI oracle's input consists of one less parameter, as the amplification parameter $\lambda$ is dropped. 

As noted in \cite{Priv_Comm_Scott, Priv_Comm_Joe} following an earlier version of this work, a quantum computer with access to a strengthened OI oracle as in Definition \ref{definition:no_phase_alignment_computable_order_interference_oracle} allows for a polynomial time solution for the entire complexity class $\NP$ (and even $\QCMA$). We leave it as an open problem what is the computational power of quantum computers with access to oracles as in Definitions \ref{definition:no_phase_alignment_computable_order_interference_oracle}, \ref{definition:no_norm_amplification_computable_order_interference_oracle}.
\end{remark}

\begin{definition} [Computable Order Interference Oracle without Phase Alignment] \label{definition:no_phase_alignment_computable_order_interference_oracle}
    A computable order interference oracle with no phase alignment, denoted $\OOI^{\omega}$, is an oracle with input quadruple $\left( \{ U_{i} \}_{i \in [m]}, R, U_{O}, \lambda \right)$, such that,
    \begin{itemize}
        \item
        $\{ U_{i} \}_{i \in [m]}$ is a set of $m \in \Nat$ unitary circuits which all operate on the same number of qubits $n \in \Nat$,

        \item
        $R$ is an $n$-qubit quantum register,

        \item
        $U_{O}$ is a unitary on registers $\left( O, A \right)$ where $O$ has $m\cdot \ceil{\log\left( m \right)}$ qubits and $A$ has some number of qubits $k \in \Nat$,

        \item 
        and $\lambda \in \Nat$ is a natural number.
    \end{itemize}
    
    Given valid input $\left( \{ U_{i} \}_{i \in [m]}, R, U_{O}, \lambda \right)$, let $R'$ a $n'$-qubit register such that the joint system $\left( R, R' \right)$ is in a pure state, written as,
    $$
    \sum_{ y \in \{ 0, 1 \}^{n'} }
    \alpha_{y} \cdot \ket{\psi_{y}}_{R}\ket{y}_{R'}
    \enspace .
    $$

    The oracle call takes time complexity $|U_{O}| + \sum_{i \in [m]}|U_{i}| + \lambda$. At the end of the oracle execution it returns a success/failure bit $b \in \{ 0, 1 \}$, where the success probability is
    $$
    \min_{ y \in \{ 0, 1 \}^{n'} }
    \left(
    \frac
    { \frac{ \norm{\OI\left( \{ U_{i} \}_{i \in [m]}, \ket{\psi_{y}}, U_{O} \right)} }{ \sqrt{ m! } } }
    { \frac{ \norm{\OI\left( \{ U_{i} \}_{i \in [m]}, \ket{\psi_{y}}, U_{O} \right)} }{ \sqrt{ m! } } + \frac{1}{\lambda} } 
    \right)
    \enspace .
    $$
    If the call succeeded, the state in $\left( R, R' \right)$ is the normalization of,
    $$
    \sum_{ y \in \{ 0, 1 \}^{n'} }
    \alpha_{y} \cdot \OI\left( \{ U_{i} \}_{i \in [m]}, \ket{\psi_{y}}, U_{O} \right)_{R}\ket{y}_{R'}
    \enspace ,
    $$
    and if it failed, the state in $\left( R, R' \right)$ may be arbitrary.

    - The unitaries $\{ U_{i} \}_{i \in [m]}$ can also be given in the form of controlled oracle access. In that case, the complexity $|U_{i}|$ for $i \in [m]$ is the complexity of making an oracle call to the $(n + 1)$-qubit unitary which is the controlled $U_{i}$. 
\end{definition}

\begin{definition} [Computable Order Interference Oracle without Norm Amplification] \label{definition:no_norm_amplification_computable_order_interference_oracle}
    A computable order interference oracle without norm amplification, denoted $\mathcal{O}_{ \norm{ \OI } }$, is an oracle with input triplet $\left( \{ U_{i} \}_{i \in [m]}, R, U_{O} \right)$, such that,
    \begin{itemize}
        \item
        $\{ U_{i} \}_{i \in [m]}$ is a set of $m \in \Nat$ unitary circuits which all operate on the same number of qubits $n \in \Nat$,

        \item
        $R$ is an $n$-qubit quantum register,

        \item
        $U_{O}$ is a unitary on registers $\left( O, A \right)$ where $O$ has $m\cdot \ceil{\log\left( m \right)}$ qubits and $A$ has some number of qubits $k \in \Nat$,
    \end{itemize}
    
    Given valid input $\left( \{ U_{i} \}_{i \in [m]}, R, U_{O} \right)$, let $R'$ a $n'$-qubit register such that the joint system $\left( R, R' \right)$ is in a pure state, written as,
    $$
    \sum_{ y \in \{ 0, 1 \}^{n'} }
    \alpha_{y} \cdot \ket{\psi_{y}}_{R}\ket{y}_{R'}
    \enspace .
    $$

    The oracle call takes time complexity $|U_{O}| + \sum_{i \in [m]}|U_{i}|$. At the end of the oracle execution it returns a success/failure bit $b \in \{ 0, 1 \}$, where the success probability is
    $$
    \min_{ y \in \{ 0, 1 \}^{n'} }
    \left(
    \frac
    { \sum_{ x \in \{ 0, 1 \}^{n} } |\sum_{ \sigma \in S_{m} } \alpha_{y, x, \sigma}| }
    { \sum_{ x \in \{ 0, 1 \}^{n} } \sum_{ \sigma \in S_{m} } |\alpha_{y, x, \sigma}| }
    \right)
    \enspace ,
    $$
    where for $y \in \{ 0, 1 \}^{n'}$, $\sigma \in S_{m}$ we write
    $$
    \prod_{i \in [m]} U_{\sigma^{-1}\left( i \right)} \cdot \ket{\psi_{y}}
    :=
    \sum_{ x \in \{ 0, 1 \}^{n} } \alpha_{y, x, \sigma} \ket{x}
    \enspace .
    $$
    If the call succeeded, the state in $\left( R, R' \right)$ is the normalization of,
    $$
    \sum_{ y \in \{ 0, 1 \}^{n'} }
    \alpha_{y} \cdot \OI\left( \{ U_{i} \}_{i \in [m]}, \ket{\psi_{y}}, U_{O} \right)_{R}\ket{y}_{R'}
    \enspace ,
    $$
    and if it failed, the state in $\left( R, R' \right)$ may be arbitrary.

    - The unitaries $\{ U_{i} \}_{i \in [m]}$ can also be given in the form of controlled oracle access. In that case, the complexity $|U_{i}|$ for $i \in [m]$ is the complexity of making an oracle call to the $(n + 1)$-qubit unitary which is the controlled $U_{i}$. 
\end{definition}

\subsection{Comparison to Standard Quantum Computation} \label{subsection:OI_comparison_to_standard_quantum_computation}
In this section we give a brief comparison between quantum computation with and without computable OI. Specifically, we show techniques in (standard) quantum computation to simulate a computable uniform OI oracle. First, we show a quantum polynomial-time algorithm for the generation of an entangled uniform superposition of permutation states.

\begin{claim} [Factorial transformation] \label{claim:factorial_transformation}
    For every $m \in \Nat$ there exists a polynomial size (in $m$) unitary circuit $U_{R}$ on $m \cdot \ceil{ \log\left( m \right) }$ qubits (possibly using ancillary qubits) that maps
    $$
    \forall x_{m} \in [m], x_{m - 1} \in [m - 1], \cdots, x_{2} \in [2] :
    U_{R} \cdot \ket{ x_{m}, x_{m - 1}, \cdots, x_{2} }
    =
    $$
    $$
    \sum_{ i_{m} \in [m], i_{m - 1} \in [m - 1], \cdots, i_{2} \in [2] }
    \frac{1}{ \sqrt{m!} } \left( \prod_{j \in [m]} \omega_{j}^{x_{j} \cdot i_{j}} \right) \ket{ i_{m}, i_{m - 1}, \cdots, i_{2} }
    $$
\end{claim}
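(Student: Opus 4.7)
The plan is to recognize $U_{R}$ as a tensor product of modular Quantum Fourier Transforms, one per register, and appeal to the standard fact that $\QFT$ modulo an arbitrary natural number $N$ has a polynomial-size circuit.

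First I would observe that the $j = 1$ factor in $\prod_{j \in [m]} \omega_{j}^{x_{j} \cdot i_{j}}$ is vacuous, since $\omega_{1} = 1$ and the ranges force $x_{1} = i_{1} = 1$. So the target output state factorizes across the $j \in \{2, \ldots, m\}$ subregisters into per-register states of the form $\frac{1}{\sqrt{j}} \sum_{i_{j} \in [j]} \omega_{j}^{x_{j} \cdot i_{j}} \ket{i_{j}}$, whose normalizations multiply to $\prod_{j=2}^{m} \frac{1}{\sqrt{j}} = \frac{1}{\sqrt{m!}}$, matching the claim. The per-register unitary $V_{j} : \ket{x_{j}} \mapsto \frac{1}{\sqrt{j}} \sum_{i_{j} \in [j]} \omega_{j}^{x_{j} \cdot i_{j}} \ket{i_{j}}$ is, up to a classical cyclic relabeling between the basis $[j] = \{1, \ldots, j\}$ and $\{0, 1, \ldots, j - 1\}$, exactly the discrete Fourier transform $\QFT_{j}$ over $\bbZ / j \bbZ$; since $\omega_{j}^{j} = 1$, the relabeling only shifts a global phase on each basis element and is itself a polynomial-size reversible classical circuit, hence contributes only $\poly(\log j)$ quantum gates.

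Next I would invoke the known result (Kitaev's phase-estimation decomposition, or Hales-Hallgren) that $\QFT_{N}$ can be implemented by a quantum circuit of size $\poly(\log N)$, either exactly when arbitrary-angle phase gates are available, or to within exponentially small error over a finite universal gate set -- more than sufficient for the downstream applications in $\BQPOI$. Extending each $V_{j}$ to act as the identity on basis states whose index lies outside $[j]$ pads it up to a unitary on the full $\lceil \log(m) \rceil$-qubit subregister without altering its action on valid inputs. The circuit $U_{R} = \bigotimes_{j=2}^{m} V_{j}$ then has total size $\sum_{j=2}^{m} \poly(\log j) = \poly(m)$ as required, and its action on $\ket{x_{m}, x_{m-1}, \ldots, x_{2}}$ produces the claimed state by direct expansion of the tensor product.

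The main -- and really only -- obstacle is the efficient implementation of $\QFT_{N}$ for $N$ that is not a power of two, which I would cite from the literature rather than reprove. The remaining steps (verifying the tensor-product factorization, specifying the cyclic basis relabeling, and padding each $V_{j}$ with the identity on unused basis elements) are routine bookkeeping and add only polynomial overhead.
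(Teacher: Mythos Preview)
Your proposal is correct and matches the paper's own proof essentially line for line: the paper also defines $U_{R} := \bigotimes_{j = m, m-1, \ldots, 2} U_{\QFT, j}$, cites Kitaev (and Mosca--Zalka) for the $\poly(\log N)$ implementation of $\QFT_{N}$ over arbitrary moduli, and verifies the claimed action by the same tensor-product expansion. Your additional remarks on the $j=1$ vacuity, the cyclic relabeling between $[j]$ and $\{0,\ldots,j-1\}$, and the padding to $\lceil \log m \rceil$ qubits are sound bookkeeping that the paper leaves implicit.
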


\begin{proof}
    The proof follows readily by recalling the ability to execute the quantum Fourier transform $\QFT_{N}$ for any number $N$ (where $N$ is not necessarily a power of $2$), using a $\ceil{ \log_{2}\left( N \right) }$-qubit unitary quantum circuit $U_{\QFT, N}$ (which possibly uses ancillary qubits) \cite{kitaev1995quantum, mosca2004exact} that executes in polynomial time in $\log_{2}\left( N \right)$. 

    Finally, by executing in parallel, $m$ Fourier transforms of different orders, that is, the $i$-th transform is $\QFT_{i}$, we get our wanted transformation. Formally, define $U_{ R } := \bigotimes_{ j = m, m - 1, \cdots, 3, 2 } U_{\QFT, j}$, and we get,
    $$
    U_{R} \cdot \ket{ x_{m}, x_{m - 1}, \cdots, x_{2} }
    =
    \bigotimes_{ j = m, m - 1, \cdots, 3, 2 } U_{\QFT, j} \cdot \ket{ x_{j} }
    $$
    $$
    =
    \bigotimes_{ j = m, m - 1, \cdots, 3, 2 }
    \left(
    \sum_{ i_{j} \in [j] }
    \omega_{j}^{x_{j} \cdot i_{j}}
    \frac{1}{ \sqrt{j} }\ket{ i_{j} }
    \right)
    $$
    $$
    =
    \sum_{ i_{m} \in [m], i_{m - 1} \in [m - 1], \cdots, i_{2} \in [2] }
    \left(
    \bigotimes_{ j = m, m - 1, \cdots, 3, 2 }
    \frac{1}{ \sqrt{j} } \omega_{j}^{x_{j} \cdot i_{j}} \ket{ i_{j} }
    \right)
    $$
    $$
    =
    \sum_{ i_{m} \in [m], i_{m - 1} \in [m - 1], \cdots, i_{2} \in [2] }
    \frac{1}{ \sqrt{ m! } } \left( \prod_{j \in [m]} \omega_{j}^{x_{ j } \cdot i_{j}} \right) \ket{ i_{m}, i_{m - 1}, \cdots, i_{2} } \enspace ,
    $$
    as needed.
\end{proof}

We next state a claim, part of the proof for which, is part of the proof of correctness of the Fisher-Yates sampling algorithm. The FY algorithm samples a uniformly random permutation $\sigma \in S_{m}$ by having access to randomness of specific structure.

\begin{claim} [Fisher-Yates Permutation Sampler] \label{claim:fisher_yates_sampler}
    Let $m \in \Nat$, let $R_{m} := [m] \times [m - 1] \times \cdots \times [2]$ and let $S_{m}$ the set of all permutations on $[m]$. An element from each of the sets $R_{m}$, $S_{m}$ can be represented by $m\cdot \ceil{ \log\left( m \right) }$ bits.
    \begin{itemize}
        \item
        There exists a bijection $B_{FY} : R_{m} \rightarrow S_{m}$. Note that this means that a uniformly random input from $R_{m}$ produces a uniformly random permutation from $S_{m}$.

        \item 
        There exists a polynomial size (in $m$) unitary circuit $U_{FY}$ on $2\cdot m\cdot \ceil{ \log\left( m \right) }$ qubits (possibly using extra ancillary qubits), such that,
        $$
        \forall r \in R_{m}, y \in \{ 0, 1 \}^{ m\cdot \ceil{ \log\left( m \right) } } : U_{FY} \cdot \ket{r, y} = \ket{ r, y \oplus B_{FY}\left( r \right) } \enspace .
        $$
    \end{itemize}
\end{claim}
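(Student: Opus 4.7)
\medskip

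\noindent\textbf{Proof plan.} The plan is to handle the two bullets in sequence, first establishing the combinatorial bijection $B_{FY}$ and then lifting it to a reversible unitary circuit by standard techniques.

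For the first bullet, I define $B_{FY}$ exactly as in the Fisher--Yates shuffle (Fact~\ref{fact:fisher_yates_algorithm}): given $r = (r_m, r_{m-1}, \ldots, r_2) \in R_m$, start with the identity list $(1, 2, \ldots, m)$, and for $i = m, m{-}1, \ldots, 3, 2$ swap the elements currently in positions $i$ and $r_i$; the output permutation is $B_{FY}(r)$. Since $|R_m| = m \cdot (m{-}1) \cdots 2 = m! = |S_m|$, it suffices to argue injectivity. Injectivity follows by giving an explicit inverse: from a permutation $\sigma \in S_m$, read off $r_m$ as the position currently occupied by the value that was originally in slot $m$ (equivalently, run the FY swaps backwards from $i=2$ up to $i=m$, recording at each step the position of the swap needed to restore the identity). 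Both $B_{FY}$ and this inverse are computable in classical time $\poly(m)$ by simply maintaining the list of $m$ elements in memory and performing the $m-1$ swaps.

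For the second bullet, I invoke the standard reversibilization of classical computation. Because $B_{FY}$ is a \DPT function with output length $m \cdot \lceil \log(m) \rceil$, Bennett's construction gives a reversible (hence unitary) Toffoli circuit $V_{FY}$ on the data registers together with $\poly(m)$ clean ancilla qubits such that $V_{FY}\ket{r}\ket{0^{m\lceil\log m\rceil}}\ket{0^{\poly(m)}} = \ket{r}\ket{B_{FY}(r)}\ket{0^{\poly(m)}}$, where the ancilla register is returned to $\ket{0}$ by uncomputing the scratch work used during the swaps. Given $V_{FY}$, the target unitary $U_{FY}$ is built in three stages on registers $(r, y, a, b)$, with $a$ the intermediate output register and $b$ the scratch: (i) apply $V_{FY}$ to place $B_{FY}(r)$ in $a$; (ii) apply a bitwise CNOT from $a$ into $y$, which maps $y \mapsto y \oplus B_{FY}(r)$; (iii) apply $V_{FY}^{\dagger}$ to erase $B_{FY}(r)$ from $a$ and return $b$ to $\ket{0}$. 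The net map is exactly $\ket{r, y}\ket{0^{\cdots}} \mapsto \ket{r, y \oplus B_{FY}(r)}\ket{0^{\cdots}}$, which, after identifying the clean ancilla registers as internal, is the statement of the claim. The circuit size is $\poly(m)$ because each of the three stages is.

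The only mild subtlety is ensuring we do not accidentally use ancillas that are not returned to $\ket{0}$; this is precisely what the uncompute step in (iii) takes care of, and it is routine given that $V_{FY}$ is a reversible realization of a deterministic classical algorithm of size $\poly(m)$. No nontrivial obstacle arises: the bijectivity of $B_{FY}$ is the classical correctness of Fisher--Yates, and the quantum implementation is a textbook Bennett-style reversible compilation.
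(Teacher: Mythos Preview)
Your proposal is correct and follows essentially the same approach as the paper: both define $B_{FY}$ via the Fisher--Yates shuffle and then obtain $U_{FY}$ from the standard fact that any classically efficiently computable function admits a reversible (hence unitary) XOR-into-output implementation. The paper's proof is a two-sentence sketch that simply invokes Fisher--Yates and ``a standard fact in quantum computing,'' whereas you spell out the bijectivity argument (cardinality plus explicit inverse) and the Bennett compute--copy--uncompute construction; this extra detail is fine and does not constitute a different route.
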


\begin{proof}
    The bijection $B_{FY}$ is the mapping from the Fisher-Yates \cite{fisher1953statistical} algorithm. Basically, $B_{FY}$ operates as follows: It starts from the identity permutation: $\sigma_{I} := \left( 1, 2, 3, \cdots, m \right)$, which keeps every element in its place. Then, for every $i = 1, 2, \cdots m - 1$, take $r_{i}$ the $i$-th coordinate in $r$, and swap element $1$ with element $r_{i}$.

    Finally, the unitary $U_{FY}$ follows as a standard fact in quantum computing: Since the function $B_{FY}$ is classically efficiently computable, there is a unitary that computes it in particular.
\end{proof}

Given the correspondence between elements $r \in R_{m}$ and permutations $\sigma \in S_{m}$, for $r \in R_{m}$ we denote $\sigma_{r} := B_{FY}(r)$, and for $\sigma \in S_{m}$ we denote $r_{\sigma} := B^{-1}_{FY}(\sigma)$. Next, we state without proof a trivial fact in quantum computation, namely, that if we are given (even only an oracle) access to controlled versions of unitaries $\{ U_{i} \}_{i \in [m]}$ and any superposition of permutations over the set $S_{m}$, we can generate an \emph{entangled} superposition of unitary execution orders, where the entanglement is between the order register (holding the order information, in the form of the permutation $\sigma \in S_{m}$) and the target register (holding the result of the chosen order of execution).

\begin{fact} \label{fact:entangled_ordered_execution}
    For every $m, n \in \Nat$, there exists a polynomial size (in $m, n$) unitary circuit $U_{O}$ on $m \cdot \ceil{\log\left( m \right)} + n$ qubits (possibly using extra ancillary qubits), that for every set of $m$ unitaries $\{ U_{i} \}_{i \in [m]}$ all act on the same number of qubits $n$, given oracle access to controlled versions of the unitaries in the set, executes the following unitary transformation:
    $$
    \forall \sigma \in S_{m}, x \in \{ 0, 1 \}^{n} :
    U_{O}^{\left( U_{1}^{c}, U_{2}^{c}, \cdots, U_{m}^{c} \right)} \cdot \ket{\sigma}\ket{x}
    $$
    $$
    \ket{\sigma} \left( \prod_{i \in [m]} U_{\sigma^{-1}\left( i \right)} \right) \cdot \ket{x} \enspace .
    $$
\end{fact}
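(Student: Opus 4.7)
The plan is to build $U_O$ by iterating over the $m$ packets of the order register, in order $i = 1, 2, \ldots, m$, and at each iteration applying to the target register a ``multiplexed'' unitary controlled by the $i$-th packet. Recall that, by the paper's convention for representing permutations, the $i$-th packet of $\lceil \log(m) \rceil$ qubits of $\ket{\sigma}$ holds the binary encoding of $\sigma^{-1}(i)$. So what we want at iteration $i$ is the transformation
$$
\ket{j}_{\text{packet } i} \ket{\phi}_{\text{target}}
\;\longmapsto\;
\ket{j}_{\text{packet } i}\, U_{j} \ket{\phi}_{\text{target}}
\quad \text{for every } j \in [m].
$$
Since the packet is only used as a control and is never overwritten, the entire order register $\ket{\sigma}$ is preserved throughout, matching the right-hand side of the claim.

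Each such multiplexer is implemented as a sequence of $m$ sub-steps, one per candidate value $j \in [m]$: (a) compute, into a fresh single-qubit ancilla initialized to $\ket{0}$, the boolean predicate ``packet $i$ equals $j$'' -- this is a standard reversible $\lceil \log(m) \rceil$-bit equality test, implementable in $O(\log m)$ elementary gates with the help of $O(\log m)$ auxiliary qubits which are cleaned up at the end of the test; (b) apply one call to the provided controlled oracle $U_j^{c}$, using this ancilla as the control and the target register as the target; (c) uncompute the ancilla by re-running the equality-test circuit in reverse, returning the ancilla and the auxiliaries to $\ket{0}$. The sub-steps for different $j$ commute on computational-basis inputs of the packet (since at most one of the equality ancillas is ever set), so their composition correctly realizes the multiplexer.

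For correctness, observe that at iteration $i$ the net action on the target register is exactly $U_{\sigma^{-1}(i)}$ when the packet register is in a computational basis state, and by linearity this extends to arbitrary states of the order register. Since iterations are executed in the order $i = 1, 2, \ldots, m$ and matrix products are written right-to-left per the paper's convention (see Section~\ref{section:preliminaries}), the overall unitary on the target register is
$$
U_{\sigma^{-1}(m)} \cdot U_{\sigma^{-1}(m-1)} \cdots U_{\sigma^{-1}(2)} \cdot U_{\sigma^{-1}(1)}
= \prod_{i \in [m]} U_{\sigma^{-1}(i)},
$$
as required. For the size bound, each of the $m$ iterations uses $m$ calls to some $U_j^{c}$ plus two $O(\log m)$-size equality circuits per $j$, giving total size $O(m^{2} \log m) + m^{2} \cdot \max_{j} |U_j^{c}|$, which is polynomial in $m$ and $n$ as claimed. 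The only subtle point is respecting the multiplication convention when ordering the iterations, which we handle by scanning $i$ from $1$ upward so that $U_{\sigma^{-1}(1)}$ is applied first and therefore appears rightmost in the product.
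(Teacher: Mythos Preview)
Your proposal is correct and follows essentially the same approach as the paper, which states this as a Fact with only a one-sentence explanation (``conditioning the execution of unitaries at each step, going from left to right on the elements of $\sigma$, and the unitary $U_i$ executes if and only if the current element of $\sigma$ is $i$''). You have simply fleshed out this sketch into a full construction with the equality-test multiplexer and explicit size accounting, which is exactly what the paper leaves implicit.
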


As a brief explanation, the above fact follows by conditioning the execution of unitaries at each step, going (from left to right) on the elements of $\sigma$, and the unitary $U_{i}$ executes if and only if the current element of $\sigma$ is $i$. Finally, we put the above pieces together in order to simulate an execution of an OI oracle, using a regular quantum computer. The simulation of the OI oracle is without the amplification part.

\begin{lemma} \label{lemma:OI_quantum_simulation}
    There is a quantum polynomial time (in its input size) algorithm $Q$, that for input $\{ U_{i} \}_{i \in [m]}$ a set of $m$ unitary circuits, such that all of them act on the same number of qubits $n$, and $\ket{\psi}$ an $n$-qubit state, outputs the normalization of the OI state $\OI\left( \{ U_{i} \}_{i \in [m]}, \ket{\psi} \right)$ with probability at least
    $$
    \left(
    \frac
    { \norm{ \OI\left( \{ U_{i} \}_{i \in [m]}, \ket{\psi} \right) } }
    { m! }
    \right)^{2}
    \enspace .
    $$
\end{lemma}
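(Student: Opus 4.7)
The plan is to implement $Q$ by following the standard ``linear combination of unitaries'' template, only with the index register ranging over the symmetric group $S_{m}$ instead of $[m]$. The three building blocks established immediately above the lemma are exactly what is needed: $U_R$ from Claim \ref{claim:factorial_transformation} to create a uniform tuple superposition, $U_{FY}$ from Claim \ref{claim:fisher_yates_sampler} to reversibly convert a tuple into the corresponding permutation, and $U_O$ from Fact \ref{fact:entangled_ordered_execution} to execute the chosen ordering controlled on the permutation register.

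Concretely, $Q$ will first prepare a uniform superposition over $R_{m} = [m]\times[m-1]\times\cdots\times[2]$ by applying $U_R$ to $\ket{0,\ldots,0}$, which by Claim \ref{claim:factorial_transformation} produces $\frac{1}{\sqrt{m!}} \sum_{r \in R_{m}} \ket{r}$. Then $U_{FY}$ writes $\sigma = B_{FY}(r)$ into a fresh subregister, yielding $\frac{1}{\sqrt{m!}} \sum_{\sigma \in S_{m}} \ket{r_{\sigma}}\ket{\sigma}$. Next, $Q$ appends the input state $\ket{\psi}$ and invokes $U_O$ with controlled oracle access to $\{U_{i}\}_{i \in [m]}$, producing $\frac{1}{\sqrt{m!}} \sum_{\sigma \in S_{m}} \ket{r_{\sigma}}\ket{\sigma}\bigl(\prod_{i\in[m]} U_{\sigma^{-1}(i)}\bigr)\ket{\psi}$. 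Finally, $Q$ uncomputes the permutation subregister using $U_{FY}^{-1}$, applies $U_{R}^{-1}$ to the tuple subregister, and measures the latter in the computational basis, declaring success iff the outcome is $0^{m \cdot \lceil \log(m) \rceil}$. Conditioned on success the output register carries the normalization of $\OI(\{U_{i}\}_{i \in [m]},\ket{\psi})$.

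The key calculation I would verify is that $\bra{0^{m \cdot \lceil \log(m) \rceil}} U_{R}^{-1} \ket{r_{\sigma}} = \overline{\bra{r_{\sigma}} U_{R} \ket{0^{m \cdot \lceil \log(m) \rceil}}} = \frac{1}{\sqrt{m!}}$ for every $\sigma \in S_{m}$; this follows from Claim \ref{claim:factorial_transformation} applied with $x_{m} = \cdots = x_{2} = 0$, since every phase $\omega_{j}^{0}$ equals $1$. Consequently the subnormalized component whose tuple register is $\ket{0^{m \cdot \lceil \log(m) \rceil}}$ equals $\frac{1}{m!}\OI(\{U_{i}\}_{i \in [m]}, \ket{\psi})$, so the measurement succeeds with probability $\bigl(\frac{\norm{\OI(\{U_{i}\}_{i \in [m]}, \ket{\psi})}}{m!}\bigr)^{2}$ and the post-measurement state on the output register is the normalized OI state. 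Polynomial running time follows from the polynomial-size guarantees on $U_{R}$, $U_{FY}$, and $U_{O}$, together with the polynomially many controlled calls to $\{U_{i}\}_{i \in [m]}$.

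The main thing to be careful about is precisely this uniformity of the overlap $\bra{0^{m \cdot \lceil \log(m) \rceil}} U_{R}^{-1} \ket{r_{\sigma}}$ across all $\sigma \in S_{m}$: it is what ensures that the amplitude-collection step aggregates all $m!$ ordered evolutions with equal weight and yields $\OI$ itself, rather than some permutation-dependent weighting that would spoil the target state. Beyond this point the argument is essentially a bookkeeping exercise chaining the three preceding building blocks, and I do not anticipate further obstacles.
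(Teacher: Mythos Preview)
Your proposal is correct and follows essentially the same approach as the paper: prepare the uniform $R_m$ superposition via $U_R$, translate to permutations via $U_{FY}$, apply the controlled ordered execution $U_O$, then uncompute and measure the ancilla. The only cosmetic difference is that you apply $U_{FY}^{-1}$ and $U_R^{-1}$ in the uncomputation, whereas the paper applies $U_{FY}$ and $U_R$ again; since $U_{FY}$ is self-inverse and the relevant overlap with the zero state is $\frac{1}{\sqrt{m!}}$ in either case, this makes no substantive difference.
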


\begin{proof}
    The algorithm $Q$ is as follows. First, take the unitary $U_{R}$ from Claim \ref{claim:factorial_transformation} and execute it on $\ket{0^{ m \cdot \ceil{\log\left( m \right)} }}$, to obtain
    $$
    \sum_{ i_{m} \in [m], i_{m - 1} \in [m - 1], \cdots, i_{2} \in [2] }
    \frac{1}{ \sqrt{m!} } \ket{ i_{m}, i_{m - 1}, \cdots, i_{2} }
    \enspace .
    $$
    Next, apply the unitary $U_{FY}$ from Claim \ref{claim:fisher_yates_sampler} on the above state concatenated with $\ket{0^{ m \cdot \ceil{\log\left( m \right)} }}$, to obtain
    $$
    \sum_{ i_{m} \in [m], i_{m - 1} \in [m - 1], \cdots, i_{2} \in [2] }
    \frac{1}{ \sqrt{m!} } \ket{ i_{m}, i_{m - 1}, \cdots, i_{2} }\ket{ \sigma_{\left( i_{m}, i_{m - 1}, \cdots, i_{2} \right)} }
    $$
    $$
    =
    \sum_{ r \in R_{m} }
    \frac{1}{ \sqrt{m!} } \ket{ r }\ket{ \sigma_{ r } }
    =
    \frac{1}{ \sqrt{m!} }
    \sum_{ \sigma \in S_{m} }
    \ket{ r_{\sigma} }\ket{ \sigma }
    \enspace ,
    $$
    where the first equality above follows because the function $B_{FY}$ from Claim \ref{claim:fisher_yates_sampler} is a bijection, so a sum over $R_{m}$ can be switched with a sum over $S_{m}$.

    Next, we execute the unitary $U_{O}$ (from Fact \ref{fact:entangled_ordered_execution}) on the right register (containing $\sigma$) of the above state, concatenated with the input register containing the state $\ket{\psi}$, where the oracle access of the unitary $U_{O}$ comes from the fact that the algorithm $Q$ has the unitaries $\{ U_{i} \}_{i \in [m]}$ as part of its input. We get the state
    $$
    \frac{1}{ \sqrt{m!} }
    \sum_{ \sigma \in S_{m} }
    \ket{ r_{\sigma} }\ket{ \sigma }\left( \prod_{i \in [m]} U_{\sigma^{-1}\left( i \right)} \right) \cdot \ket{\psi} 
    \enspace .
    $$

    We now un-compute in two steps. In the first step we execute $U_{FY}$ once again, to uncompute the information of the permutation, and leave only the information of the element in $R_{m}$. After executing $U_{FY}$ and tracing out the middle register that contained $\sigma$, the state is
    $$
    \frac{1}{ \sqrt{m!} }
    \sum_{ \sigma \in S_{m} }
    \ket{ r_{\sigma} }\left( \prod_{i \in [m]} U_{\sigma^{-1}\left( i \right)} \right) \cdot \ket{\psi} 
    \enspace .
    $$
    Finally, the only probabilistic step of the algorithm is forgetting the information in the left register, containing the information $r \in R_{m}$. This is done by executing the projective measurement that tries to project the left register on $\ket{0^{ m \cdot \ceil{\log\left( m \right)} }}$. Formally, we execute the unitary $U_{R}$ once again on the left register. Let us calculate what is the result of the execution.
    $$
    \sum_{\sigma \in S_{m}}
    \frac{1}{ \sqrt{m!} }
    \left( U_{R} \cdot \ket{ r_{\sigma} } \right)
    \left( \prod_{i \in [m]} U_{\sigma^{-1}\left( i \right)} \right) \cdot \ket{\psi}
    $$
    $$
    =
    \sum_{\sigma \in S_{m}}
    \frac{1}{ \sqrt{m!} }
    \left(
    \sum_{ r \in R_{m} }
    \frac{1}{ \sqrt{m!} } \left( \prod_{j \in [m]} \omega_{j}^{ \left( r_{\sigma} \right)_{j} \cdot \left( r \right)_{j} } \right) \ket{ r }
    \right)
    \left( \prod_{i \in [m]} U_{\sigma^{-1}\left( i \right)} \right) \cdot \ket{\psi}
    $$
    $$
    =
    \frac{1}{ m! }
    \sum_{ r \in R_{m} }
    \ket{ r }
    \left(
    \sum_{\sigma \in S_{m}}
    \left(
    \prod_{j \in [m]} \omega_{j}^{ \left( r_{\sigma} \right)_{j} \cdot \left( r \right)_{j} }
    \right)
    \left(
    \prod_{i \in [m]} U_{\sigma^{-1}\left( i \right)}
    \right)
    \right) \cdot \ket{\psi}
    $$
    $$
    =
    \frac{1}{ m! }
    \sum_{ r \in R_{m} \setminus \{ 0_{R_{m}} \} }
    \ket{ r }
    \left(
    \sum_{\sigma \in S_{m}}
    \left(
    \prod_{j \in [m]} \omega_{j}^{ \left( r_{\sigma} \right)_{j} \cdot \left( r \right)_{j} }
    \right)
    \left(
    \prod_{i \in [m]} U_{\sigma^{-1}\left( i \right)}
    \right)
    \right) \cdot \ket{\psi}
    $$
    $$
    +
    \frac{1}{ m! }
    \ket{ 0^{ m \cdot \ceil{ \log\left( m \right) } } }
    \left(
    \sum_{\sigma \in S_{m}}
    \left(
    \prod_{i \in [m]} U_{\sigma^{-1}\left( i \right)}
    \right)
    \right) \cdot \ket{\psi}
    \enspace .
    $$
    Now, note that the last state in the above summation is 
    $$
    \frac{1}{ m! }
    \ket{ 0^{ m \cdot \ceil{ \log\left( m \right) } } }
    \left(
    \sum_{\sigma \in S_{m}}
    \left(
    \prod_{i \in [m]} U_{\sigma^{-1}\left( i \right)}
    \right)
    \right) \cdot \ket{\psi}
    $$
    $$
    =
    \frac{ \norm{ \OI\left( \{ U_{i} \}_{i \in [m]}, \ket{\psi} \right) } }{ m! }
    \ket{ 0^{ m \cdot \ceil{ \log\left( m \right) } } }
    \frac{1}{ \norm{ \OI\left( \{ U_{i} \}_{i \in [m]}, \ket{\psi} \right) } }
    \OI\left( \{ U_{i} \}_{i \in [m]}, \ket{\psi} \right)
    \enspace ,
    $$
    which means that the probability to measure the left register to be $\ket{ 0^{ m \cdot \ceil{ \log\left( m \right) } } }$ and obtain the normalized OI state in the right target register is exactly $\left( \frac{ \norm{ \OI\left( \{ U_{i} \}_{i \in [m]}, \ket{\psi} \right) } }{ m! } \right)^{2}$, as needed.
\end{proof}

To conclude the comparison between quantum computation with and without computable OI, it seems that the only difference is the ability of the OI oracle to amplify the success probability of OI transformations - while a single call to the OI oracle can be simulated efficiently by a quantum computer, for $\ell \in \Nat$ queries, this complexity increases exponentially. This relation, between quantum computation with and without computable OI is somewhat reminiscent of the relationship between quantum and classical computation.

Specifically, an $\ell$-qubit quantum circuit can be broken down (without the loss of generality) into 2-qubit quantum gates. A classical computer can efficiently simulate the action of one step of a quantum computer, which is a single 2-qubit unitary. However, the more gates the classical computer needs to simulate, the complexity of simulation increases exponentially with the number of gates and qubits $\ell$.

\subsection{Sequentially Invertible Distributions and Statistical Difference} \label{subsection:sequential_invertibility_definitions}
As part of this work we define numerous new notions, with no dependence on any non-classical computing or information. Specifically, we define sequentially invertible distributions, and a new computational problem, called the Sequentially Invertible Statistical Difference ($\SISD$) problem. We start with the definitions of sequential invertibility.

\begin{definition} [$\left( r, t, \ell \right)$-invertible circuit sequence] \label{definition:invertible_circuit_sequence}
    For $r, t, \ell, k \in \Nat$, an $\left( r, t, \ell \right)$-invertible circuit sequence on $k$ bits, is a sequence of $\ell$ pairs of randomized circuits $\left( C_{i, \rightarrow}, C_{i, \leftarrow} \right)_{i \in [\ell]}$ such that for every $i \in [\ell]$,
    \begin{itemize}
        \item 
        For every $Y \in \{ \rightarrow, \leftarrow \}$, the circuit $C_{i, Y}$ is of size at most $t$, uses $r_{i} \leq r$ random bits and maps from $k$ to $k$ bits, that is,
        $$
        C_{i, Y} : \{ 0, 1 \}^{ k } \times \{ 0, 1 \}^{ r_{i} } \rightarrow \{ 0, 1 \}^{ k }
        \enspace .
        $$
    
        \item 
        The circuit directions are inverses of each other, per hard-wired randomness, that is:
        $$
        \forall z \in \{ 0, 1 \}^{ r_{i} }, x \in \{ 0, 1 \}^{ k } :
        x = C_{i, \leftarrow}\left( C_{i, \rightarrow}\left( x; z \right); z \right) \enspace .
        $$
        Note that this also implies that for each hard-wired randomness, the circuits act as permutations on the set $\{ 0, 1 \}^{k}$.
    \end{itemize}

    For $r \in \Nat$, an $r$-invertible circuit sequence, is an $\left( r, t, \ell \right)$-invertible circuit sequence, for finite and but unbounded $t, \ell$.
\end{definition}

\begin{definition} [Invertible circuit sequence sampling from $D$]
    Let $r, t, \ell, k \in \Nat$, $\varepsilon \in [0, 1]$, let $S := \left( C_{i, \rightarrow}, C_{i, \leftarrow} \right)_{i \in [\ell]}$ an $\left( r, t, \ell \right)$-invertible circuit sequence on $k$ bits, and let $D$ a distribution on $\{ 0, 1 \}^{k}$.
    
    We say that $S$ samples $\varepsilon$-close to $D$ if for the following distribution $D_{S}$ we have $\norm{ D_{S} - D }_{TV} \leq \varepsilon$,
    $$
        D_{S}
        =
        \biggl\{
        C_{\ell, \rightarrow }
        \left( \cdots 
        C_{2, \rightarrow }
        \left(
        C_{1, \rightarrow }\left( 0^{ k } ; z_1 \right); z_2 \right)
        \cdots ; z_{\ell} \right)
        \; \Big| \;
        z_{1} \gets \{ 0, 1 \}^{ r_{1} },
        \cdots,
        z_{\ell} \gets \{ 0, 1 \}^{ r_{\ell} }
        \biggr\}
        \enspace .
    $$
    If $\varepsilon = 0$, we simply say that $S$ samples from $D$.
\end{definition}

\begin{definition} [$\left( r, t, \ell, \varepsilon \right)$-sequentially invertible distribution]
    Let $D$ a distribution, let $r, t, \ell \in \Nat$ and let $\varepsilon \in [0, 1]$. We say that $D$ is $\left( r, t, \ell, \varepsilon \right)$-sequentially invertible if there exists an $\left( r, t, \ell \right)$-invertible circuit sequence $S$ that samples $\varepsilon$-close to $D$.
\end{definition}

\begin{definition} [$\left( r(n), t(n), \ell(n), \varepsilon(n) \right)$-sequentially invertible distribution family]
    Let $D = \{ D_{n} \}_{n \in \Nat}$ a family of distributions. Let $r, t, \ell : \Nat \rightarrow \Nat$ functions and let $\varepsilon : \Nat \rightarrow [0, 1]$.
    We say that the family of distributions $D$ is $\left( r(n), t(n), \ell(n), \varepsilon(n) \right)$-sequentially invertible if for every $n \in \Nat$, there exists an $\left( r(n), t(n), \ell(n) \right)$-invertible circuit sequence $S_{n}$ that samples $\varepsilon(n)$-close to $D_{n}$.
\end{definition}

Note that when considering sequential invertibility, we can interchange between considering the sequential invertibility of distributions and circuits.

\begin{remark} [Interchanging between sequentially invertible circuits and distributions]
    While we define sequential invertibility for a distribution $D$ over a set $\{ 0, 1 \}^{ k }$, we sometimes refer to sequential invertibility for circuits $C : \{ 0, 1 \}^{k'} \rightarrow \{ 0, 1 \}^{k}$. This extension is natural, as circuits are in particular distributions -- we can consider the output distributions a circuit $C : \{ 0, 1 \}^{k'} \rightarrow \{ 0, 1 \}^{k}$ given a uniformly random $k'$-bit string. Additionally, a circuit contains not only the information of the distribution it computes, but a way to sample from the distribution, using the standard set of classical gates $\{ NOT, OR, AND \}$.
\end{remark}

\paragraph{The Sequentially Invertible Statistical Difference Problem $\SISD_{a(n), b(n), r(n)}$.}
Our last new definition in this work is for a new computational problem. We combine the notions of sequential invertibility presented here above, with the computational problem of Statistical Difference $\SD$, to define the Sequentially Invertible Statistical Difference ($\SISD$) Problem. We take the same steps as we took when we defined the problem $\SD$.

\begin{definition} [The Sequentially Invertible Statistical Difference Problem] \label{definition:sisd_problem}
    Let $a, b : \Nat \rightarrow [0, 1]$, $r : \Nat \rightarrow \Nat \cup \{ 0 \}$ functions, such that for every $n \in \Nat$, $a(n) \leq b(n)$ and $r(n) \in \{ 0, 1, \cdots , n \}$. The $\left( a(n), b(n), r(n) \right)$-gap Sequentially Invertible Statistical Difference problem, denoted $\SISD_{a(n), b(n), r(n)}$ is a promise problem, where the input is a pair of sequences,
    $$
    \left(
    \left( C^{0}_{i, \rightarrow}, C^{0}_{i, \leftarrow} \right)_{i \in [\ell]},
    \left( C^{1}_{i, \rightarrow}, C^{1}_{i, \leftarrow} \right)_{i \in [\ell]}
    \right) \enspace ,
    $$
    where for each $b \in \{ 0, 1 \}$, $\left( C^{b}_{i, \rightarrow}, C^{b}_{i, \leftarrow} \right)_{i \in [\ell]}$ is an $r(n)$-invertible circuit sequence on some (identical) number of bits $k$ (as in Definition \ref{definition:invertible_circuit_sequence}), where $n$ is the input size,
    $$
    n :=
    \bigg|
    \left(
    \left( C^{0}_{i, \rightarrow}, C^{0}_{i, \leftarrow} \right)_{i \in [\ell]},
    \left( C^{1}_{i, \rightarrow}, C^{1}_{i, \leftarrow} \right)_{i \in [\ell]}
    \right)
    \bigg|
    \enspace .
    $$

    For a pair of circuit sequences, we define their output distributions:
    $$
    D_{b}
    :=
    \biggl\{
    C^{b}_{\ell, \rightarrow}\left(
    \cdots
    C^{b}_{2, \rightarrow}\left(
    C^{b}_{1, \rightarrow}\left( 0^{k(n)} ; z_1 \right); z_2 \right)
    \cdots
    ; z_{\ell} \right)
    \biggr\}
    \enspace .
    $$
    The problem is defined as follows.
    \begin{itemize}
        \item
        $\YES$ is the set of pairs of sequences such that $\norm{D_{0} - D_{1}}_{TV} \leq a(n)$,

        \item
        $\NO$ is the set of pairs of sequences such that $\norm{D_{0} - D_{1}}_{TV} > b(n)$.
    \end{itemize}
\end{definition}

Note that in the above definition, while we consider sequentially invertible sampler sequences as the input to the problem, we focus only on $r$-invertibility, and ignore the parameters $t \in \Nat$, $\ell \in \Nat$, $\varepsilon \in [0, 1]$ from the original definition. The reason is that all parameters, except $r$, are implicit in the problem input and problem definition. More precisely, the parameters $t, \ell$ are meaningless as the problem input is the sampler sequence (which defines $t, \ell$), and the total variation distance bound parameter $\varepsilon$ is ignored, as the total variation distance demands depend on $a(n)$ and $b(n)$.

Finally, as in the case of the (standard) Statistical Difference problem $\SD$, we define the family of problems where the gap is not too small, and formally, lower-bounded inverse-polynomially. To this end we define the family of problems $\SISD_{\poly}$ as follows.

\begin{definition} [The Polynomial-Gap Sequentially Invertible Statistical Difference Problem]
    Let $r : \Nat \rightarrow \Nat \cup \{ 0 \}$ a function such that for every $n \in \Nat$, $r(n) \in \{ 0, 1, \cdots , n \}$. The polynomially-bounded sequentially invertible statistical difference problem, denoted $\SISD_{\poly, \, r(n)}$ is a set of promise problems. It is defined as follows.
    $$
    \SISD_{\poly, \, r(n)}
    :=
    \bigcup_{
    \substack{
    a(n), b(n) : \Nat \rightarrow [0,1], \\
    \exists \text{ a polynomial } \poly : \Nat \rightarrow \Nat \text{ such that } \forall n \in \Nat : b(n)^{2} - 2a(n) + a(n)^{2} \geq \frac{1}{\poly(n)}
    }
    }
    \{ \SISD_{a(n), b(n), r(n)} \}
    \enspace .
    $$
\end{definition}

\section{$\SISD_{\poly, O\left( \log(n) \right)} \in \BQP^{\OI}$} \label{section:sisd_algorithm}
In this section we prove that a quantum computer with access to an order interference (OI) oracle is able to solve $\SISD_{\poly, O\left( \log(n) \right)}$ in polynomial time. For the result in this section, a uniform computable OI oracle (Definition \ref{definition:computable_uniform_order_interference_oracle}) suffices, and more so, the simplified Corollary \ref{corollary_definition:oi_pure_states} of the definition.

We first prove a general lemma about an ability of an OI oracle. Basically, the lemma says that an OI oracle can allow an interference between unitary execution choices and not only between unitary execution orders.

\begin{definition} [Choice Interference State] \label{definition:choice_interference_state}
    Let $\{ U_{i} \}_{i \in [m]}$ a set of $m \in \Nat$ unitary circuits, all operating on the same number of qubits $n \in \Nat$. Let $\ket{\psi}$ an $n$-qubit quantum state. The choice interference state of $\left( \{ U_{i} \}_{i \in [m]}, \ket{\psi} \right)$ is denoted $\OI_{c}\left( \{ U_{i} \}_{i \in [m]}, \ket{\psi} \right)$ and defined to be
    $$
    \OI_{c}\left( \{ U_{i} \}_{i \in [m]}, \ket{\psi} \right)
    :=
    \sum_{i \in [m]} U_{i} \cdot \ket{ \psi } \enspace .
    $$
\end{definition}

We are using the following standard fact from quantum computing.

\begin{fact}
    For every $c, N \in \Nat$ there exists an $\ceil{ \log\left( N \right) }$-qubit quantum circuit $U_{\left( +c, N \right)}$ that executes in polynomial-time (in its input size) and possibly uses additional ancillary qubits, such that,
    $$
    \forall a \in \{ 0, 1, 2, \cdots, N - 1 \} : 
    U_{\left( +c, N \right)} \cdot \ket{a} = \ket{a + c \mod N}
    \enspace .
    $$
\end{fact}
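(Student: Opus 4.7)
The plan is to reduce the claim to a standard application of reversible computing. First I would observe that for every $c, N \in \Nat$, the classical function $f_{c, N} : \{ 0, 1, \dots, N - 1 \} \rightarrow \{ 0, 1, \dots, N - 1 \}$ defined by $f_{c, N}(a) := (a + c) \mod N$ is a bijection, with explicit inverse $f_{c, N}^{-1} = f_{(N - c) \mod N, \, N}$. Both $f_{c, N}$ and its inverse are computable by uniform classical Boolean circuits of size $\poly\left( \log(c \cdot N) \right)$, using schoolbook addition followed by a conditional subtraction of $N$ to reduce modulo.

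Before invoking quantum machinery, I would extend $f_{c, N}$ to a bijection $\widetilde{f}_{c, N}$ on all of $\{ 0, 1 \}^{\lceil \log(N) \rceil}$ by letting it act as the identity on inputs $a \in \{ N, N + 1, \dots, 2^{\lceil \log(N) \rceil} - 1 \}$. The extension remains a bijection, is classically efficiently computable together with its inverse, and agrees with $f_{c, N}$ on the relevant domain $\{ 0, 1, \dots, N - 1 \}$. Crucially, $c$ and $N$ are hard-wired at circuit-synthesis time, so $\widetilde{f}_{c, N}$ is a fixed Boolean function rather than one computed from data inputs.

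Next I would apply the standard Bennett compute-uncompute trick. Replacing each classical gate by its Toffoli-based reversible counterpart yields XOR-oracle unitaries $V_{f}$ and $V_{f^{-1}}$ acting on a data register and an ancilla register, each of size $\poly\left( \log(c \cdot N) \right)$, and each mapping $\ket{a}\ket{y} \mapsto \ket{a}\ket{y \oplus \widetilde{f}_{c, N}^{\pm 1}(a)}$. Starting from $\ket{a}\ket{0^{\lceil \log(N) \rceil}}$, apply $V_{f}$ to obtain $\ket{a}\ket{\widetilde{f}_{c, N}(a)}$; then apply the XOR oracle for $\widetilde{f}_{c, N}^{-1}$ in the reversed role, XOR-ing $\widetilde{f}_{c, N}^{-1}$ of the right register into the left, which yields $\ket{0^{\lceil \log(N) \rceil}}\ket{\widetilde{f}_{c, N}(a)}$. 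A final $\mathrm{SWAP}$ between the two registers gives $\ket{\widetilde{f}_{c, N}(a)}\ket{0^{\lceil \log(N) \rceil}}$, and the now-clean ancilla can be discarded.

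The main obstacle is really just bookkeeping rather than substance: one must ensure that $N$ need not be a power of $2$ (handled by the identity extension above) and that the ancilla is returned to $\ket{0^{\lceil \log(N) \rceil}}$ at the end, so that $U_{(+c, N)}$ can be composed coherently inside larger superpositions (which is essential, since later uses of this fact and of Fact \ref{fact:adding_unitarily_a_mod_N} inside the choice-interference construction rely on this composability). Both concerns are resolved by the uncomputation step, yielding a unitary circuit of size and depth $\poly\left( \log(c \cdot N) \right)$, as claimed.
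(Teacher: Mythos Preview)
Your proposal is correct. The paper itself does not prove this fact at all: it is stated without proof as ``a standard fact from quantum computing,'' and the surrounding discussion (in particular Fact~\ref{fact:adding_unitarily_a_mod_N} and Fact~\ref{fact:unitaries_for_classical_bijections}) simply takes for granted that efficiently computable classical bijections lift to efficient in-place unitaries. Your argument supplies exactly the standard justification behind those facts---observe that $f_{c,N}$ is an efficiently computable bijection with efficiently computable inverse, extend to the full $\lceil \log N \rceil$-bit domain by the identity, and apply the Bennett compute--copy--uncompute construction to get a clean-ancilla unitary. This is the canonical route, and your attention to the non-power-of-two case and to ancilla cleanliness (needed for coherent composition inside the choice-interference construction) is appropriate; there is nothing to correct.
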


The next lemma says that an OI oracle can be used in order to simulate oracle access to choice interference with a factorial reduction in needed complexity, for successfully executing the transformation.

\begin{lemma} [Choice interference from order interference] \label{lemma:choice_interference_oracle_from_OI}
    Let $\OOI\left( \cdot, \cdot, \cdot \right)$ a uniform order interference oracle as in Definition \ref{definition:computable_uniform_order_interference_oracle}. We define the procedure $Q_{CI}$, which is a quantum algorithm with oracle access to $\OOI$. The procedure's input is triplet $\left( \{ U_{i} \}_{i \in [m]}, \ket{\psi}, \lambda \right)$ as in the case of the oracle $\OOI$.

    For every $i \in [m]$ define the unitary circuit $\Tilde{U}_{i}$ that acts on $n + \ceil{ \log\left( m \right) }$ qubits, and is defined as follows:
    \begin{itemize}
        \item
        Execute the circuit $U^{c}_{i}$, which executes $U_{i}$ on the left $n$ qubits, conditioned on the state of the $\ceil{ \log(m) }$ qubits on the right being $\ket{ 0^{ \ceil{ \log(m) } } }$.

        \item 
        Execute the unitary $U_{\left( +1, 2^{ \ceil{ \log(m) } } \right)}$ on the right $\ceil{ \log(m) }$ qubits.
    \end{itemize}
    
    $Q_{CI}\left( \{ U_{i} \}_{i \in [m]}, \ket{\psi}, \lambda \right)$ executes the oracle call
    $$
    \OOI
    \left(
    \{ \Tilde{U}_{i} \}_{ i \in [m] },
    \ket{\psi}\ket{ 0^{ \ceil{\log\left( m \right)} } },
    \lambda
    \right) \enspace ,
    $$
    returns the success/fail bit of the oracle's output and the output state, after tracing out the rightmost $\ceil{\log\left( m \right)}$ qubits.
    The procedure $Q_{CI}$ executes in time $O\left( \lambda + \log\left( m \right) \cdot \left( \sum_{i \in [m]} |U_{i}| \right) \right)$ and with probability at least
    $$
    \left(
    \frac
    { \sum_{ x \in \{ 0, 1 \}^{n} } | \sum_{ i \in [m] } \alpha_{x, i} | }
    { \sum_{ x \in \{ 0, 1 \}^{n} }  \sum_{ i \in [m] } | \alpha_{x, i} | }
    \right)
    \frac
    { \frac{ \norm{\OI_{c}\left( \{ U_{i} \}_{i \in [m]}, \ket{ \psi } \right)} }{ m } }
    { \frac{ \norm{\OI_{c}\left( \{ U_{i} \}_{i \in [m]}, \ket{ \psi } \right)} }{ m } + \frac{1}{\lambda} } 
    \enspace ,
    $$
    outputs the normalization of the choice interference state $\OI_{c}\left( \{ U_{i} \}_{i \in [m]}, \ket{ \psi } \right)$, where for $i \in [m]$, we define $U_{i} \cdot \ket{\psi} := \sum_{ x \in \{ 0, 1 \}^{n} } \alpha_{x, i} \cdot \ket{x}$.
\end{lemma}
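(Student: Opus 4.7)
The plan is to show that the modified unitaries $\{\Tilde{U}_i\}_{i \in [m]}$ collapse the superposition over execution orders $\sigma \in S_m$ into a superposition over unitary \emph{choices}, with the counter register acting as a disentangled ``bookkeeping'' register. First, I would verify the complexity bound: each $\Tilde{U}_i$ consists of an $(\lceil \log m \rceil)$-controlled version of $U_i$, which can be implemented with an $O(\log m)$-size multi-controlled AND computation/uncomputation into an ancilla plus one controlled-$U_i$, followed by the fixed increment $U_{(+1, 2^{\lceil \log m \rceil})}$. Thus $|\Tilde{U}_i| = O(\log m \cdot |U_i|)$, and the call to $\OOI$ takes the claimed time $O(\lambda + \log(m) \cdot \sum_i |U_i|)$.

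The central observation is that for any $\sigma \in S_m$, in the product $\prod_{j \in [m]} \Tilde{U}_{\sigma^{-1}(j)}$ applied to $\ket{\psi}\ket{0^{\lceil \log m \rceil}}$, only the first applied unitary $\Tilde{U}_{\sigma^{-1}(1)}$ sees the counter register in the state $\ket{0^{\lceil \log m \rceil}}$; all subsequent $\Tilde{U}_{\sigma^{-1}(j)}$ for $j \geq 2$ find the counter in a nonzero state, so their controlled $U^c$ component acts as identity on the left $n$ qubits. Meanwhile each $\Tilde{U}$ unconditionally increments the counter, so after all $m$ unitaries act the counter register is in $\ket{m \bmod 2^{\lceil \log m \rceil}}$ independently of $\sigma$. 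Setting $i := \sigma^{-1}(1)$, this gives
$$
\prod_{j \in [m]} \Tilde{U}_{\sigma^{-1}(j)} \cdot \ket{\psi}\ket{0^{\lceil \log m \rceil}}
= \left( U_i \ket{\psi} \right) \otimes \ket{m \bmod 2^{\lceil \log m \rceil}} \enspace .
$$
Summing over all $\sigma$, and partitioning $S_m$ into the $m$ classes of size $(m-1)!$ indexed by $i = \sigma^{-1}(1)$, yields
$$
\OI\!\left( \{\Tilde{U}_i\}_{i \in [m]}, \ket{\psi}\ket{0^{\lceil \log m \rceil}} \right)
= (m-1)! \cdot \OI_c\!\left( \{U_i\}_{i \in [m]}, \ket{\psi} \right) \otimes \ket{m \bmod 2^{\lceil \log m \rceil}} \enspace ,
$$
whose norm is $(m-1)! \cdot \norm{\OI_c(\{U_i\}, \ket{\psi})}$.

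Next, I would verify that both components of the OI oracle's success probability (as given by Corollary~\ref{corollary_definition:oi_pure_states}) match the target in the statement. For the amplifiable norm factor, scaling by $m!$ gives $\tfrac{(m-1)! \cdot \norm{\OI_c}}{m!} = \tfrac{\norm{\OI_c}}{m}$, exactly as required. For the phase-alignment factor, write basis states of the enlarged system as $(x, c)$ with $c \in \{0,1\}^{\lceil \log m \rceil}$, and let $\beta_{(x,c), \sigma}$ denote the amplitude of $\ket{x}\ket{c}$ in $\prod_j \Tilde{U}_{\sigma^{-1}(j)} \ket{\psi}\ket{0^{\lceil \log m \rceil}}$. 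By the above analysis, $\beta_{(x,c),\sigma} = \alpha_{x, \sigma^{-1}(1)}$ when $c = m \bmod 2^{\lceil \log m \rceil}$ and is zero otherwise. A direct calculation then shows that the numerator $\sum_{(x,c)} |\sum_\sigma \beta_{(x,c),\sigma}|$ equals $(m-1)! \sum_x |\sum_i \alpha_{x,i}|$ (the indicator on $c$ absorbs all but one term) while the denominator $\sum_{(x,c)} \sum_\sigma |\beta_{(x,c),\sigma}|$ equals $(m-1)! \sum_i \sum_x |\alpha_{x,i}|$, so the $(m-1)!$ factors cancel and the ratio matches the phase-alignment factor in the lemma's success probability.

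Finally, conditional on the OI oracle succeeding, the resulting state is the normalization of the displayed tensor product above; since the counter register is in a fixed product state $\ket{m \bmod 2^{\lceil \log m \rceil}}$, it is disentangled from the left $n$ qubits and tracing it out preserves purity, yielding exactly the normalized $\OI_c(\{U_i\}, \ket{\psi})$. The main subtlety I expect is in the amplitude bookkeeping for the phase-alignment factor, where one must be careful to index by the pair $(x, c)$ rather than just $x$ so that the vanishing contribution from $c \neq m \bmod 2^{\lceil \log m \rceil}$ is correctly accounted for; everything else is a direct consequence of the definition of $\OOI$ combined with the ``first unitary wins'' structure of $\Tilde{U}_i$.
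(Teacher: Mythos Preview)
Your proposal is correct and follows essentially the same approach as the paper: both identify that only the first-applied $\Tilde{U}_{\sigma^{-1}(1)}$ acts nontrivially on the main register, that the counter ends in the fixed state $\ket{m \bmod 2^{\lceil \log m\rceil}}$, and that partitioning $S_m$ by $\sigma^{-1}(1)$ yields $\OI(\{\Tilde{U}_i\},\ket{\psi}\ket{0^{\lceil\log m\rceil}}) = (m-1)!\cdot \OI_c(\{U_i\},\ket{\psi})\otimes\ket{m \bmod 2^{\lceil\log m\rceil}}$. Your treatment is in fact slightly more explicit than the paper's in justifying the complexity bound and in tracking the phase-alignment factor via the amplitudes $\beta_{(x,c),\sigma}$.
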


\begin{proof}
    Let $\sigma \in S_{m}$ an execution order of the unitaries in $\{ \Tilde{U}_{i} \}_{i \in [m]}$ on the state $\ket{\psi}\ket{ 0^{ \ceil{\log\left( m \right)} } }$.
    We consider what happens to the state in each of the unitary execution orders $\sigma \in S_{m}$. Recall that executing the unitary $\Tilde{U}_{i}$ means executing $U^{c}_{i}$ and then executing $U_{\left( +1, 2^{ \ceil{ \log(m) } } \right)}$. Two things can be easily verified.
    \begin{itemize}
    \item
    For $i \in [m]$ and any execution order $\sigma \in S_{m}$ such that $\sigma^{-1}\left( i \right) = 1$, the state in the left $n$ qubits at the end of the computation is $U_{i} \cdot \ket{\psi}$. In simple terms, the only unitary that effectively executes on the side of the input state $\ket{\psi}$ is the first one in the ordering.

    \item 
    When we look at the state of the right $\ceil{ \log(m) }$-qubit counter register, for \emph{any} permutation $\sigma \in S_{m}$, the state is always $m \mod 2^{ \ceil{ \log(m) } }$.
    \end{itemize}

    For $\sigma \in S_{m}$ we write
    $$
    \prod_{i \in [m]} U_{\sigma^{-1}\left( i \right)} \cdot \ket{\psi} := \sum_{ x \in \{ 0, 1 \}^{n} } \alpha_{x, \sigma} \ket{x}
    \enspace ,
    $$
    and for $i \in [m]$, we define $U_{i} \cdot \ket{\psi} := \sum_{ x \in \{ 0, 1 \}^{n} } \alpha_{x, i} \cdot \ket{x}$. The above observations imply
    $$
    \frac
    { \sum_{ x \in \{ 0, 1 \}^{n} } |\sum_{ \sigma \in S_{m} } \alpha_{x, \sigma}| }
    { \sum_{ x \in \{ 0, 1 \}^{n} } \sum_{ \sigma \in S_{m} } |\alpha_{x, \sigma}| }
    =
    \frac
    { \sum_{ x \in \{ 0, 1 \}^{n} } | \sum_{ i \in [m] } \alpha_{x, i} | }
    { \sum_{ x \in \{ 0, 1 \}^{n} }  \sum_{ i \in [m] } | \alpha_{x, i} | }
    \enspace .
    $$
    
    Also, the above observations imply,
    $$
    \OI
    \left(
    \{ \Tilde{U}_{i} \}_{i \in [m]},
    \ket{\psi}\ket{ 0^{ \ceil{ \log(m) } } }
    \right)
    \underset{(\text{as we explained above})}{=}
    \sum_{\sigma \in S_{m}, i \in [m] \text{ such that } \sigma^{-1}\left( i \right) = 1} U_{i} \cdot \ket{\psi} \ket{ 0^{ \ceil{ \log(m) } } }
    $$
    $$
    =
    \sum_{i \in [m]}\left( \sum_{\sigma \in S_{m} \text{ such that } \sigma^{-1}\left( i \right) = 1} U_{i} \cdot \ket{\psi} \ket{ m \text{ mod } 2^{ \ceil{ \log(m) } } } \right)
    $$
    $$
    =
    \left( m - 1 \right)!
    \sum_{i \in [m]} \left( U_{i} \cdot \ket{\psi} \right) \otimes \ket{ m \text{ mod } 2^{ \ceil{ \log(m) } } } 
    $$
    $$
    :=
    \left( m - 1 \right)! \cdot \left( \OI_{c}\left( \{ U_{i} \}_{i \in [m]}, \ket{\psi} \right) \otimes \ket{ m \text{ mod } 2^{ \ceil{ \log(m) } } } \right)
    \enspace ,
    $$
    which conclude our proof.
\end{proof}

\paragraph{Main Theorems.}
At this point we are ready to prove the two main theorems of this section. We will use the following standard fact from quantum computing.

\begin{fact} [Unitary circuits for Classical Bijections] \label{fact:unitaries_for_classical_bijections}
    Let $C_{\rightarrow} : \{ 0, 1 \}^{k} \rightarrow \{ 0, 1 \}^{k}$, $C_{\leftarrow} : \{ 0, 1 \}^{k} \rightarrow \{ 0, 1 \}^{k}$ two classical circuits that are inverses of each other, that is,
    $$
    \forall x \in \{ 0, 1 \}^{k} :
    C_{\leftarrow}\left( C_{\rightarrow}\left( x \right) \right) = x \enspace .
    $$
    Then, there exists a $k$-qubit unitary circuit $U_{C}$ with complexity $O\left( |C_{\rightarrow}| + |C_{\leftarrow}| \right)$ such that,
    $$
    \forall x \in \{ 0, 1 \}^{k} :
    U_{C} \cdot \ket{x} = \ket{ C_{\rightarrow}\left( x \right) } \enspace .
    $$

    Moreover, there is a classical polynomial-time deterministic Turing Machine $M$ that computes the classical description of the unitary: $\left( C_{\rightarrow}, C_{\leftarrow} \right) \rightarrow U_{C}$.
\end{fact}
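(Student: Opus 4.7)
The plan is to use the standard Bennett-style uncomputation trick, which is precisely the setting where the inverse circuit $C_\leftarrow$ becomes essential: it is what lets us erase the input register without leaving any garbage behind.

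First, I would convert each of the classical circuits $C_\rightarrow$ and $C_\leftarrow$ into their reversible ``XOR-into-target'' implementations using the textbook gate-by-gate replacement of $\{\text{NOT}, \text{AND}, \text{OR}\}$ by Toffoli gates. For $Y \in \{\rightarrow, \leftarrow\}$, this yields a unitary circuit $V_{C_Y}$ acting on two $k$-qubit registers plus $k'_Y$ internal ancillas (where $k'_Y = O(|C_Y|)$), such that $V_{C_Y} \ket{x}\ket{y}\ket{0^{k'_Y}} = \ket{x}\ket{y \oplus C_Y(x)}\ket{0^{k'_Y}}$, with gate count $O(|C_Y|)$. The internal ancillas are cleanly returned to $\ket{0^{k'_Y}}$ by the standard compute/copy-out/uncompute pattern.

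Next, I apply $V_{C_\rightarrow}$, a SWAP, and $V_{C_\leftarrow}$ in sequence on the state $\ket{x}\ket{0^k}$ (together with a fresh internal-ancilla register that will cleanly return to zero): step (i) takes $\ket{x}\ket{0^k}$ to $\ket{x}\ket{C_\rightarrow(x)}$; step (ii) swaps the two $k$-qubit registers to give $\ket{C_\rightarrow(x)}\ket{x}$; step (iii) applies $V_{C_\leftarrow}$ to produce $\ket{C_\rightarrow(x)}\ket{x \oplus C_\leftarrow(C_\rightarrow(x))}$. Since $C_\leftarrow(C_\rightarrow(x)) = x$ for every $x \in \{0,1\}^k$ by hypothesis, the right register collapses to $\ket{0^k}$, and the overall operation acts as $\ket{x}\ket{0^k}\ket{0^{k'}} \mapsto \ket{C_\rightarrow(x)}\ket{0^k}\ket{0^{k'}}$ for $k' = k'_\rightarrow + k'_\leftarrow$. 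This is precisely a $k$-qubit, $(k+k')$-ancilla unitary circuit implementing the desired $U_C$, with total gate complexity $O(|C_\rightarrow| + |C_\leftarrow|)$ as required.

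Finally, for the ``moreover'' clause, each of the above transformations -- compiling each classical gate of $C_Y$ into its Toffoli-based reversible counterpart, inserting the SWAP layer, and concatenating the resulting gate lists -- is a local, syntactic manipulation of the circuit descriptions, and is clearly carried out by a deterministic polynomial-time Turing machine $M$ on input $(C_\rightarrow, C_\leftarrow)$. There is no real obstacle in the proof; the only conceptual point worth emphasizing is that the strictly weaker transformation $\ket{x}\ket{0^k} \mapsto \ket{x}\ket{C_\rightarrow(x)}$ needs only $C_\rightarrow$, while producing $\ket{C_\rightarrow(x)}$ \emph{with no copy of $x$ remaining} is exactly what forces the use of $C_\leftarrow$ in step (iii).
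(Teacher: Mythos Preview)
Your proof is correct and is precisely the standard Bennett compute--swap--uncompute argument. The paper itself does not prove this statement at all: it is introduced as ``the following standard fact from quantum computing'' and left without proof, so there is nothing to compare against beyond noting that your construction is the textbook one the paper is implicitly invoking.
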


We will use the following claim in the proof.

\begin{claim} [Minimal norm of choice interference for classical positive operations] \label{claim:minimal_norm_of_positive_states}
    Let $\ket{\psi} = \sum_{ x \in \{ 0, 1 \}^{k} } \alpha_{x} \cdot \ket{x}$ a $k$-qubit state such that for all $x \in \{ 0, 1 \}^{k}$, the amplitude $\alpha_{x}$ is real and non-negative. Let $\{ U_{i} \}_{i \in [m]}$ a set of $m$ unitary transformations, all act on the same number of qubits $k$, and such that all unitaries map from classical to classical states, formally:
    $$
    \forall i \in [m], x \in \{ 0, 1 \}^{k} \; \exists y_{i, x} \in \{ 0, 1 \}^{k} :
    U_{i} \cdot \ket{ x } = \ket{ y_{i, x} } \enspace .
    $$

    Then, we have the following lower bound on the norm of the choice interference state
    $$
    \norm{ \sum_{i \in [m]}U_{i} \cdot \ket{\psi} } \geq \sqrt{m} \enspace .
    $$
\end{claim}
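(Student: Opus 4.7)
The plan is to expand the choice interference state in the computational basis, exploit the fact that each $U_i$ is (essentially) a permutation matrix on basis states, and then bound the resulting sum of squares by a sum of squares of individual terms using the non-negativity of the amplitudes.

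First I would write, for each $i \in [m]$ and $x \in \{0,1\}^k$, $U_i \ket{x} = \ket{y_{i,x}}$ and note that since $U_i$ is unitary, the map $x \mapsto y_{i,x}$ is a bijection on $\{0,1\}^k$. Then
\[
\sum_{i \in [m]} U_i \ket{\psi} \;=\; \sum_{i \in [m]} \sum_{x \in \{0,1\}^k} \alpha_x \ket{y_{i,x}} \;=\; \sum_{z \in \{0,1\}^k} \beta_z \ket{z},
\]
where $\beta_z := \sum_{i \in [m]} \alpha_{U_i^{-1}(z)}$ and each $\beta_z$ is a sum of non-negative reals.

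Next I would compute the squared norm. Because the $\beta_z$ are non-negative, cross terms only help:
\[
\norm{\sum_{i \in [m]} U_i \ket{\psi}}^{2} = \sum_{z} \beta_z^{2} = \sum_{z}\Bigl(\sum_{i \in [m]} \alpha_{U_i^{-1}(z)}\Bigr)^{2} \geq \sum_{z}\sum_{i \in [m]} \alpha_{U_i^{-1}(z)}^{2},
\]
where the inequality follows from dropping the non-negative cross terms $2 \alpha_{U_i^{-1}(z)} \alpha_{U_j^{-1}(z)}$ for $i \neq j$. Swapping the order of summation and using that $U_i^{-1}$ permutes $\{0,1\}^k$, for each fixed $i$ we have $\sum_{z} \alpha_{U_i^{-1}(z)}^{2} = \sum_{x} \alpha_x^{2} = 1$ since $\ket{\psi}$ is a unit vector. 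Thus the last expression equals $m$, and taking square roots yields $\norm{\sum_{i} U_i \ket{\psi}} \geq \sqrt{m}$, as claimed.

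I do not expect any serious obstacle: the only place a subtlety could creep in is in treating $U_i$ as acting as a permutation on basis states (rather than permutation-with-phases), but the hypothesis $U_i \ket{x} = \ket{y_{i,x}}$ forbids phase factors by assumption, so the argument is clean. The key point enabling the bound is that both the input amplitudes $\alpha_x$ and the action of the $U_i$ on basis states carry no sign or phase information, so no destructive interference can occur and the sum of squares of sums dominates the sum of sums of squares.
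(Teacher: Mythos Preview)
Your proof is correct and follows essentially the same approach as the paper: expand the squared norm, use non-negativity of all amplitudes (and of the inner products $\langle y_{i,x}|y_{j,z}\rangle\in\{0,1\}$) to drop the cross terms, and recognize that the remaining diagonal contribution equals $m\sum_x \alpha_x^2 = m$. The only cosmetic difference is that you first group by output basis state $z$ and use the bijectivity of $x\mapsto y_{i,x}$ to re-index, whereas the paper keeps the full four-index sum and restricts to $i=j$, $x=z$ directly.
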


\begin{proof}
    We calculate the squared norm of the choice interference state for the proof.
    $$
    \norm{ \sum_{i \in [m]} U_{i} \cdot \ket{\psi} }^{2}
    =
    \norm{ \sum_{i \in [m]} \left( U_{i} \cdot \sum_{x \in \{ 0, 1 \}^{k} } \alpha_{x} \cdot \ket{x} \right) }^{2}
    $$
    $$
    =
    \norm{ \sum_{i \in [m]} \left( \sum_{ x \in \{ 0, 1 \}^{k} } \alpha_{x} \cdot \ket{ y_{i, x} } \right) }^{2}
    $$
    $$
    =
    \norm{ \sum_{ x \in \{ 0, 1 \}^{k} } \left( \alpha_{x} \sum_{i \in [m]} \ket{ y_{i, x} } \right) }^{2}
    $$
    $$
    =
    \left(
    \sum_{ x \in \{ 0, 1 \}^{k}, i \in [m] } \alpha_{x} \bra{ y_{i, x} }
    \right)
    \cdot
    \left(
    \sum_{ z \in \{ 0, 1 \}^{k}, j \in [m] } \alpha_{z} \ket{ y_{j, z} }
    \right)
    $$
    $$
    =
    \sum_{ x, z \in \{ 0, 1 \}^{k}, i, j \in [m] } \alpha_{x}\alpha_{z} \bra{ y_{i, x} }\ket{ y_{j, z} }
    $$
    $$
    \underset{(\text{all summands are non-negative})}{\geq}
    \sum_{ x \in \{ 0, 1 \}^{k}, i \in [m] } \alpha_{x}\alpha_{x} \bra{ y_{i, x} }\ket{ y_{i, x} }
    $$
    $$
    =
    \sum_{ x \in \{ 0, 1 \}^{k} } \alpha_{x}^{2} \cdot m
    =
    m \enspace .
    $$
\end{proof}

We next prove the first main theorem of this section.

\begin{theorem} [Quantum State Generation for Invertible Circuit Sequences] \label{theorem:quantum_state_generation_of_invertible_circuit_sequence}
    There exists $Q^{\OI}\left( \cdot, \cdot \right)$ a quantum algorithm with oracle access to a uniform OI oracle (as in Definition \ref{definition:computable_uniform_order_interference_oracle}), that gets as input (1) an amplification parameter $\delta \in \Nat$ and (2) an $r$-invertible circuit sequence $C := \left( C_{i, \rightarrow}, C_{i, \leftarrow} \right)_{i \in [\ell]}$ on $k$ bit strings, such that the total description size of the sequence is denoted with $n$.
    
    The algorithm $Q^{\OI}$ executes in time $O\left( \delta \cdot \poly(n) \cdot 2^{r} \right)$ for some polynomial $\poly(\cdot)$, and outputs a success/fail bit $b \in \{ 0, 1 \}$ together with a $k$-qubit quantum state $\ket{\phi}$. The algorithm succeeds (outputs $b = 1$) with probability $\geq 1 - \frac{1}{\delta}$ and in that case the state $\ket{\phi}$ is the (normalization of the) output distribution state of the circuit sequence:
    \begin{equation}
    \ket{ C(R) }
    :=
    \sum_{
    \substack{
    z_{1} \in \{ 0, 1 \}^{ r_{1} }, \\ \vdots \\ z_{\ell} \in \{ 0, 1 \}^{ r_{\ell} }
    }
    }
    \ket{
    C_{\ell, \rightarrow}\left(
    \cdots
    C_{2, \rightarrow}\left(
    C_{1, \rightarrow}\left( 0^{k} ; z_1 \right);
    z_2 \right)
    \cdots ;
    z_{\ell} \right)
    }
    \enspace .
    \end{equation}
\end{theorem}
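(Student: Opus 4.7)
The plan is to build $\ket{C(R)}$ level by level, maintaining the invariant that after pass $i \in \{0, 1, \ldots, \ell\}$ the register holds the normalization of $\ket{C_{1,\ldots,i}(R_{1,\ldots,i})}$. The base case is $\ket{0^k}$. For the inductive step, each fixed-randomness slice $C_{i+1,\rightarrow}(\cdot; z_{i+1})$, together with its inverse $C_{i+1,\leftarrow}(\cdot; z_{i+1})$, defines via Fact~\ref{fact:unitaries_for_classical_bijections} a $k$-qubit unitary $U_{i+1, z_{i+1}}$ of size $\poly(n)$ mapping $\ket{x} \mapsto \ket{C_{i+1,\rightarrow}(x; z_{i+1})}$. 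Advancing from level $i$ to level $i+1$ then reduces to applying $\sum_{z_{i+1}} U_{i+1, z_{i+1}}$ and renormalizing, which is exactly one call to the choice-interference procedure $Q_{CI}$ of Lemma~\ref{lemma:choice_interference_oracle_from_OI} on $m_{i+1} := 2^{r_{i+1}} \leq 2^r$ unitaries.

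At pass $i+1$ I would invoke $Q_{CI}$ with amplification parameter $\lambda_{i+1} := \lceil \delta \cdot \ell \cdot \sqrt{m_{i+1}} \rceil$, declaring global failure if any of the $\ell$ calls reports failure. Two observations pin down the per-call success probability from Lemma~\ref{lemma:choice_interference_oracle_from_OI}. First, each $U_{i+1, z_{i+1}}$ is a classical permutation unitary, and inductively the current state has purely nonnegative real computational-basis amplitudes (trivially so for $\ket{0^k}$, and preserved because the $Q_{CI}$ output is the normalization of a nonnegative real sum of classical-basis kets), so the phase-alignment prefactor in the success probability equals exactly $1$. Second, Claim~\ref{claim:minimal_norm_of_positive_states} gives $\norm{\OI_c(\{U_{i+1, z}\}, \ket{\phi_i})} \geq \sqrt{m_{i+1}}$, so the scaled norm is at least $1/\sqrt{m_{i+1}}$; the computation in Lemma~\ref{lemma:OI_oracle_amplification_success_probability} then shows that the above choice of $\lambda_{i+1}$ yields per-call success probability at least $1 - 1/(\delta \ell)$. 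A union bound over the $\ell$ passes gives overall success probability at least $1 - 1/\delta$.

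For the runtime, a single pass costs $O(\lambda_{i+1} + \log(m_{i+1}) \cdot m_{i+1} \cdot \poly(n))$ by Lemma~\ref{lemma:choice_interference_oracle_from_OI}, which is $O(\delta \ell \cdot 2^{r/2} + 2^r \cdot \poly(n))$ since $m_{i+1} \leq 2^r$. Summing over $\ell \leq \poly(n)$ passes yields total runtime $O(\delta \cdot \poly(n) \cdot 2^r)$, and on success the final register state is exactly the normalization of $\ket{C(R)}$.

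The only delicate point I expect, and the step I would verify most carefully, is the propagation of the nonnegative-real-amplitude invariant across OI oracle calls, since this is what makes the phase-alignment prefactor in the $\OOI$ success probability equal to $1$ and thereby reduces the analysis to the scaled norm alone. This invariant holds because a normalized nonnegative linear combination of classical-basis vectors is again a nonnegative real unit vector, so the hypothesis of Claim~\ref{claim:minimal_norm_of_positive_states} remains valid at every level. With that invariant in hand, the rest of the argument is just bookkeeping on the two components of the $\OOI$ success probability and a union bound.
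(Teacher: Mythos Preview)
Your proposal is correct and follows essentially the same approach as the paper: build $\ket{C(R)}$ level by level via $\ell$ calls to the choice-interference procedure $Q_{CI}$ of Lemma~\ref{lemma:choice_interference_oracle_from_OI}, using Fact~\ref{fact:unitaries_for_classical_bijections} to obtain the per-randomness unitaries, Claim~\ref{claim:minimal_norm_of_positive_states} for the $\sqrt{m_{i+1}}$ norm lower bound, Lemma~\ref{lemma:OI_oracle_amplification_success_probability} for the per-call success probability, and a union bound over the $\ell$ passes. The only minor difference is that you set $\lambda_{i+1} = \lceil \delta \ell \sqrt{m_{i+1}}\rceil$ per level while the paper uses the uniform $\delta \ell \sqrt{2^r}$, and you spell out the phase-alignment prefactor equaling $1$ more explicitly than the paper does---both harmless variations.
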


\begin{proof}
    Let $k \in \Nat$ the output size of the invertible circuit sequence. For every $i \in [\ell]$, for every $z_{i} \in \{ 0, 1 \}^{r_{i}}$, since this is an invertible circuit sequence, the two circuits $C_{i, \rightarrow}\left( \cdot ; z_{i} \right)$, $C_{i, \leftarrow}\left( \cdot ; z_{i} \right)$ are inverses of each other. It follows from Fact \ref{fact:unitaries_for_classical_bijections} that we can efficiently compute a description of the unitary $U_{i, z_{i}}$ that acts on $k$ qubits, has complexity $O\left( |C_{i, \rightarrow}\left( \cdot ; z_{i} \right)| + |C_{i, \leftarrow}\left( \cdot ; z_{i} \right)| \right)$ and maps,
    $$
    \forall x \in \{ 0, 1 \}^{k} :
    U_{i, z_{i}} \cdot \ket{x} = \ket{ C\left( x ; z_{i} \right) } \enspace .
    $$

    Next, observe that for every $i \in [\ell]$, for every $k$-qubit state $\ket{\phi}$ such that all of its amplitudes are real and non-negative, the vector $\sum_{ z_{i} \in \{ 0, 1 \}^{r_{i}} } U_{i, z_{i}} \cdot \ket{\phi}$ has only real and non-negative amplitudes. This follows because for every $i \in [\ell]$, the unitaries in $\{ U_{i, z_{i}} \}_{ z_{i} \in \{ 0, 1 \}^{r_{i}} }$ all map classical to classical states. It follows that the normalization of the choice interference state $\sum_{ z_{i} \in \{ 0, 1 \}^{r_{i}} } U_{i, z_{i}} \cdot \ket{\phi}$ is a quantum state with only real and non-negative amplitudes.

    According to Claim \ref{claim:minimal_norm_of_positive_states}, for every $i \in [\ell]$ and every such $\ket{\phi}$ as above, 
    $$
    \norm{ \sum_{ z_{i} \in \{ 0, 1 \}^{r_{i}} }U_{i} \cdot \ket{\phi} } \geq \sqrt{ 2^{r_{i}} } \enspace .
    $$
    It thus follows that for every $k$-qubit state $\ket{\phi}$ such that all of its amplitudes are real and non-negative, we have,
    \begin{equation} \label{equation:invertible_circuit_state_generation_step_i}
        \frac{ \norm{ \OI_{c}\left( \{ U_{i, z_{i}} \}_{ z_{i} \in \{ 0, 1 \}^{r_{i}} }, \ket{ \phi } \right) } }{ 2^{r_{i}} }
        \geq 
        \frac{1}{ \sqrt{ 2^{r_{i}} } }
        \geq 
        \frac{1}{ \sqrt{ 2^{r} } }
        \enspace .
    \end{equation}
    By induction on $i$, one can verify that for every $i \in [\ell]$, executing the choice interference oracle (the algorithm $Q_{CI}$ from Lemma \ref{lemma:choice_interference_oracle_from_OI}) on the triplet $\left( \{ U_{i, z_{i}} \}_{ z_{i} \in \{ 0, 1 \}^{r_{i}} }, \ket{ C_{1, \cdots, i - 1}(R_{1, \cdots, i - 1}) }, \delta \cdot \ell \cdot \sqrt{ 2^{r} } \right)$, has the following properties.
    \begin{itemize}
        \item
        The oracle call takes time complexity $O\left( \delta \cdot \ell \cdot \sqrt{ 2^{r} } + r_{i} \cdot \sum_{ z_{i} \in \{ 0, 1 \}^{ r_{i} } } |U_{i, z_{i}}| \right)$, which is bounded by $O\left( \delta \cdot n^{2} \cdot 2^{ r } \right)$, because $\ell \leq n$ and $r_{i} \cdot \sum_{ z_{i} \in \{ 0, 1 \}^{ r_{i} } } |U_{i, z_{i}}| \leq n \cdot 2^{r_{i}} \leq n \cdot 2^{r}$.

        \item 
        The choice interference transformation succeeds with probability $\geq 1 - \frac{1}{\delta \cdot \ell}$, due to Lemma \ref{lemma:OI_oracle_amplification_success_probability} together with Equation \ref{equation:invertible_circuit_state_generation_step_i}.

        \item 
        After a successful transformation, the output state is
        $$
        \ket{ C_{1, \cdots, i}(R_{1, \cdots, i}) }
        :=
        \sum_{
        \substack{
        z_{1} \in \{ 0, 1 \}^{ r_{1} }, \\ \vdots \\ z_{i} \in \{ 0, 1 \}^{ r_{i} }
        }
        }
            \ket{
            C_{i, \rightarrow}\left(
            \cdots
            C_{1, \rightarrow}\left( 0^{k} ; z_1 \right)
            \cdots ;
            z_{i} \right)
        }
        \enspace ,
        $$
        which follows based on the equality,
        $$
        \OI_{c}\left( \{ U_{i, z_{i}} \}_{ z_{i} \in \{ 0, 1 \}^{r_{i}} }, \ket{ C_{1, \cdots, i - 1}(R_{1, \cdots, i - 1}) } \right)
        =
        \ket{ C_{1, \cdots, i}(R_{1, \cdots, i}) }
        \enspace .
        $$
    \end{itemize}

    It can be verified that if all $\ell$ executions of the algorithm $Q_{CI}$ succeed, then by the correctness of the induction, the state we get at the end is the desired normalization of $\ket{ C(R) }$. Finally, all $\ell$ calls together take time complexity $\ell \cdot O\left( \delta \cdot n^{2} \cdot 2^{r} \right) = O\left( \delta \cdot n^{3} \cdot 2^{r} \right)$, and the probability that all calls succeed is $1 - p$ where $p$ is the probability that one of the calls fail. As we saw, for each call, the probability to fail is bounded by $\frac{1}{\delta \cdot \ell}$. By union bound, we have $p \leq \ell \cdot \frac{1}{\delta \cdot \ell} = \frac{1}{\delta}$, which finishes the proof.
\end{proof}

Finally, we conclude with the second main theorem, that says that a quantum computer with computable order interference can be used to solve the sequentially invertible statistical difference problem in quantum polynomial time, whenever the sequential invertibility parameter $r$ is at most logarithmic in the input size.

\begin{theorem} [$\SISD_{\poly, O\left( \log(n) \right)} \in \BQP^{\OI}$] \label{theorem:SISD_in_BQP_OI}
    Let $\SISD_{a, b, r} := \left( \YES, \NO \right)$ an $\SISD$ promise problem, such that $\SISD_{a, b, r} \in \SISD_{\poly, \, O\left( \log(n) \right)}$. Then, there is $M^{\OI}$ a $\QPT$ algorithm with oracle access to a computable uniform order interference oracle $\OOI$ (Definition \ref{definition:computable_uniform_order_interference_oracle}), that for every $x \in \YES$, $M^{\OI}(x) = 1$ with probability at least $p(x)$ and for every $x \in \NO$, $M^{\OI}(x) = 0$ with probability at least $p(x)$, such that
    $$
    \forall x \in \left( \YES \cup \NO \right) : p(x) \geq 1 - 2^{-\poly\left(|x|\right)} \enspace ,
    $$
    for some polynomial $\poly(\cdot)$.
\end{theorem}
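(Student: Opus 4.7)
Let $(C^0, C^1)$ be an input to $\SISD_{a, b, r}$ with $r(n) = O(\log n)$ and gap $b(n)^2 - 2a(n) + a(n)^2 \geq 1/q(n)$ for some polynomial $q$. Denote by $D_0, D_1$ the output distributions of the two input circuit sequences. The plan is a fidelity-based distinguisher: produce quantum encodings of $D_0$ and $D_1$ using the OI oracle, estimate their overlap via a Swap Test, then amplify by repetition and decide.

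First, I would invoke Theorem \ref{theorem:quantum_state_generation_of_invertible_circuit_sequence} on each of the two input sequences with amplification parameter $\delta := 2^{n}$. Since $r(n) = O(\log n)$, we have $2^{r(n)} = \poly(n)$, so the stated runtime $O(\delta \cdot \poly(n) \cdot 2^{r(n)})$ is polynomial in $n$; each call runs in $\QPT$ relative to the uniform OI oracle and succeeds with probability at least $1 - 2^{-n}$ in producing the normalized output-distribution state, which I denote by $\ket{\hat{C^b}}$ for $b \in \{0,1\}$.

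Next, I would apply the Swap Test (Theorem \ref{theorem:swap_test}) to the pair $\ket{\hat{C^0}} \otimes \ket{\hat{C^1}}$; its output bit equals $1$ with probability $\frac{1}{2} + \frac{1}{2}|\langle \hat{C^0} | \hat{C^1} \rangle|^2$. I would repeat the whole state-generation-plus-Swap-Test routine $T = \poly(n)$ independent times and average. Chernoff's bound (Theorem \ref{theorem:chernoff_bound}), applied to the bits output by the $T$ independent Swap Tests, yields an estimator of $|\langle \hat{C^0} | \hat{C^1} \rangle|^2$ accurate to additive error $1/(4q(n))$ with failure probability at most $2^{-\Omega(n)}$; a union bound with the $2T$ state-generation failures keeps the total error probability at $2^{-\poly(n)}$. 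The algorithm outputs $1$ ($\YES$) if the final estimate exceeds a fixed threshold strictly between the two regimes, and $0$ otherwise.

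The analytic core, which I expect to be the main obstacle, is to show that $|\langle \hat{C^0} | \hat{C^1} \rangle|^2$ is separated by at least $1/(2q(n))$ between the $\YES$ and $\NO$ regimes. The natural route uses the fidelity-to-TV relations from the Preliminaries, $1 - F(D_0, D_1) \leq \|D_0 - D_1\|_{TV} \leq \sqrt{1 - F(D_0, D_1)^2}$: in the $\YES$ case we get $F^2 \geq (1 - a(n))^2 = 1 - 2a(n) + a(n)^2$, while in the $\NO$ case $F^2 < 1 - b(n)^2$, for a gap of exactly $b(n)^2 - 2a(n) + a(n)^2 \geq 1/q(n)$ in $F^2$, matching the problem's promise. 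The subtle step is to connect this fidelity gap to the quantity actually estimated by the Swap Test on the specific encoding output by Theorem \ref{theorem:quantum_state_generation_of_invertible_circuit_sequence}, since the amplitudes there are proportional to preimage counts rather than to square roots of probabilities; once this translation is established, setting $T$ and $\delta$ to sufficiently large polynomials in $n$ drives the total error below $2^{-\poly(n)}$ and completes the containment $\SISD_{\poly, O(\log n)} \in \BQP^{\OI}$.
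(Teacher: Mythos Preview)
Your high-level plan matches the paper's proof exactly: generate the states $\ket{C^b(R)}$ via Theorem~\ref{theorem:quantum_state_generation_of_invertible_circuit_sequence}, run many independent Swap Tests, average the outcomes, and threshold at the midpoint using the $b(n)^2 - 2a(n) + a(n)^2 \geq 1/q(n)$ gap together with Chernoff.

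There is one concrete slip. Taking $\delta = 2^{n}$ makes the runtime of each state-generation call $O(\delta \cdot \poly(n) \cdot 2^{r(n)}) = 2^{n}\cdot \poly(n)$, which is exponential, not polynomial as you claim. The paper avoids this by setting $\delta = 2$ (so each call runs in time $\poly(n)\cdot 2^{r(n)} = \poly(n)$ and succeeds with probability $\geq 1/2$), repeating up to $3n$ times to obtain one successful copy with probability $\geq 1 - 2^{-3n}$, and doing this $N = \Theta(n\cdot q(n)^2)$ times in total for the Swap-Test averaging. That keeps the whole procedure in $\QPT$ relative to $\OOI$ while still driving the total failure probability to $2^{-\Omega(n)}$ by a union bound.

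Regarding the point you flag as the ``subtle step''---that the amplitudes of $\ket{C^b(R)}$ are proportional to preimage counts rather than to $\sqrt{D_b(y)}$---the paper does not supply an additional argument here. It simply asserts that $\langle C^0(R)\,|\,C^1(R)\rangle$ equals the classical fidelity $F(D_0,D_1)$ and then applies the fidelity--TV inequalities exactly as in your outline, arriving at the same bounds $p \geq 1 - (2a - a^2)/2$ in the $\YES$ case and $p < 1 - b^2/2$ in the $\NO$ case, with threshold $T = 1 - (b^2 + 2a - a^2)/4$. So on this point your proposal is at least as careful as the paper's own proof.
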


\begin{proof}
    Let the following pair of sequences an input for the $\SISD_{a, b, r}$ problem:
    $$
    \left(
    \left( C^{0}_{i, \rightarrow}, C^{0}_{i, \leftarrow} \right)_{i \in [\ell]},
    \left( C^{1}_{i, \rightarrow}, C^{1}_{i, \leftarrow} \right)_{i \in [\ell]}
    \right) \enspace .
    $$
    Denote by $n$ the description size of the pair of sequences, and since $\SISD_{a, b, r} \in \SISD_{\poly, \, O\left( \log(n) \right)}$ it follows that
    \begin{itemize}
        \item
        There exists a polynomial $\poly : \Nat \rightarrow \Nat$ such that for every $n \in \Nat$, $b(n)^{2} - 2a(n) + a(n)^{2} \geq \frac{1}{\poly(n)}$.

        \item 
        There exists a constant $c \in \Nat$ such that for every $n \in \Nat$, $r(n) \leq c \cdot \log\left( 
n \right)$.
    \end{itemize}

    \paragraph{Generating many copies of the output distribution states of the two circuits.}
    Note that for $b \in \{ 0, 1 \}$, if we execute the algorithm $Q^{\OI}\left( \left( C^{b}_{i, \rightarrow}, C^{b}_{i, \leftarrow} \right)_{i \in [\ell]}, 2 \right)$ from Theorem \ref{theorem:quantum_state_generation_of_invertible_circuit_sequence}, the call takes complexity $O\left( 2 \cdot \poly'(n) \cdot 2^{ r } \right)$ and succeeds with probability $\geq 1 - \frac{1}{2} = \frac{1}{2}$. If we execute $3n$ tries, then we have at least a single successful state generation, with probability $\geq 1 - 2^{-3n}$. For each $b \in \{ 0, 1 \}$ we execute $N := 16 \cdot \poly(n)^{2} \cdot 12 \cdot n$ times the previous procedure (where you execute $3n$ tries, and if one of them succeeds, you take the state), and if one of the $2 \cdot N$ attempts fails, the algorithm halts and outputs $\bot$.
    
    If all attempts succeed, then for each $b \in \{ 0, 1 \}$, we now have $N$ copies of $\ket{ C^{b}\left( R \right) }$. Note that all attempts together take complexity $N \cdot O\left( 2 \cdot \poly'(n) \cdot 2^{ r } \right)$ which is $\leq \Tilde{\poly}\left( n \right)$ for some polynomial $\Tilde{\poly}$ (this follows because $r \leq c \cdot \log(n)$ for a constant $c \in \Nat$, and thus $2^{r} \leq n^{c}$ is bounded by a polynomial in $n$ as well). Also, by union bound, all attempts together succeed with probability $\geq 1 - \frac{ 2 \cdot N }{2^{3n}} \geq 1 - \frac{1}{2^{2n}}$.
    This step of the algorithm is the only one that uses computable order interference and the OI oracle. The next steps use only standard quantum computation.

    \paragraph{Repeated swap tests and decision.}
    Let us list the $N$ copies of each of the quantum states as
    $$
    \ket{ C^{0}\left( R \right) }_{Reg_{0, 1}}, \ket{ C^{0}\left( R \right) }_{Reg_{0, 2}}, \cdots , \ket{ C^{0}\left( R \right) }_{Reg_{ 0, N }}
    \enspace ,
    $$
    $$
    \ket{ C^{1}\left( R \right) }_{Reg_{1, 1}}, \ket{ C^{1}\left( R \right) }_{Reg_{1, 2}}, \cdots , \ket{ C^{1}\left( R \right) }_{Reg_{ 1, N }}
    \enspace .
    $$
    For each $j \in [ N ]$, we now execute a swap test (the algorithm $Q_{ST}$ from Theorem \ref{theorem:swap_test}) on the pair of $k$-qubit states $\ket{ C^{0}\left( R \right) }_{Reg_{0, j}}$, $\ket{ C^{1}\left( R \right) }_{Reg_{1, j}}$. For every execution of the swap test the complexity is $O(k) \leq O(n)$, and over all $N$ executions the complexity is polynomial in $n$.
    
    Let $\{ b_{i} \}_{i \in [ N ]}$ the $N$ results of the swap test executions. We take an average of the results, that is
    $$
    A
    :=
    \frac{ \sum_{i \in [N]} b_{i} }{ N }
    \enspace .
    $$
    The algorithm decides where the input belongs by a threshold: We take $T := 1 - \frac{ b(n)^{2} + 2\cdot a(n) - a(n)^{2} }{ 4 }$, and if $A > T$ we output $1$, and otherwise (i.e., $A \leq T$), we output $0$.
    
    \paragraph{Final analysis.}
    The first thing to verify is that the total execution time of the algorithm is polynomial in $n$, as both steps are polynomial in $n$. We next explain why the probability to output the correct answer (i.e., output $1$ when the input is in $\YES$ and output $0$ when the input is in $\NO$) is exponentially (in the input size $n$) close to $1$. 

    By the correctness of the swap test, for every $i \in [N]$ of the swap test executions, the probability for $b_{i} = 1$ is $p := \frac{1 + |\bra{C^{0}\left( R \right)}\ket{C^{1}\left( R \right)}|^{2}}{2}$. Observe that the inner product $\bra{C^{0}\left( R \right)}\ket{C^{1}\left( R \right)}$ equals the fidelity between the output distributions of $C^{0}$, $C^{1}$, thus $|\bra{C^{0}\left( R \right)}\ket{C^{1}\left( R \right)}|^{2} = F\left( C^{0}\left( R \right), C^{1}\left( R \right) \right)^{2}$. Recall the following known relation between total variation distance and fidelity of classical distributions:
    $$
    1 - F\left( D_{0}, D_{1} \right) \leq
    \norm{D_{0} - D_{1}}_{TV}
    \leq
    \sqrt{ 1 - F\left( D_{0}, D_{1} \right)^{2} }
    \enspace .
    $$
    We have the following conclusions.
    \begin{itemize}
        \item
        If the input pair of sequences is in $\YES$, then $\norm{ C^{0}\left( R \right) - C^{1}\left( R \right) }_{TV} \leq a(n) \leq 1$ and,
        $$
        p
        =
        \frac{ 1 + F\left( C^{0}\left( R \right), C^{1}\left( R \right) \right)^{2} }{ 2 }
        \geq
        \frac{ 1 + \left( 1 - \norm{ C^{0}\left( R \right) - C^{1}\left( R \right) }_{TV} \right)^{2} }{ 2 }
        \geq
        \frac{ 1 + \left( 1 - a(n) \right)^{2} }{ 2 }
        $$
        $$
        =
        \frac{ 1 + 1 - 2a(n) + a(n)^{2} }{ 2 }
        =
        1 - \frac{ 2a(n) - a(n)^{2} }{ 2 } \enspace .
        $$

        \item
        If the input pair of sequences is in $\NO$, then $\norm{ C^{0}\left( R \right), C^{1}\left( R \right) }_{TV} > b(n) \geq 0$ and,
        $$
        p
        =
        \frac{ 1 + F\left( C^{0}\left( R \right), C^{1}\left( R \right) \right)^{2} }{ 2 }
        \leq
        \frac{ 1 + 1 - \norm{ C^{0}\left( R \right) - C^{1}\left( R \right) }_{TV}^{2} }{ 2 }
        <
        \frac{ 1 + 1 - b(n)^{2} }{ 2 }
        $$
        $$
        =
        1 - \frac{ b(n)^{2} }{ 2 } \enspace .
        $$
    \end{itemize}

    Next, for each of the $N$ executions of the swap test, note we have a binary random variable $b_{i}$ that is $1$ with probability $p = \frac{1 + |\bra{C^{0}\left( R \right)}\ket{C^{1}\left( R \right)}|^{2}}{2}$. It follows that the expectation of the average of executions of the swap test (that is, the expectation of $A$) is the same, $p$. By Chernoff's bound \ref{theorem:chernoff_bound} it follows that for every $\varepsilon \in (0, 1)$,
    $$
    \Pr_{b_{1}, \cdots, b_{N}}\left[ \big| A - p \big| \geq \varepsilon \cdot p  \right]
    \leq
    2 \cdot e^{ - N \cdot \frac{p \cdot \varepsilon^{2}}{ 3 } }
    \enspace .
    $$
    Note that the threshold $T := 1 - \frac{ b(n)^{2} + 2\cdot a(n) - a(n)^{2} }{ 4 }$ is exactly the average between two numbers: One is the lower bound $B_{\YES} := 1 - \frac{ 2a(n) - a(n)^{2} }{ 2 }$ on $p$ in case the input is in $\YES$, and the second number is the upper bound $B_{\NO} := 1 - \frac{ b(n)^{2} }{ 2 }$ on $p$ in case the input is in $\NO$. This means that $T$ sits on the real number line exactly in the middle between $B_{\YES}$ and $B_{\NO}$. In order to complete the proof, it will be sufficient to (1) understand a lower bound $L$ on the distance $|B_{\YES} - B_{\NO}|$ (half of this lower bound acts as a lower bound for $|B_{\YES} - T|$ and $|B_{\NO} - T|$), and (2) show that the probability for $A$ to deviate from the average $p$ by more than $L/2$ is exponentially small, in each of the cases.

    The first part is easy: We know
    $$
    |B_{\YES} - B_{\NO}|
    =
    1 - \frac{ 2a(n) - a(n)^{2} }{ 2 } - 1 + \frac{ b(n)^{2} }{ 2 }
    =
    \frac{ b(n)^{2} - 2a(n) + a(n)^{2} }{ 2 }
    \geq
    \frac{1}{ 2 \cdot \poly(n) }
    \enspace .
    $$
    This means that by taking half of the above, which is $\frac{1}{ 4 \cdot \poly(n) }$, we have a lower bound from both $|B_{\YES} - T|$ and $|B_{\NO} - T|$.

    Finally, by the above stated instance of the Chernoff bound, for $\varepsilon := \frac{1}{ 4 \cdot \poly(n) }$, the probability for $A$ to deviate from $p$ with an amount $\varepsilon \cdot p \leq \varepsilon = \frac{1}{ 4 \cdot \poly(n) }$, is bounded by 
    $$
    2 \cdot e^{ - N \cdot \frac{ p \cdot \varepsilon^{2} }{ 3 } }
    \underset{ (p \geq \frac{1}{2}) }{ \leq }
    2 \cdot e^{ - N \cdot \frac{\varepsilon^{2}}{ 6 } }
    =
    2 \cdot e^{ - \frac{ N }{ 6\cdot 16 \cdot \poly\left( n \right)^{2} } }
    $$
    $$
    =
    2 \cdot e^{ - \frac{ 16 \cdot \poly(n)^{2} \cdot 12 \cdot n }{ 6\cdot 16 \cdot \poly\left( n \right)^{2} } }
    =
    2 \cdot e^{ -2n } \enspace .
    $$

    To conclude, recall that the probability for the first part of the algorithm to succeed is at least $1 - 2^{-2n}$, and the probability for the second part of the algorithm to be correct, conditioned on the first part being successful, is at least $1 - 2\cdot e^{-2n}$. Overall, if both parts succeed, which happens with probability at least $\left( 1 - 2^{-2n} \right)\cdot \left( 1 - 2\cdot e^{-2n} \right) \geq 1 - 2^{-n} = 1 - 2^{-|x|}$, the algorithm outputs the correct answer for the input, as desired.
\end{proof}

\section{$\GI \leq_{p} \SISD_{\poly, O(\log(n))}$} \label{section:gi_reduction}
In this section we prove our main theorem regarding the computational problem of Graph Isomorphism $\GI$. 

\begin{theorem} [Reduction from $\GI$ to $\SISD_{\frac{n}{2^{n}}, \; 1 - \frac{n}{2^{n}}, \; \log\left( n \right)}$]
    There exists a classical deterministic polynomial time Turing machine $M$ that computes the reduction,
    $$
    \GI \leq_{p} \SISD_{a(n), b(n), r(n)} \enspace ,
    $$
    for $a(n) := \frac{n}{2^{n}}$, $b(n) := 1 - \frac{n}{2^{n}}$, $r(n) := \log\left( n \right)$.
\end{theorem}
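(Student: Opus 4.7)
The plan is to adapt the classical Goldreich-Micali-Wigderson reduction $\GI \leq_{p} \SD$ to the sequentially-invertible setting by replacing the ``pick a uniformly random permutation and apply it to $G_{b}$'' step with a Fisher-Yates-style sequence of transpositions, each of which is individually an invertible operation on the adjacency-matrix bit string. Given input $\left( G_{0}, G_{1} \right)$ with $v$ vertices each, represented by $k := v^{2}$-bit adjacency matrices, the reduction will output, for each $b \in \{ 0, 1 \}$, a circuit sequence $\left( C^{b}_{i, \rightarrow}, C^{b}_{i, \leftarrow} \right)_{i \in [\ell]}$ on $k$ bits. The first pair uses zero randomness and XORs the fixed bit string encoding $G_{b}$ into the all-zeros input, mapping $0^{k}$ to $G_{b}$; since XOR is an involution we may take $C^{b}_{1, \rightarrow} = C^{b}_{1, \leftarrow}$. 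The remaining pairs implement the shuffle of Fact \ref{fact:fisher_yates_algorithm}: for $i = v, v - 1, \ldots, 2$, the $i$-th pair uses $\lceil \log i \rceil$ bits of randomness to choose an index $j \in [i]$ and then swaps vertices $i$ and $j$ in the current graph representation.

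For sequential invertibility, I will observe that for every fixed randomness $j$ the map that relabels vertices $i \leftrightarrow j$ in a $v^{2}$-bit adjacency matrix is a coordinate permutation, and in fact an involution, on $\{ 0, 1 \}^{k}$, so each forward circuit has an explicit polynomial-size inverse of the same form, and each step uses at most $\lceil \log v \rceil \leq \log(n)$ bits for the output description size $n = \poly(v)$. The \NO case is then immediate: if $G_{0} \not\simeq G_{1}$, Fact \ref{fact:random_permutations_of_isomorphic_graphs} asserts that the supports of the two output distributions are the two disjoint isomorphism classes, so $\norm{ D_{0} - D_{1} }_{TV} = 1 > 1 - n/2^{n} = b(n)$. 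In the \YES case, I will let $\pi \in S_{v}$ satisfy $\pi\left( G_{0} \right) = G_{1}$ and let $\mathcal{P}_{v}$ denote the distribution on $S_{v}$ induced by the sampler; then $D_{0}$ and $D_{1}$ are the pushforwards, under $\sigma \mapsto \sigma\left( G_{0} \right)$, of $\mathcal{P}_{v}$ and of the distribution of $\sigma \cdot \pi$ for $\sigma \sim \mathcal{P}_{v}$, respectively. Since pushforwards contract total variation distance and the uniform distribution $U_{S_{v}}$ on $S_{v}$ is invariant under right multiplication, the triangle inequality will give $\norm{ D_{0} - D_{1} }_{TV} \leq 2 \cdot \norm{ \mathcal{P}_{v} - U_{S_{v}} }_{TV}$, so it suffices to construct the sampler so that its distribution lies within total variation distance $\frac{n}{2^{n + 1}}$ of uniform on $S_{v}$.

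The hard part will be precisely this exponentially small sampling error on only $\log(n)$ bits of randomness per circuit: a $\lceil \log i \rceil$-bit uniform source cannot sample exactly uniformly from $[i]$ when $i$ is not a power of two, and sequential invertibility forbids rejection sampling and general modular reduction onto a non-power-of-two range. My plan for overcoming this is to exploit the full $\log(n)$-bit per-circuit budget together with the freedom to pad the output so that $n$ is an appropriately large polynomial in $v$, and to build each Fisher-Yates step from a short cascade of invertible add-a-constant-modulo-$N$ circuits of the form guaranteed by Fact \ref{fact:adding_unitarily_a_mod_N}. Such a cascade can be arranged to distribute $2^{\log(n)}$ random strings across $[i]$ with residual bias at most $\frac{1}{v \cdot 2^{n + 1}}$ per step; summing over the $v$ Fisher-Yates steps will then yield the required total bound $\frac{n}{2^{n + 1}}$. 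Nailing down these quantitative sampling bounds, while verifying that every intermediate operation admits an explicit per-randomness inverse, is the main technical content the full proof will have to carry out.
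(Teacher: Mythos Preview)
Your high-level plan coincides with the paper's: XOR in $G_b$ first, then implement Fisher--Yates as a cascade of vertex transpositions, each of which is an involution on the $v^2$-bit adjacency matrix and hence its own inverse. You also correctly isolate the one real difficulty, namely sampling $j \in [i]$ nearly uniformly under the per-fixed-randomness bijection constraint.

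The gap is in your proposed fix. Fact~\ref{fact:adding_unitarily_a_mod_N} adds a constant modulo $N$ to an \emph{integer register}; your circuits act on the adjacency-matrix string, which has no such register. If you mean to carry an auxiliary index register alongside the graph (``pad the output''), build $j_i$ there via add-mod-$i$ steps, and then apply the swap $(i,j_i)$ read off from it, then every step is indeed invertible per fixed randomness---but now the auxiliary register is part of the $k$-bit output and deterministically encodes the permutation $\sigma$. In the \YES{} case the two outputs are $(\sigma(G_0),\sigma)$ and $(\sigma(G_1),\sigma)$ for the \emph{same} $\sigma$, and these never coincide unless $G_0 = G_1$; completeness collapses to total variation distance $1$. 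You cannot erase the register afterward, because the $j_i$'s came from randomness that later circuits no longer possess and are not recoverable from the permuted graph. So ``distribute $2^{\log n}$ strings across $[i]$ with residual bias $2^{-\Theta(n)}$'' is not something you can arrange within a sequentially invertible sequence acting on the graph alone.

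The paper's device is much simpler and sidesteps the issue. Your sentence ``sequential invertibility forbids rejection sampling'' is the misconception: rejection is fine provided the \emph{reject action is itself a bijection}. For each Fisher--Yates index $i$ the paper uses $n$ consecutive circuits; each draws $\lceil \log n\rceil$ random bits, reads the first $\lceil \log i\rceil$ of them as $k \in [2^{\lceil \log i\rceil}]$, applies the transposition $(i,k)$ if $k \le i$, and applies the \emph{identity} otherwise. Both branches are involutions on $\{0,1\}^{v^2}$, so every circuit is $\lceil \log n\rceil$-invertible with the inverse equal to itself. Each try ``misses'' with probability at most $1/2$, so the probability that some $i$ sees all $n$ tries miss is at most $n\cdot 2^{-n}$ by a union bound, and outside that bad event each step $i$ does receive at least one uniform-in-$[i]$ hit; this is where the parameters $a(n)=n/2^n$ and $b(n)=1-n/2^n$ come from.
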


Note that for the values of $a(n)$, $b(n)$ above we have $b(n)^{2} - 2a(n) + a(n)^{2}$ is at least inverse polynomially large, and also $r(n) = \log\left( n \right) = O\left( \log(n) \right)$, and thus $\SISD_{a(n), b(n), r(n)} \in \SISD_{ \poly, O\left( \log(n) \right) }$. We proceed to the proof of the theorem.

\begin{proof}
    Let $\left( G_{0}, G_{1} \right)$ an input for the $\GI$ problem, and let $n \in \Nat$ denote the number of vertices in each of the graphs, and assume that the input graphs are represented by adjacency matrices. So, the description size of the input is
    $$
    |\left( G_{0}, G_{1} \right)| = 2 \cdot n^{2} \enspace .
    $$
    The reduction executes in classical deterministic polynomial time and will output a pair $\left( C^{0}, C^{1} \right)$ of invertible circuit sequences (as in Definition \ref{definition:invertible_circuit_sequence}). Let $\ell := \left( n - 1 \right) \cdot n$.

    \paragraph{Definition of the invertible circuit sequence $C$ and then the pair of sequences $\left( C^{0}, C^{1} \right)$.}
    We define an invertible circuit sequence $C := \left( C_{i, \rightarrow}, C_{i, \leftarrow} \right)_{i \in [\ell]}$ below, where all $\ell$ circuit pairs have input/output size $n^{2}$ (which we think of as an adjacency matrix for a graph with $n$ vertices) and randomness of exactly $\log\left( n \right)$ bits (formally, we use $\lceil \log\left( n \right) \rceil$ bits). We will later define the circuits $C^{0}$, $C^{1}$, as a function of the circuit sequence $C$.
    \begin{enumerate}
        \item 
        For $i = n, n - 1, \cdots, 3, 2$ (that is, starting from $i = n$ and moving down to $i = 2$), we define,
        \begin{enumerate}
            \item
            For $j \in [n]$, the circuit $C_{ \left( i, j \right), \rightarrow }$ acts as follows. The circuit's main input is a graph $G = \left( [n], E_{G} \right)$ (in the data structure of adjacency matrix of $n^{2}$ bits) and auxiliary input is the randomness $z_{i, j} \in \{ 0, 1 \}^{\lceil \log\left( n \right) \rceil}$.
            \begin{itemize}
                \item
                Define $d_{i} := \lceil \log\left( i \right) \rceil$, and the circuit computes only on the first $d_{i}$ bits of randomness in $z_{i, j}$, and ignores the rest. Denote by $z'_{i, j} \in \{ 0, 1 \}^{d_{i}}$ the first $d_{i}$ digits of the randomness.

                \item 
                The circuit interprets the string $z'_{i, j} \in \{ 0, 1 \}^{d_{i}}$ as a uniformly random number in $[ 2^{d_{i}} ]$ (by adding $1$ to the number derived from the binary representation of the string), denote this number by $k_{i, j} \in [2^{d_{i}}]$. In case $k_{i, j} > i$ the circuit does nothing and its function is the identity.
                
                \item
                In case $k_{i, j} \leq i$ the fixed-randomness circuit $C_{ \left( i, j \right), \rightarrow }\left( \cdot ; z_{i, j} \right)$ computes the graph $\sigma_{\left( i, k_{i, j} \right)}\left( G \right)$ as its output, where $\sigma_{\left( i, k_{i, j} \right)} \in S_{i}$ is a swap between elements $i$ and $k_{i, j}$, that is, it is a permutation on the elements $[i]$, that acts trivially on all elements that are in $[i] \setminus \{ i, k_{i, j} \}$, and swaps between $i$ and $k_{i, j}$.
            \end{itemize}
        \end{enumerate}
    \end{enumerate}

    We defined what are the forward computations, and the definition of the inverse circuits follows: For every $i \in \{ n, n - 1, \cdots, 3, 2 \}$, $j \in [n]$, the circuit $C_{\left( i, j \right), \leftarrow }$ is exactly identical to the circuit $C_{\left( i, j \right), \rightarrow }$. It can be verified by the reader that indeed for every $i \in \{ n, n - 1, \cdots, 3, 2 \}$, $j \in [n]$, the circuits $C_{\left( i, j \right), \rightarrow }$, $C_{\left( i, j \right), \leftarrow }$ are inverses of each other for every fixed randomness $z_{i, j} \in \{ 0, 1 \}^{ \ceil{ \log(n) } }$, because making the same swap twice leads to the identity permutation.
    
    Regarding the output of the reduction, which should be a pair of invertible circuits sequences $\left( C^{0}, C^{1} \right)$: For $b \in \{ 0, 1 \}$, the sequence $C^{b}$ is the same as the above sequence $C$, with one adding: In the beginning, there is a deterministic (randomness-free) circuit $C^{b}_{0, \rightarrow}$ that adds the description of the graph $G_{b}$, by simply adding the bits (modulo $2$, also known as XOR-ing) of the adjacency matrix of $G_{b}$. Accordingly, the inverse $C^{b}_{0, \leftarrow}$ of that first circuit $C^{b}_{0, \rightarrow}$, is subtracting the description of $G_{b}$ (which makes the circuits equal, because subtracting and adding modulo $2$ is identical).
    To conclude, it can be easily verified that both $C^{0}$ and $C^{1}$ are legal $\left( \log(n) \right)$-invertible circuit sequences.

    \paragraph{Properties of the output distributions of the circuits $C^{0}$, $C^{1}$.}
    For $b \in \{ 0, 1 \}$, one can observe that the output distribution of $C^{b}$ (when starting from a string of $n^2$ zeros) is as follows.
    \begin{itemize}
        \item
        For every $i \in \{ n, n - 1, \cdots, 3, 2 \}$, if for all $j \in [n]$, the randomness $z_{i, j} \in \{ 0, 1 \}^{\log(n)}$ was such that the number $k_{i, j}$ (which is derived from $z'_{i, j} \in \{ 0, 1 \}^{d_{i}}$, the first $d_{i} := \lceil \log\left( i \right) \rceil$ bits of $z_{i, j}$) is $> i$, then the graph stays the same after passing the entire outer iteration $i$. However, note that $2^{d_{i}} \leq 2 \cdot i$, which means that a uniformly random $k_{i, j} \in [2^{d_{i}}]$ is bounded by $i$ with probability at least $\frac{1}{2}$. Thus, the probability that for all $j \in [n]$, the sample $k_{i, j} > i$, is bounded by $2^{-n}$. By union bound, the probability that there exists an index $i \in \{ n, n - 1, \cdots, 3, 2 \}$ such that for every $j \in [n]$, $k_{i, j}$ caused no change, is bounded by $\frac{n}{2^{n}}$.

        \item 
        In the other case, that happens with probability $\geq 1 - \frac{n}{2^{n}}$, for every $i \in \{ n, n - 1, \cdots, 3, 2 \}$, there is at least a single try $j \in [n]$ such that $k_{i, j} \leq i$. Observe that in this second case the output distribution of $C^{b}$ is as follows. For an iteratively decreasing $i = n, n - 1, \cdots, 3, 2$, apply $\sigma_{\left( i, r_{i} \right)}$ to the graph, for a uniformly random $r_{i} \in [i]$. The above distribution is exactly the description of the Fisher-Yates algorithm, and thus by Fact \ref{fact:fisher_yates_algorithm}, generates a uniformly random permutation $\sigma \in S_{n}$, and applies it to the input graph $G_{b}$.
    \end{itemize}
    The above means that the output distribution of $C^{b}$ is statistically close (has total variation distance bounded by $\frac{n}{2^{n}}$) to $\sigma\left( G_{b} \right)$, for a uniformly random permutation $\sigma \in S_{n}$.

\paragraph{Soundness and completeness of the reduction.}
The last step of our proof is to show that the reduction is correct, that is, that when $\left( G_{0}, G_{1} \right) \in \YES$ then the output distributions of the circuits, denoted $\mathcal{D}\left( C^{0} \right)$, $\mathcal{D}\left( C^{1} \right)$ respectively, are statistically close, and when the input is in $\NO$, then the output distributions are statistically far.

Both soundness and completeness follow rather easily at this point. For soundness, in case $\left( G_{0}, G_{1} \right) \in \NO$, it means the graphs are non-isomorphic, which, by Fact \ref{fact:random_permutations_of_isomorphic_graphs}, means that the supports of the distributions $\sigma\left( G_{0} \right)$, $\sigma\left( G_{1} \right)$ are disjoint, and thus $\norm{ C^{0} - C^{1} }_{TV} \geq 1 - \frac{n}{2^{n}}$.
For completeness, in case $\left( G_{0}, G_{1} \right) \in \YES$, it means the graphs are isomorphic, which again by Fact \ref{fact:random_permutations_of_isomorphic_graphs}, means that the distributions $\sigma\left( G_{0} \right)$, $\sigma\left( G_{1} \right)$ are identical, and thus $\norm{ C^{0} - C^{1} }_{TV} \leq \frac{n}{2^{n}}$.
\end{proof}

\section{$\GCVP_{O\left( n\sqrt{n} \right)} \leq_{p} \SISD_{\poly, O(\log(n))}$} \label{section:gcvp_reduction}
In this section we prove our main theorem regarding $\GCVP$. 

\begin{theorem} [Reduction from $\GCVP_{O\left( n\sqrt{n} \right)}$ to $\SISD_{\left( \frac{1}{4} + O\left( n^{ -\frac{1}{4} } \right) , \; \frac{3}{4} - e^{ -\Omega\left( n \right) }, \; 1 \right)}$]
    There exists a positive absolute constant $c \in \bbR_{> 0}$ and a classical deterministic polynomial time Turing machine $M$ that computes the reduction,
    $$
    \GCVP_{c \cdot n\sqrt{n} } \leq_{p} \SISD_{\left( a(n), b(n), r(n) \right)} \enspace ,
    $$
    for $a(n) := \frac{1}{4} + O\left( n^{ -\frac{1}{4} } \right)$, $b(n) := \frac{3}{4} - e^{ -\Omega\left( n \right) }$, $r(n) := 1$.
\end{theorem}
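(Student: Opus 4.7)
The plan is to adapt the Goldreich--Goldwasser reduction $\GCVP \leq_{p} \SD_{\poly}$ by replacing its uniform-ball noise sampler with an approximately discrete truncated Gaussian sampler that admits a $1$-invertible implementation. Given input $(\Basis,\tVector,d)$ for $\GCVP_{c\cdot n\sqrt{n}}$, with $c$ a large absolute constant to be chosen, the reduction outputs $(C^0, C^1)$ operating on an integer register that holds the polytope of $M$-bounded lattice points. Both sequences start deterministically from $0$ and (i) add a uniformly random lattice vector $\vVector = \Basis\cdot\sVector$ with $\sVector \in \bbZ_M^n$, then (ii) add a noise vector $\eVector$ whose coordinates are drawn i.i.d.\ from an approximation of $D_B$ with $B \approx d\cdot c n\sqrt{n}/(2\sqrt{n})$; only $C^1$ additionally (iii) adds $\tVector$ deterministically at the end.

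Each circuit in every stage is driven by a single random bit that controls the addition of a \emph{fixed} vector modulo a large power of two, which is invertible by \fctref{fact:adding_unitarily_a_mod_N}. Stage (i) loops over the bits of $\sVector$ and adds $b\cdot 2^j\cdot \bVector_i$ for bit $(i,j)$; stage (ii) samples each coordinate of $\eVector$ as a sum of $\kappa$ i.i.d.\ draws from the discrete uniform distribution on $\{-2^\beta,\ldots,2^\beta\}$ with $2^\beta \approx B/\sqrt{\kappa}$, where each draw is itself a sum of $\beta+1$ Bernoulli-scaled powers of two. Applying \thmref{theorem:discrete_TV_distance_berry_esseen} to the coordinate sum bounds its total variation distance from $\mathcal{N}^{\bbZ}(0,B)$ by $O(1/\sqrt{\kappa}) + O(1/B)$, and \fctref{fact:rounded_gaussian_properties} transfers this to $D_B$; choosing $\kappa$ to be a large enough polynomial in $n$ makes the aggregate TV error $O(n^{-1/4})$ after summing over the $n$ coordinates.

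For correctness, write $\tVector = \aVector + \eVector_0$ with $\aVector\in\Lattice_\Basis$ and $\|\eVector_0\| = \dist(\Lattice_\Basis,\tVector)$. Since shifting $\vVector$ by $\aVector$ preserves the uniform lattice distribution (up to boundary effects absorbed into the choice of $M$), comparing the outputs of $C^0$ and $C^1$ reduces, modulo the $O(n^{-1/4})$ CLT error and \fctref{fact:tv_distance_switched_rv}, to comparing $\eVector$ with $\eVector + \eVector_0$ under the true $D_B^n$. In the YES case $\|\eVector_0\|\leq d$, so $\|\eVector_0\|$ is negligible compared to the Gaussian standard deviation $B$; using the pointwise density ratio of $D_B^n$ together with the concentration of $\|\eVector\|$ on a thin shell around $\sqrt{n}B$ from \fctref{fact:gaussian_vector_norm_concentration}, one obtains $a(n) \leq 1/4 + O(n^{-1/4})$. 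In the NO case $\|\eVector_0\| > d\cdot c n\sqrt{n}$; fixing $c$ large enough, the norm shells of $\eVector$ and of $\eVector + \eVector_0$ given by \fctref{fact:gaussian_vector_norm_concentration} become disjoint with probability $1 - e^{-\Omega(n)}$, yielding $b(n) \geq 3/4 - e^{-\Omega(n)}$.

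The main obstacle I expect is the quantitative accounting in the YES case: one must combine the dimension-dependent CLT error of \thmref{theorem:discrete_TV_distance_berry_esseen} with a density-ratio estimate on the typical shell of $D_B^n$, then propagate these bounds through the switched-input inequality of \fctref{fact:tv_distance_switched_rv}, all while keeping the final constant below $1/4$; the absolute constant $c$ in $g(n) = c\cdot n\sqrt{n}$ must be chosen large enough to simultaneously create density-ratio slack in YES and shell separation in NO, and it is this simultaneous balancing -- rather than invertibility or CLT convergence, both of which are already captured by the tools in Section \ref{section:preliminaries} -- that dictates the loss from the Goldreich--Goldwasser gap $\sqrt{n/\log n}$ to the gap $O(n\sqrt{n})$ achieved here.
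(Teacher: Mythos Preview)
Your proposal is correct and follows essentially the same approach as the paper: the $1$-invertible construction (bit-by-bit addition of lattice vectors, then CLT-based Gaussian approximation via sums of discrete uniforms, then optional addition of $\tVector$), the use of \thmref{theorem:discrete_TV_distance_berry_esseen} and \fctref{fact:rounded_gaussian_properties} for the approximation quality, \fctref{fact:gaussian_vector_norm_concentration} for soundness, and \fctref{fact:tv_distance_switched_rv} to swap the approximate noise for the exact Gaussian all match the paper. One minor point of calibration: in the paper the $O(n^{-1/4})$ term in $a(n)$ does \emph{not} come from the CLT error (which is bounded by the constant $1/4$ via $\kappa = \Theta(n^2)$), but from an explicit fidelity lower bound between $D_{\tilde B}^n$ and $D_{\tilde B}^n + \eVector_0$ that yields $\norm{D_{\tilde B}^n - (D_{\tilde B}^n + \eVector_0)}_{TV} \leq \sqrt{1 - e^{-\Theta(1/\sqrt{n})}} = O(n^{-1/4})$; your ``density ratio on the typical shell'' sketch is a valid route to the same estimate, but be aware that this is where the $n^{-1/4}$ actually originates.
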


Note that for the values of $a(n)$, $b(n)$ above we have $b(n)^{2} - 2a(n) + a(n)^{2}$ is at least inverse polynomially large, and also $r(n) = 1 = O\left( \log(n) \right)$, and thus $\SISD_{a(n), b(n), r(n)} \in \SISD_{ \poly, O\left( \log(n) \right) }$. We proceed to the proof of the theorem.

\begin{proof}
    Let $\left( \Basis, \tVector, d \right)$ an input for the $\GCVP_{g(n)}$ problem, for $g(n) = c_{g} \cdot n\sqrt{n}$ where $c_{g} := 64 \cdot c_{0}$, where $c_{0} \in \bbR_{> 0}$ is a positive absolute constant we will specify later, $n \in \Nat$ is the dimension of the lattice basis $\Basis$ and the vector $\tVector$, and $b \in \Nat$ is the size of the binary representation of the largest (in absolute value) number in $\Basis$ and the vector $\tVector$. So, the description size of the input is
    $$
    |\left( \Basis, \tVector, d \right)| = n^{2}\cdot b + n \cdot b + \log_{2}(d) \enspace .
    $$
    The reduction executes in classical deterministic polynomial time and will output a pair $\left( C^{0}, C^{1} \right)$ of $1$-invertible circuit sequences (as in Definition \ref{definition:invertible_circuit_sequence}). Let $\ell := \ell_{\Lattice} + 1 + \ell_{G} + 1$, where $\ell_{\Lattice} := \left( m + 1 \right) \cdot n$ and $\ell_{G} := \kappa \cdot \left( \beta + 1 \right) \cdot n$, such that,
    \begin{itemize}
        \item
        $m := \lceil \log_{2}\left( M \right) \rceil$ for $M := 2^n \cdot \left( \sum_{i \in [n]} \norm{ \bVector_{i} } + \norm{ \tVector } + d \right)$,

        \item 
        $\kappa := c_{\kappa} \cdot n^{2}$ for $c_{\kappa} := c_{0}^{2} \cdot 64$ (as in the case of $c_{g}$, the constant $c_{0}$ will be specified later).
        
        \item 
        $\beta := \lfloor \log_{2}\left( B \right) \rfloor$, for $B := \frac{ g(n) \cdot d }{ c_{\mathcal{N}} \cdot \sqrt{n} \cdot \sqrt{\kappa} }$, where $c_{\mathcal{N}} := 8$. For the algorithm to work we will need $\beta \geq 0$, which in turn happens if $B \geq 1$, and indeed:
        $$
        g(n)
        :=
        64 \cdot c_{0} \cdot n \sqrt{n}
        :=
        8 \cdot \sqrt{\kappa} \sqrt{n}
        :=
        c_{\mathcal{N}} \cdot \sqrt{\kappa} \sqrt{n}
        \enspace ,
        $$
        which, since $d \in \Nat$, implies $B \geq 1$.
    \end{itemize}
    
    Let $s := \lceil b + m + \log\left( n \cdot g(n) \cdot d \right) \rceil$ and note that a binary string of size $s + 1$ can be used to represent any integer of absolute value bounded by
    $$
    2^{s} - 1 \geq 2^{b} \cdot M \cdot n \cdot g(n) \cdot d - 1 \enspace ,
    $$
    by using the first bit as a decider for the sign of the number, and the rest of the $s$ bits as the binary representation of the absolute value of the number (the only number that has two such binary representations, and not one representation, is $0$). In the following proof we thus use binary strings of size $s + 1$ to represent numbers in the set $\bbZ_{ \left( -\left( 2^{s} - 1\right), 2^{s} - 1 \right) }$.

    \paragraph{Definition of the circuit sequence $C$ and an pair $\left( C^{0}, C^{1} \right)$.}
    We define an invertible circuit sequence $C := \left( C_{i, \rightarrow}, C_{i, \leftarrow} \right)_{i \in [\ell]}$ below, where all $\ell$ circuit pairs have input/output size $n \cdot \left( s + 1 \right)$ (which we think of as an integer vector $\vVector \in \bbZ^{n}_{ \left( -\left( 2^{s} - 1\right), 2^{s} - 1 \right) }$) and randomness bounded by $1$ (that is, each circuit either uses a single bit of randomness, or it is deterministic and does not use randomness at all). In the below circuits, all arithmetic (which comes down to additions only) is modulo in absolute value of $2^{s} - 1$, that is, once it gets out of bounds of one side (right, $2^{s} - 1$ or left, $-\left( 2^{s} - 1\right)$), it pops up, continuing in the same direction, at the opposite side (left or right, respectively).
    \begin{enumerate}
        \item \label{gcvp_reduction_step_1}
        \textit{First $\ell_{\Lattice}$ circuits: Constructing a random lattice vector with a uniform coordinates vector in $\bbZ^{n}_{ \left( 0, 2^{m + 1} - 1 \right) }$.}
        For $i \in [m + 1]$, $j \in [n]$, the circuit $C_{ \left( 1, i, j \right), \rightarrow }$, given main input $\vVector \in \bbZ^{n}_{ \left( -\left( 2^{s} - 1\right), 2^{s} - 1 \right) }$ and auxiliary input randomness $z_{1, i, j} \in \{ 0, 1 \}$, adds the vector $z_{1, i, j} \cdot 2^{i - 1} \cdot \Basis_{j}$, where $\Basis_{j}$ is the $j$-th column of the basis $\Basis$.

        \item \label{gcvp_reduction_step_2}
        \textit{Circuit $\ell_{\Lattice} + 1$: Stabilizing the coordinate vectors to $\bbZ^{n}_{ \left( -2^{m}, 2^{m} - 1 \right) }$, by subtraction.}
        The circuit $C_{ \left( 2, 1 \right), \rightarrow }$, which is deterministic and does not use randomness, given input $\vVector \in \bbZ^{n}_{ \left( -\left( 2^{s} - 1\right), 2^{s} - 1 \right) }$, outputs $\vVector - 2^{m} \cdot \sum_{j \in [n]} \Basis_{j}$. This step uses a single deterministic circuit.

        \item \label{gcvp_reduction_step_3}
        \textit{Circuits $\ell_{\Lattice} + 1 + i$ for $i \in [\ell_{G}]$: Adding approximated Gaussian noise.}
        For $i_{1} \in [\kappa], i_{2} \in [\beta + 1], i_{3} \in [n]$, the circuit $C_{ \left( 3, i_{1}, i_{2}, i_{3} \right), \rightarrow }$, given input $\vVector \in \bbZ^{n}_{ \left( -\left( 2^{s} - 1\right), 2^{s} - 1 \right) }$ and randomness $z_{\left( 3, i_{1}, i_{2}, i_{3} \right)} \in \{ 0, 1 \}$, adds the vector
        $$
        z_{\left( 3, i_{1}, i_{2}, i_{3} \right)}
        \cdot
        2^{i_{2} - 1}
        \cdot
        e_{i_{3}}
        \enspace ,
        $$
        where for $j \in [n]$, the vector $e_{j}$ is the $j$-th standard basis vector, i.e., $e_{j} := \left( 0_{1}, \cdots, 0_{j - 1}, 1_{j}, 0_{j + 1}, \cdots, 0_{n} \right)^{T}$.

        \item \label{gcvp_reduction_step_4}
        \textit{Circuit $\ell_{\Lattice} + 1 + \ell_{G} + 1$: Stabilizing the Gaussian noise by subtraction.}
        The circuit $C_{ \left( 4, 1 \right), \rightarrow }$, which is deterministic and does not use randomness, given input $\vVector \in \bbZ^{n}_{ \left( -\left( 2^{s} - 1\right), 2^{s} - 1 \right) }$, outputs
        $$
        \vVector
        -
        \left(
        \frac{\kappa}{2}\cdot
        \left( 2^{\beta + 1} - 1 \right)
        \left( 1_{1}, 1_{2}, \cdots, 1_{n} \right)^{T}
        \right)
        \enspace .
        $$
    \end{enumerate}
    This step uses a single deterministic circuit. Note that $\kappa$ is even and $\kappa/2$ is an integer.

    We defined what are the forward computations (that is, for every $i \in [\ell]$, the circuit $C_{i, \rightarrow}$), and the definition of the circuits $C_{i, \leftarrow}$ follows: 
    \begin{enumerate}
        \item 
        \textit{Inverses of first $\ell_{\Lattice}$ circuits.}
        For $i \in [m + 1]$, $j \in [n]$, the circuit $C_{ \left( 1, i, j \right), \leftarrow }$, given main input $\vVector$ and auxiliary input randomness $z_{1, i, j} \in \{ 0, 1 \}$, subtracts the vector $z_{1, i, j} \cdot 2^{i - 1} \cdot \Basis_{j}$ instead of adding it.

        \item 
        \textit{Inverse of circuit $\ell_{\Lattice} + 1$.}
        The circuit $C_{ \left( 2, 1 \right), \leftarrow }$, given input $\vVector$, adds the vector $2^{m} \cdot \sum_{j \in [n]} \Basis_{j}$ instead of subtracting it.

        \item 
        \textit{Inverses of the circuits $\ell_{\Lattice} + 1 + i$, for $i \in [\ell_{G}]$.}
        For $i_{1} \in [\kappa], i_{2} \in [\beta + 1], i_{3} \in [n]$, the circuit $C_{ \left( 3, i_{1}, i_{2}, i_{3} \right), \leftarrow }$, given main input $\vVector$ and auxiliary input randomness $z_{\left( 3, i_{1}, i_{2}, i_{3} \right)} \in \{ 0, 1 \}$, subtracts the vector $z_{\left( 3, i_{1}, i_{2}, i_{3} \right)} \cdot 2^{i_{2} - 1} \cdot e_{i_{3}}$, instead of adding it.        

        \item 
        \textit{Inverse of circuit $\ell_{\Lattice} + 1 + \ell_{G} + 1$.}
        The circuit $C_{ \left( 4, 1 \right), \leftarrow }$, given input $\vVector$, adds the vector
        $$
        \left(
        \frac{\kappa}{2}\cdot
        \left( 2^{\beta + 1} - 1 \right)
        \left( 1_{1}, 1_{2}, \cdots, 1_{n} \right)^{T}
        \right)
        \enspace ,
        $$
        instead of subtracting it.
    \end{enumerate}

    Regarding the output of the reduction, which is a pair of invertible circuits sequences $\left( C^{0}, C^{1} \right)$: The first circuit sequence $C^{0}$ is exactly the circuit sequence $C$. As for the second circuit sequence $C^{1}$, it is the same as $C = C^{0}$ only that we add the vector $\tVector$ in the end, and formally, we make the following single change to $C = C^{0}$: The last circuit pair $C_{\left( 4, 1 \right), \rightarrow}$, $C_{\left( 4, 1 \right), \leftarrow}$ is changed as follows to the pair $C'_{\left( 4, 1 \right), \rightarrow}$, $C'_{\left( 4, 1 \right), \leftarrow}$.
    \begin{itemize}
        \item
        $C'_{\left( 4, 1 \right), \rightarrow}$ is the same as $C_{\left( 4, 1 \right), \rightarrow}$, only that it further adds $\tVector$ in the end of its computation.

        \item 
        The inverse $C'_{\left( 4, 1 \right), \leftarrow}$ acts the same as $C_{\left( 4, 1 \right), \leftarrow}$, but further subtracts $\tVector$ in the end of its computation.
    \end{itemize}
    It can be easily verified that both $C^{0}$ and $C^{1}$ are legal $1$-invertible circuit sequences.

    \paragraph{Properties of the output distributions of the circuits $C^{0}$, $C^{1}$.}
    We next observe the following properties of the output distribution of the circuit sequence $C$, and then for the circuits $C^{0}$, $C^{1}$.
    \begin{itemize}
        \item
        Note that in step \ref{gcvp_reduction_step_1} of the circuit $C$, for every $j \in [n]$, it samples a uniformly random number $M_{j} \in \{ 0, 1, \cdots, 2^{m + 1} - 1 \}$, and adds $M_{j} \cdot \Basis_{j}$ to the input. This means that at the end of step \ref{gcvp_reduction_step_1}, we have the sum $\sum_{j \in [n]} M_{j} \cdot \Basis_{j}$.

        \item 
        The next step \ref{gcvp_reduction_step_2} of the algorithm is intended to move the distribution of the coordinates vector $\left( M_1, \cdots, M_{n} \right)$ be uniform with expectation $\approx 0$, and to this end the circuit $C_{\ell_{\Lattice} + 1, \rightarrow}$ subtracts $2^{m}\cdot \sum_{j \in [n]} \Basis_{j}$. At the end of this step the output distribution is accordingly $\sum_{j \in [n]} \left( M_{j} - 2^{m} \right)\cdot \Basis_{j}$, where for every $j \in [n]$, the variables $M_{j} - 2^{m}$ are i.i.d. samples from the uniform distribution over $\{ -2^{m}, \cdots, -1, 0, 1, \cdots, 2^{m} - 1 \}$. In other words, at the end of step \ref{gcvp_reduction_step_2} we have a random lattice vector, over a uniformly random coordinates vector in the restricted set $\bbZ^{n}_{\left( -2^{m}, 2^{m} - 1 \right)}$.

        \item 
        The next step \ref{gcvp_reduction_step_3} of the algorithm is intended to add positive approximated Gaussian noise. Observe that at the end of this step, for each $j \in [n]$ we add $\kappa$ uniformly random i.i.d. samples $\Bar{\beta}_{j, i_{1}} \in \{ 0, 1, \cdots, 2^{\beta + 1} - 1 \}$ (for $i_{1} \in [\kappa]$) to the distribution from before.

        \item 
        The next step is intended to stabilize the Gaussian noise to center around $0$ (in expectation). To this end, step \ref{gcvp_reduction_step_4} subtracts $\frac{\kappa}{2}\left( 2^{\beta + 1} - 1 \right)$ ($\kappa/2$ is an integer) from each of the $n$ coordinates of $\vVector$. Since this is exactly the expectation of the random variable we add to each coordinate, the average of Gaussian noise becomes zero. 
    \end{itemize}
    
    To conclude what we saw so far,
    \begin{itemize}
        \item
        The output distribution of $C^{0}$ is the variable $\left( \vVector + \eVector \right) \in \bbZ^{n}_{\left( -\left( 2^{m + b}\cdot n + \kappa \cdot 2^{\beta} \right), 2^{m + b}\cdot n + \kappa \cdot 2^{\beta} \right)}$, where $\vVector \in \bbZ^{n}_{\left( -\left( 2^{m + b}\cdot n \right), 2^{m + b}\cdot n \right)}$ is a random lattice vector with a uniformly random coordinates vector $\aVector \in \bbZ^{n}_{\left( -2^{m}, 2^{m} - 1 \right)}$, and $\eVector \in \bbZ^{n}_{\left( -\kappa\cdot 2^{\beta}, \kappa\cdot 2^{\beta} \right)}$ such that for each $j \in [n]$, the coordinate $\eVector_{j}$ is a random variable defined as follows: We sample $\kappa$ i.i.d. samples from the distribution $\mathcal{U}_{\left( 0, 2^{\beta + 1} - 1 \right)}$ (the discrete uniform distribution on the set $\{ 0, 1, \cdots, 2^{\beta + 1} - 1 \}$), add all $\kappa$ samples together, and then subtract $\frac{\kappa}{2}\cdot \left( 2^{\beta + 1} - 1 \right)$, which is exactly the expectation of the sum, and is also an integer since $\kappa$ is an even natural number.

        \item 
        The output distribution of $C^{1}$ is the same as that of $C^{0}$ only that the vector $\tVector$ is added in the end, that is, the output distribution of $C^{1}$ is $\vVector + \eVector + \tVector$.
    \end{itemize}

    \noindent
    \textbf{To conclude} the part where we describe the reduction and its properties, we have the equality
    $$
    \norm{ \mathcal{D}\left( C^{0} \right) - \mathcal{D}\left( C^{1} \right) }_{TV}
    $$
    $$
    :=
    \norm{ \mathcal{D}\left( \vVector + \eVector \right) - \mathcal{D}\left( \vVector + \eVector + \tVector \right) }_{TV}
    \enspace .
    $$

    Next, let $\Tilde{B} := \ceil{ \sqrt{ \kappa } \cdot \Tilde{\sigma} }$, $\Tilde{\sigma} := \sqrt{ \frac{ 2^{2\beta + 2} + 2^{\beta + 2} }{ 12 } }$, and let $D^{n}_{ \Tilde{B} }$ the $n$-dimensional discrete truncated Gaussian distribution with standard deviation $\Tilde{B}$, as in Definition \ref{definition:discrete_truncated_gaussian}. By Fact \ref{fact:tv_distance_switched_rv}, the above total variation distance equals
    \begin{equation} \label{equation:tv_distance_switched_rv}
    =
    \norm{ \mathcal{D}\left( \vVector + D^{n}_{ \Tilde{B} } \right) - \mathcal{D}\left( \vVector + D^{n}_{ \Tilde{B} } + \tVector \right) }_{TV}
    +
    2\cdot d_{ \left( \eVector, D^{n}_{ \Tilde{B} } \right) }
    \enspace ,
    \end{equation}
    for some $d_{ \left( \eVector, D^{n}_{ \Tilde{B} } \right) } \in \bbR$ inside the range
    $$
    \bigg[
    -\norm{ \mathcal{D}\left( D^{n}_{ \Tilde{B} } \right) - \mathcal{D}\left( \eVector \right) }_{TV}
    ,
    \norm{ \mathcal{D}\left( D^{n}_{ \Tilde{B} } \right) - \mathcal{D}\left( \eVector \right) }_{TV}
    \bigg]
    \enspace ,
    $$
    We next observe that the complexity of the reduction is polynomial in the input size, and then turn to proving correctness, that is, that when $\left( \Basis, \tVector, d \right) \in \YES$ then the output distributions of the circuits are statistically close (completeness), and when the input is in $\NO$, then the output distributions are statistically far (soundness). To this end, we will first bound $\norm{ \mathcal{D}\left( D^{n}_{ \Tilde{B} } \right) - \mathcal{D}\left( \eVector \right) }_{TV}$ (which in turn will bound the absolute value of $d_{\left( \eVector, D^{n}_{ \Tilde{B} } \right)}$), and then analyze the soundness and completeness separately.

    \paragraph{Computational complexity of the reduction.}
    For each $b' \in \{ 0, 1 \}$, $i \in [\ell]$, $Y \in \{ \rightarrow, \leftarrow \}$, it is easy to verify that the circuit $C^{b'}_{i, Y}$ executes in polynomial time in its input size, which is in turn $n \cdot \left( s + 1 \right)$. The construction of the circuit $C^{b'}_{i, Y}$ given the input $\left( \Basis, \tVector, d \right)$ also takes polynomial time in the input size. Finally, the amount of circuits is $\ell$ and each circuits size is polynomial in $n \cdot \left( s + 1 \right)$, which are both polynomial in the input size. It follows that the reduction is implemented by a classical deterministic polynomial time Turing machine.

    \paragraph{Sequentially invertible approximation of the discrete Gaussian distribution.} 
    Recall the parameter $d_{\left( \eVector, D_{\Tilde{B}}^{n} \right)}$ from Equation \ref{equation:tv_distance_switched_rv}: For both, the proofs of soundness and completeness of the reduction, we need to bound the absolute value of $d_{\left( \eVector, D_{\Tilde{B}}^{n} \right)}$. By the definition of $d_{\left( \eVector, D_{\Tilde{B}}^{n} \right)}$ it will be sufficient to bound $\norm{ \mathcal{D}\left( D^{n}_{ \Tilde{B} } \right) - \mathcal{D}\left( \eVector \right) }_{TV}$. Observe that the distance $\norm{ \mathcal{D}\left( D^{n}_{ \Tilde{B} } \right) - \mathcal{D}\left( \eVector \right) }_{TV}$ is essentially the quality of our (sequentially invertible) approximation of the truncated discrete Gaussian distribution.
    
    Note that each of the distributions $\mathcal{D}\left( \eVector \right)$, $\mathcal{D}\left( D^{n}_{ \Tilde{B} } \right)$ is a list of $n$ i.i.d. samples of some distribution, and thus,
    $$
    \norm{ \mathcal{D}\left( \eVector \right) - \mathcal{D}\left( D^{n}_{ \Tilde{B} } \right) }_{TV}
    \leq
    n \cdot \norm{ \mathcal{D}\left( \eVector_{1} \right) - \mathcal{D}\left( D_{ \Tilde{B} } \right) }_{TV}
    \enspace ,
    $$
    where $\eVector_{1}$ is the distribution of the first coordinate of the vector $\eVector$ (which distributes identically for all of its coordinates $i \in [n]$). By Lemma \ref{lemma:distance_between_discrete_uniform_sum_and_discrete_gaussian}, there exists a positive absolute constant $c_{0} \in \Nat$ such that $\norm{ \mathcal{D}\left( \eVector_{1} \right) - \mathcal{D}\left( D_{ \Tilde{B} } \right) }_{TV} \leq c_{0} \cdot \frac{1}{\sqrt{\kappa}}$. In conclusion, $|d_{\left( \eVector, D_{\Tilde{B}}^{n} \right)}| \leq \frac{c_{0} \cdot n}{\sqrt{\kappa}} \leq \frac{1}{8}$, thus $|2 \cdot d_{\left( \eVector, D_{\Tilde{B}}^{n} \right)}| \leq \frac{1}{4}$.

    \paragraph{Soundness of the reduction.} 
    In case $\left( \Basis, \tVector, d \right) \in \NO$, then $\Delta\left( \Lattice_{\Basis}, \tVector \right) > d\cdot g(n)$, and we should give a lower bound for 
    $$
    \norm{ \mathcal{D}\left( C^{0} \right) - \mathcal{D}\left( C^{1} \right) }_{TV}
    $$
    $$
    =
    \norm{ \mathcal{D}\left( \vVector + D^{n}_{ \Tilde{B} } \right) - \mathcal{D}\left( \vVector + D^{n}_{ \Tilde{B} } + \tVector \right) }_{TV}
    +
    2\cdot d_{ \left( \eVector, D^{n}_{ \Tilde{B} } \right) }
    $$
    $$
    \geq
    \norm{ \mathcal{D}\left( \vVector + D^{n}_{ \Tilde{B} } \right) - \mathcal{D}\left( \vVector + D^{n}_{ \Tilde{B} } + \tVector \right) }_{TV}
    -
    \frac{1}{4}
    \enspace .
    $$
    
    Recall that,
    $$
    \Tilde{\sigma}
    := \sqrt{ \frac{ 2^{2\beta + 2} + 2^{\beta + 2} }{ 12 } }
    < \sqrt{ \frac{ 8 \cdot 2^{2\beta} }{ 12 } }
    < 2^{\beta}
    \underset{(\beta := \lfloor \log_{2}\left( B \right) \rfloor)}{\leq}
    B \enspace .
    $$
    Also,
    $$
    \Tilde{B}
    :=
    \lceil
    \sqrt{\kappa} \cdot \Tilde{\sigma}
    \rceil
    <
    \lceil
    \sqrt{\kappa} \cdot B
    \rceil
    :=
    \lceil
    \sqrt{\kappa} \cdot \frac{g(n) \cdot d}{c_{\mathcal{N}} \cdot \sqrt{n} \cdot \sqrt{\kappa} }
    \rceil
    =
    \lceil
    \frac{ g(n) \cdot d }{ c_{\mathcal{N}} \cdot \sqrt{n} }
    \rceil
    \underset{\left( \frac{ g(n) \cdot d }{ c_{\mathcal{N}} \cdot \sqrt{n} } > 1 \right)}{<}
    \frac{ 2 \cdot g(n) \cdot d }{ c_{\mathcal{N}} \cdot \sqrt{n} } \enspace .
    $$
    Now, Fact \ref{fact:gaussian_vector_norm_concentration} tells us that with probability at least $1 - e^{ -c_{D} \cdot n }$ (for some absolute constant $c_{D} \in \bbR_{> 0}$), the norm of the variable $\uVector \gets D_{ \Tilde{B} }^{n}$ is bounded by $2 \cdot \sqrt{n} \cdot \Tilde{B}$, which in turn is bounded by $\frac{ g(n) \cdot d }{ 2 }$ due to $c_{\mathcal{N}} := 8$. It follows that for a sample $\uVector \gets D_{ \Tilde{B} }^{n}$, with probability $\geq 1 - e^{ -c_{D} \cdot n }$ we have $\norm{\uVector} \leq \frac{ g(n) \cdot d }{ 2 }$.

    Now, consider the output distributions $\mathcal{D}\left( \vVector + \uVector \right)$, $\mathcal{D}\left( \vVector + \uVector + \tVector \right)$, for the case $\norm{\uVector} \leq \frac{ g(n) \cdot d }{ 2 }$.
    For any $\vVector \in \Lattice_{\Basis}$, $\Delta\left( \Lattice_{\Basis}, \vVector + \uVector \right) \leq \frac{g(n) \cdot d}{2}$. Also, since $\Delta\left( \Lattice_{\Basis}, \tVector \right) > d\cdot g(n)$ and $\Delta\left( \Lattice_{\Basis}, \vVector \right) = 0$ then
    $$
    \Delta\left( \Lattice_{\Basis}, \vVector + \uVector + \tVector \right)
    =
    \Delta\left( \Lattice_{\Basis}, \uVector + \tVector \right)
    \geq
    \Delta\left( \Lattice_{\Basis}, \tVector \right) - \norm{ \uVector }
    > d\cdot g(n) - \frac{g(n) \cdot d}{2} = \frac{g(n) \cdot d}{2}
    \enspace .
    $$
    It follows that if $\norm{\uVector} \leq \frac{ g(n) \cdot d }{ 2 }$ then the distributions $\mathcal{D}\left( \vVector + \uVector \right)$, $\mathcal{D}\left( \vVector + \uVector + \tVector \right)$ don't intersect, and since this happens with probability $\geq 1 - e^{ -c_{D} \cdot n }$, then 
    $$
    \norm{ \mathcal{D}\left( \vVector + D^{n}_{ \Tilde{B} } \right) - \mathcal{D}\left( \vVector + D^{n}_{ \Tilde{B} } + \tVector \right) }_{TV}
    \geq
    1 - e^{ -c_{D} \cdot n }
    \enspace ,
    $$
    which finally implies the reduction's soundness, i.e., if $\left( \Basis, \tVector, d \right) \in \NO$ then
    $$
    \norm{ \mathcal{D}\left( C^{0} \right) - \mathcal{D}\left( C^{1} \right) }_{TV}
    \geq
    \frac{3}{4} - e^{ -c_{D} \cdot n }
    \geq
    \frac{3}{4} - e^{ - \Omega \left( n \right) }
    \enspace .
    $$

    \paragraph{Completeness of the reduction, first part (reducing the analysis to analyzing total variation distance of the noise vector $\uVector$).}
    In case $\left( \Basis, \tVector, d \right) \in \YES$, then $\Delta\left( \Lattice_{\Basis}, \tVector \right) \leq d$, and we should give an upper bound for 
    $$
    \norm{ \mathcal{D}\left( C^{0} \right) - \mathcal{D}\left( C^{1} \right) }_{TV}
    =
    \norm{ \mathcal{D}\left( \vVector + D^{n}_{ \Tilde{B} } \right) - \mathcal{D}\left( \vVector + D^{n}_{ \Tilde{B} } + \tVector \right) }_{TV}
    +
    2\cdot d_{ \left( \eVector, D^{n}_{ \Tilde{B} } \right) }
    $$
    $$
    \leq
    \norm{ \mathcal{D}\left( \vVector + D^{n}_{ \Tilde{B} } \right) - \mathcal{D}\left( \vVector + D^{n}_{ \Tilde{B} } + \tVector \right) }_{TV}
    +
    \frac{1}{4}
    \enspace .
    $$

    Define $\uVector \gets D^{n}_{ \Tilde{B} }$ as a sample from the distribution $D^{n}_{ \Tilde{B} }$.
    We would like to use the fact that adding a lattice vector to a uniform distribution over that lattice, does not change the distribution. However, the lattice is infinite and we cannot sample a uniformly random vector over the entire lattice (in finite time). Formally, since $\Delta\left( \Lattice_{\Basis}, \tVector \right) \leq d$, thus there exist $\sVector \in \Lattice_{\Basis}$ and $\eVector_{0} \in \bbZ^{n}$ such that $\tVector = \sVector + \eVector_{0}$ and $\norm{ \eVector_{0} } \leq d$. It follows that,
    $$
    \norm{
    \mathcal{D}\left( \vVector + \uVector \right)
    -
    \mathcal{D}\left( \vVector + \uVector + \tVector \right)
    }_{TV}
    $$
    $$
    \underset{(\text{triangle inequality})}{\leq}
    \norm{
    \mathcal{D}\left( \vVector + \uVector \right)
    -
    \mathcal{D}\left( \vVector - \sVector + \uVector + \tVector \right)
    }_{TV}
    +
    \norm{
    \mathcal{D}\left( \vVector + \uVector + \tVector \right)
    -
    \mathcal{D}\left( \vVector - \sVector + \uVector + \tVector \right)
    }_{TV}
    $$
    $$
    =
    \norm{
    \mathcal{D}\left( \left( \vVector \right) + \uVector \right)
    -
    \mathcal{D}\left( \left( \vVector \right) + \uVector + \eVector_{0} \right)
    }_{TV}
    +
    \norm{
    \mathcal{D}\left( \vVector + \left( \uVector + \tVector \right) \right)
    -
    \mathcal{D}\left( \vVector - \sVector + \left( \uVector + \tVector \right) \right)
    }_{TV}
    $$
    $$
    \underset{(\text{summing more independent variables can only reduce TV distance})}{\leq}
    \norm{
    \mathcal{D}\left( \uVector \right)
    -
    \mathcal{D}\left( \uVector + \eVector_{0} \right)
    }_{TV}
    +
    \norm{
    \mathcal{D}\left( \vVector \right)
    -
    \mathcal{D}\left( \vVector - \sVector \right)
    }_{TV}
    \enspace .
    $$
    We now use the fact that we took the bound $M$ (which is effectively the sample space of our coordinates for the vector $\vVector$) to be exponentially (in the input dimension $n$) larger than the vectors in $\Basis$ and $\tVector$. Formally, note that since the norm of $\sVector$ is $\leq \norm{ \tVector } + \norm{ \eVector_{0} } \leq \norm{ \tVector } + d$, then $\norm{ \sVector } \cdot 2^{n} \leq M$. Since the variable $\vVector$ has a uniformly random coordinates vector in $\bbZ^{n}_{\left( -2^{m}, 2^{m} - 1 \right)}$, the length $\sVector$ is tiny compared to $\vVector$ with overwhelming probability. It follows that the distribution $\vVector - \sVector$ has total variation distance bounded by $2^{-n}$ to $\vVector$, and the above sum of total variation distances is bounded by
    $$
    \norm{
    \mathcal{D}\left( \uVector \right)
    -
    \mathcal{D}\left( \uVector + \eVector_{0} \right)
    }_{TV}
    +
    2^{-n}
    \enspace .
    $$
    It remains to bound the total variation distance between $\mathcal{D}\left( \uVector \right)$ and $\mathcal{D}\left( \uVector + \eVector_{0} \right)$ for any $\eVector_{0}$ such that $\norm{ \eVector_{0} } \leq d$.

    \paragraph{Completeness of the reduction, second part (Discrete Gaussians with close centers are close in total variation distance).}
    So far in the proof of completeness, we saw that the total variation distance between the output distributions of the circuits $C^{0}$ and $C^{1}$ is bounded by 
    $$
    \norm{
    \mathcal{D} \left( \uVector \right)
    -
    \mathcal{D}\left( \uVector + \eVector_{0} \right)
    }_{TV}
    +
    \frac{ 1 }{ 4 }
    +
    2^{ -n }
    \enspace .
    $$
    Recall that $\uVector \gets D^{n}_{ \Tilde{B} }$, and Lemma \ref{lemma:wide_discrete_gaussians_are_close} tells us that
    $$
    \norm{ \mathcal{D}\left( \uVector \right) - \mathcal{D}\left( \uVector + \eVector_{0} \right) }_{TV}
    \leq
    \sqrt{
    1
    -
    e^{ -\frac{ \pi \cdot \left( \norm{ \eVector_{0} }^{2} + 2\cdot \norm{ \eVector_{0} }\cdot \sqrt{n}\cdot \Tilde{B} \right) }{ \Tilde{B}^{2} } }
    }
    \enspace .
    $$
    We have $\norm{ \eVector_{0} } \leq d$, and we previously set,
    \begin{itemize}
        \item 
        $g(n) := c_{g} \cdot n \sqrt{n}$.
    
        \item
        $B := \frac{ g(n) \cdot d }{ c_{\mathcal{N}} \cdot \sqrt{n} \cdot \sqrt{\kappa} }$, where $c_{\mathcal{N}} := 8$.

        \item 
        $\beta := \lfloor \log_{2}\left( B \right) \rfloor$.

        \item 
        $\Tilde{\sigma} := \sqrt{ \frac{ 2^{2\beta + 2} + 2^{\beta + 2} }{ 12 } }$.

        \item 
        $\Tilde{B} := \ceil{ \sqrt{\kappa} \cdot \Tilde{\sigma} }$.
    \end{itemize}
    The above implies 
    $$
    \Tilde{\sigma}
    :=
    \sqrt{ \frac{ 2^{2\beta + 2} + 2^{\beta + 2} }{ 12 } }
    >
    \sqrt{ \frac{ 4 \cdot 2^{2\beta} }{ 12 } }
    =
    2^{\beta} \cdot \frac{ 1 }{ \sqrt{3} }
    >
    \frac{2^{\beta}}{2}
    :=
    \frac{ 2^{ \lfloor \log_{2}\left( B \right) \rfloor } }{2}
    \geq
    \frac{ B }{ 4 }
    :=
    \frac{ g(n) \cdot d }{ 32 \cdot \sqrt{n} \cdot \sqrt{\kappa} }
    \enspace ,
    $$
    which in turn tells us that $\Tilde{B} > \frac{ g(n) \cdot d }{ 32 \cdot \sqrt{n} }$. We now use this lower bound for $\Tilde{B}$ for giving a lower bound for the exponent above:
    $$
    e^{ -\frac{ \pi \cdot \left( \norm{ \eVector_{0} }^{2} + 2\cdot \norm{ \eVector_{0} }\cdot \sqrt{n}\cdot \Tilde{B} \right) }{ \Tilde{B}^{2} } }
    $$
    $$
    =
    e^{
    -\frac{ \pi \cdot \norm{ \eVector_{0} }^{2} }{ \Tilde{B}^{2} }
    }
    \cdot
    e^{
    -\frac{ \pi \cdot 2 \cdot \norm{ \eVector_{0} }\cdot \sqrt{n} }{ \Tilde{B} }
    }
    $$
    $$
    >
    e^{
    -\frac{ 32^{2} \cdot n \cdot \pi \cdot d^{2} }{ g(n)^{2} \cdot d^{2} }
    }
    \cdot
    e^{
    -\frac{ 32 \cdot \sqrt{n} \cdot \pi \cdot 2 \cdot d \cdot \sqrt{n} }{ g(n) \cdot d }
    }
    $$
    $$
    =
    e^{
    -\frac{ 32^{2} \cdot n \cdot \pi }{ c_{g}^{2} \cdot n^{3} }
    }
    \cdot
    e^{
    -\frac{ 32 \cdot \sqrt{n} \cdot \pi \cdot 2 \cdot \sqrt{n} }{ c_{g} \cdot n \sqrt{n} }
    }
    $$
    $$
    =
    e^{
    -\frac{ 32^{2} \cdot \pi }{ c_{g}^{2} \cdot n^{2} }
    }
    \cdot
    e^{
    -\frac{ 32 \cdot \pi \cdot 2 }{ c_{g} \sqrt{n} }
    }
    \enspace ,
    $$
    which implies 
    $$
    \left(
    e^{ -\frac{ \pi \cdot \left( \norm{ \eVector_{0} }^{2} + 2\cdot \norm{ \eVector_{0} }\cdot \sqrt{n}\cdot \Tilde{B} \right) }{ \Tilde{B}^{2} } }
    \right)
    \geq
    \left(
    e^{
    -\frac{ 1 }{ \Omega \left( \sqrt{n} \right) }
    }
    \right)
    \enspace .
    $$
    Finally, note that for every positive constant $c \in \bbR_{> 0}$ we have the limit 
    $$
    \lim_{n \rightarrow \infty}
    \frac{
    \left(
    1
    -
    e^{
    -\frac{ c }{ \sqrt{n} }
    }
    \right)
    }
    {
    \frac{1}{\sqrt{n}}
    }
    =
    c
    \enspace .
    $$
    It follows that 
    $$
    \sqrt{
    1
    -
    e^{ -\frac{ \pi \cdot \left( \norm{ \eVector_{0} }^{2} + 2\cdot \norm{ \eVector_{0} }\cdot \sqrt{n}\cdot \Tilde{B} \right) }{ \Tilde{B}^{2} } }
    }
    \underset{(\exists c \in \bbR_{> 0})}{\leq}
    \sqrt{
    1
    -
    e^{
    -\frac{ c }{ \sqrt{n} }
    }
    }
    \leq
    \frac{ \sqrt{c} }{ n^{\frac{1}{4}} }
    \enspace .
    $$

    To summarize, in case $\left( \Basis, \tVector, d \right) \in \YES$, we have the upper bound 
    $$
    \norm{ \mathcal{D}\left( C^{0} \right) - \mathcal{D}\left( C^{1} \right) }_{TV}
    \leq
    \frac{1}{4}
    +
    2^{-n}
    +
    \frac{ \sqrt{c} }{ n^{\frac{1}{4}} }
    \leq
    \frac{1}{4}
    +
    O\left( n^{-\frac{1}{4}} \right)
    \enspace .
    $$
\end{proof}

\begin{lemma} [Discrete uniform sum is indistinguishable from discrete Gaussian] \label{lemma:distance_between_discrete_uniform_sum_and_discrete_gaussian}
    For $\kappa \in \Nat$, $\beta \in \Nat \cup \{ 0 \}$ such that $\kappa$ is an even number, let $S$ the random variable defined as the sum of $\kappa + 1$ random variables: $\kappa$ i.i.d. samples from $\mathcal{U}_{ \left( 0, 2^{\beta + 1} - 1 \right) }$ (the discrete uniform distribution on $\{ 0, 1, \cdots, 2^{\beta + 1} - 1 \}$), followed by subtracting (the zero-entropy "variable") $\frac{\kappa}{2}\cdot \left( 2^{\beta + 1} - 1 \right)$:
    $$
    S
    :=
    \sum_{i \in [\kappa]} \mathcal{U}^{\left( i \right)}_{ \left( 0, 2^{\beta + 1} - 1 \right) }
    -
    \frac{\kappa}{2}\cdot \left( 2^{\beta + 1} - 1 \right)
    \enspace .
    $$
    Then, 
    $$
    \norm{
    \mathcal{D}\left( S \right)
    -
    \mathcal{D}\left( D_{\Tilde{B}} \right)
    }_{TV}
    \leq
    O\left( \frac{1}{\sqrt{ \kappa }} \right)
    \enspace ,
    $$
    for $\Tilde{\sigma} := \sqrt{ \frac{ 2^{2\beta + 2} + 2^{\beta + 2} }{ 12 } }$, $\Tilde{B} := \ceil{ \sqrt{\kappa} \cdot \Tilde{\sigma} }$.
\end{lemma}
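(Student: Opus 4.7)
The plan is to apply Theorem~\ref{theorem:discrete_TV_distance_berry_esseen} (the discrete un-normalized Berry--Esseen bound) to the sum $S$, obtaining an upper bound on $\norm{ \mathcal{D}(S) - \mathcal{D}(\mathcal{N}^{\bbZ}(0, \sigma)) }_{TV}$, and then to transfer to the discrete truncated Gaussian $D_{\Tilde{B}}$ via the three closeness inequalities of Fact~\ref{fact:rounded_gaussian_properties}. Because $S$ is a mean-zero sum of $\kappa$ i.i.d. discrete uniforms on $\{0, 1, \ldots, 2^{\beta+1}-1\}$, each summand has variance $\Theta(2^{2\beta})$, so the variance of the sum satisfies $\sigma^{2} = \Theta(\kappa \cdot 2^{2\beta})$, matching $\Tilde{B}^{2}$ up to a constant factor and the ceiling operation.

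First I would instantiate Theorem~\ref{theorem:discrete_TV_distance_berry_esseen} with the $\kappa$ centered uniforms. The third absolute moment of each summand is $O(2^{3\beta})$, so each $\gamma_{i}$ in the theorem satisfies $\gamma_{i} = O(2^{3\beta}/\sigma^{3}) = O(1/\kappa^{3/2})$ and the aggregate satisfies $\gamma = \kappa \cdot O(1/\kappa^{3/2}) = O(1/\sqrt{\kappa})$. The standalone contribution $\frac{1}{\sigma \cdot 2\sqrt{2\pi}} = O(1/(\sqrt{\kappa}\cdot 2^{\beta}))$ is also $O(1/\sqrt{\kappa})$. What remains is the main summation $\frac{3}{2}\sigma \sum_{i}\left(\gamma_{i} + \frac{2\sigma_{i}^{2}}{3\sigma^{3}}\right)\norm{\mathcal{D}(S^{(i)}) - \mathcal{D}(S^{(i)}+1)}_{TV}$. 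Since each coefficient inside the sum is $O(1/\kappa^{3/2})$ and there are $\kappa$ terms, the external scalar factor is $O(\sigma/\sqrt{\kappa})$, and the proof reduces to establishing that the unit-shift distance $\norm{\mathcal{D}(S^{(i)}) - \mathcal{D}(S^{(i)}+1)}_{TV}$ is $O(1/\sigma)$.

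The main obstacle is precisely this unit-shift total variation bound. Here $S^{(i)}$ is a sum of $\kappa - 1$ i.i.d. discrete uniforms, and we need that its probability mass function is ``smooth'' in the sense that $\sum_{x}|p(x) - p(x-1)|$ is $O(1/\sigma)$. This is a local-central-limit-style statement, and the cleanest route is a direct Fourier-analytic argument: the characteristic function of a single centered uniform on a window of size $2^{\beta+1}$ is a Dirichlet kernel whose $(\kappa-1)$-th power decays rapidly outside frequencies of order $1/(2^{\beta}\sqrt{\kappa})$. Writing $p(x) - p(x-1)$ via the inverse Fourier transform introduces a factor $|1 - e^{i\xi}| = O(|\xi|)$ that vanishes at the origin, and integrating against the rapidly-decaying $(\kappa-1)$-fold convolution of the Dirichlet kernel gives, after an application of Cauchy--Schwarz or Parseval, the desired $O(1/\sigma)$ bound. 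Plugging this back into the Berry--Esseen estimate yields a total bound of $O(1/\sqrt{\kappa})$ on $\norm{\mathcal{D}(S) - \mathcal{D}(\mathcal{N}^{\bbZ}(0, \sigma))}_{TV}$.

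The final step is to move from the rounded Gaussian $\mathcal{N}^{\bbZ}(0, \sigma)$ to the discrete truncated Gaussian $D_{\Tilde{B}}$. One first adjusts the standard deviation from $\sigma$ to $\Tilde{B}$: since $|\sigma - \Tilde{B}|$ is $O(1)$ by construction, iterating the first inequality of Fact~\ref{fact:rounded_gaussian_properties} bounds this transition by $O(1/\Tilde{B}) = O(1/(\sqrt{\kappa}\cdot 2^{\beta})) \le O(1/\sqrt{\kappa})$. Then the second and third inequalities of Fact~\ref{fact:rounded_gaussian_properties} take us through $\mathcal{N}^{\Tilde{B}}(0, \Tilde{B})$ to $D_{\Tilde{B}}$, each at cost $O(1/\Tilde{B})$. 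Assembling these with the Berry--Esseen bound by the triangle inequality delivers the claimed $\norm{\mathcal{D}(S) - \mathcal{D}(D_{\Tilde{B}})}_{TV} = O(1/\sqrt{\kappa})$.
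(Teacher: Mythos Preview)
Your proposal is correct and follows essentially the same route as the paper: apply Theorem~\ref{theorem:discrete_TV_distance_berry_esseen}, bound the two standalone terms as $O(1/\sqrt{\kappa})$, reduce the remaining sum to controlling the unit-shift distance $\norm{\mathcal{D}(S^{(i)}) - \mathcal{D}(S^{(i)}+1)}_{TV}$, and then pass from the rounded Gaussian to $D_{\Tilde{B}}$ via the triangle inequality and Fact~\ref{fact:rounded_gaussian_properties}. The only point of divergence is that the paper invokes the unit-shift bound as a black-box statement (Fact~\ref{fact:moving_sum_by_1}), whereas you sketch a Fourier-analytic proof of it; your sketch is a reasonable outline of how one would actually establish that fact.
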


\begin{proof}
    We use Theorem \ref{theorem:discrete_TV_distance_berry_esseen}. We have the following parameters:
    \begin{itemize}
        \item
        The expectation of $\mathcal{U}^{(i)}_{ \left( 0, 2^{\beta + 1} - 1 \right) }$ is $\mu_{i} = \frac{2^{\beta + 1} - 1}{2}$.

        \item 
        The variance of $\mathcal{U}^{(i)}_{ \left( 0, 2^{\beta + 1} - 1 \right) }$ is: $\sigma_{i}^{2} = \frac{ \left( 2^{\beta + 1} \right)^{2} - 1 }{ 12 } = \frac{ 2^{2\beta + 2} - 1 }{ 12 } = \Theta\left( 2^{2\beta} \right)$.

        \item 
        The variance of the sum $S$ is $\sigma^{2} = \sum_{i \in [\kappa]} \sigma_{i}^{2} = \kappa \cdot \sigma_{i}^{2} = \Theta\left( \kappa \cdot 2^{2\beta} \right)$, since we are dealing with i.i.d. random variables.

        \item 
        The third moment of $\mathcal{U}^{(i)}_{ \left( 0, 2^{\beta + 1} - 1 \right) }$ is of the order:
        $$
        \gamma_{i} :=
        \frac{ \bbE_{ u \gets \mathcal{U}_{ \left( 0, 2^{\beta + 1} - 1 \right) } }
        \left(
        | u - \mu_{ i } |^{3} \right) }{ \sigma^{3} } = \Theta\left( \frac{1}{ \kappa \sqrt{\kappa} }
        \right)
        \enspace .
        $$

        \item 
        The expectation of the sum of the $\kappa$ variables is $\frac{\kappa}{2}\cdot \left( 2^{\beta + 1} - 1 \right)$, and since in the full sum $S$ we subtracted it, we nullified the expectation of $S$, i.e., $\bbE\left( S \right) := \mu := \sum_{ i \in [\kappa] } \mu_{i} - \frac{\kappa}{2}\cdot \left( 2^{\beta + 1} - 1 \right) = 0$.

        \item 
        The sum of third moments is $\gamma := \sum_{ i \in [\kappa] }\gamma_{i} = \Theta\left( \frac{1}{ \sqrt{\kappa} } \right)$.

        \item 
        For every $i \in [\kappa]$, the random variable $S^{(i)} := S - \mathcal{U}^{\left( i \right)}_{\left( 0, 2^{\beta + 1} - 1 \right)}$ is summing all $\kappa$ variables, except variable $i$.
    \end{itemize}
    The theorem's upper bound on the total variation distance implies,
    $$
    \norm{
    \mathcal{D}\left( S \right)
    -
    \mathcal{D}\left( \mathcal{N}^{ \bbZ }\left( 0, \sqrt{\kappa} \cdot \Tilde{\sigma} \right) \right)
    }_{TV}
    $$
    $$
    \leq
    \frac{3}{2} \cdot \sigma \cdot \sum_{ i \in [\kappa] }
    \left(
    \left( \gamma_{i} + \frac{ 2 \cdot \sigma_{i}^{2} }{ 3 \cdot \sigma^{3} } \right)
    \cdot
    \norm{ \mathcal{D}\left( S^{(i)} \right) - \mathcal{D}\left( S^{(i)} + 1 \right) }_{TV}
    \right)
    $$
    $$
    +
    \left( 5 + 3\sqrt{\frac{\pi}{8}} \right)\cdot \gamma
    +
    \frac{1}{\sigma \cdot 2 \sqrt{2\pi}}
    \enspace , 
    $$
    where $\Tilde{\sigma} := \sqrt{ \frac{ 2^{2\beta + 2} + 2^{\beta + 2} }{ 12 } }$.
    We analyze each of the three summands separately, from last to first. As for the third (and last) summand, $\frac{1}{\sigma \cdot 2 \sqrt{2\pi}} \leq O\left( \frac{1}{ \sqrt{\kappa} \cdot 2^{ \beta } } \right)$ because $\sigma = \Theta \left( \sqrt{\kappa} \cdot 2^{ \beta } \right)$. The second summand is $\left( 5 + 3\sqrt{\frac{\pi}{8}} \right)\cdot \gamma \leq O\left( \frac{1}{ \sqrt{\kappa} } \right)$ because $\gamma = \Theta\left( \frac{1}{ \sqrt{ \kappa } } \right)$. Overall the sum of the third and second summands together is $O\left( \frac{1}{ \sqrt{\kappa} } \right)$.

    The first summand is a sum by itself, and we have
    $$
    \frac{3}{2} \cdot \sigma \cdot \sum_{ i \in [\kappa] }
    \left(
    \left( \gamma_{i} + \frac{ 2 \cdot \sigma_{i}^{2} }{ 3 \cdot \sigma^{3} } \right)
    \cdot
    \norm{ \mathcal{D}\left( S^{(i)} \right) - \mathcal{D}\left( S^{(i)} + 1 \right) }_{TV}
    \right)
    $$
    $$
    =
    \sum_{ i \in [\kappa] }
    \left(
    \left( \frac{3}{2} \cdot \sigma \cdot \gamma_{i} + \frac{ \sigma_{i}^{2} }{ \sigma^{2} } \right)
    \cdot
    \norm{ \mathcal{D}\left( S^{(i)} \right) - \mathcal{D}\left( S^{(i)} + 1 \right) }_{TV}
    \right)
    $$
    $$
    \underset{(*)}{=}
    \kappa \cdot
    \left(
    \left( \frac{3}{2} \cdot \sigma \cdot \gamma_{1} + \frac{ \sigma_{1}^{2} }{ \sigma^{2} } \right)
    \cdot
    \norm{ \mathcal{D}\left( S^{(1)} \right) - \mathcal{D}\left( S^{(1)} + 1 \right) }_{TV}
    \right)
    $$
    $$
    =
    \kappa
    \cdot \frac{3}{2} \cdot \sigma \cdot \gamma_{1} 
    \cdot \norm{ \mathcal{D}\left( S^{(1)} \right) - \mathcal{D}\left( S^{(1)} + 1 \right) }_{TV}
    $$
    $$
    +
    \kappa 
    \cdot \frac{ \sigma_{1}^{2} }{ \sigma^{2} }
    \cdot \norm{ \mathcal{D}\left( S^{(1)} \right) - \mathcal{D}\left( S^{(1)} + 1 \right) }_{TV}
    $$
    $$
    \leq
    O\left( 
    2^{\beta}
    \cdot
    \norm{ \mathcal{D}\left( S^{(1)} \right) - \mathcal{D}\left( S^{(1)} + 1 \right) }_{TV}
    \right)
    +
    O\left( 
    \norm{ \mathcal{D}\left( S^{(1)} \right) - \mathcal{D}\left( S^{(1)} + 1 \right) }_{TV}
    \right)
    $$
    $$
    \leq
    O\left( 
    2^{\beta} \cdot \norm{ \mathcal{D}\left( S^{(1)} \right) - \mathcal{D}\left( S^{(1)} + 1 \right) }_{TV}
    \right)
    \enspace ,
    $$
    where the equality $(*)$ follows because for every $i \in [\kappa]$, $\gamma_{i} = \gamma_{1}$, $\sigma_{i} = \sigma_{1}$, and
    $$
    \norm{ \mathcal{D}\left( S^{(i)} \right) - \mathcal{D}\left( S^{(i)} + 1 \right) }_{TV}
    =
    \norm{ \mathcal{D}\left( S^{(1)} \right) - \mathcal{D}\left( S^{(1)} + 1 \right) }_{TV}
    \enspace .
    $$
    Finally, by Fact \ref{fact:moving_sum_by_1},
    $$
    2^{\beta} \cdot \norm{ \mathcal{D}\left( S^{(1)} \right) - \mathcal{D}\left( S^{(1)} + 1 \right) }_{TV}
    \leq
    O\left( \frac{1}{ \sqrt{ \kappa } } \right)
    \enspace ,
    $$
    and overall we showed so far,
    $$
    \norm{
    \mathcal{D}\left( S \right)
    -
    \mathcal{D}\left( \mathcal{N}^{ \bbZ }\left( 0, \sqrt{\kappa} \cdot \Tilde{\sigma} \right) \right)
    }_{TV}
    \leq 
    O\left( \frac{1}{ \sqrt{ \kappa } } \right)
    \enspace .
    $$
    Note that in order to finish our proof, by the triangle inequality, it will be sufficient to give an upper bound of $O\left( \frac{1}{ \sqrt{ \kappa } } \right)$ on $\norm{ \mathcal{D}\left( \mathcal{N}^{ \bbZ }\left( 0, \sqrt{\kappa} \cdot \Tilde{\sigma} \right) \right)-\mathcal{D}\left( D_{ \Tilde{B} } \right) }_{TV}$.

    Using the triangle inequality twice we get,
    $$
    \norm{
    \mathcal{D}\left( \mathcal{N}^{ \bbZ }\left( 0, \sqrt{\kappa} \cdot \Tilde{\sigma} \right) \right)
    -
    \mathcal{D}\left( D_{ \Tilde{B} } \right)
    }_{TV}
    $$
    $$
    \leq
    \norm{
    \mathcal{D}\left( \mathcal{N}^{ \bbZ }\left( 0, \sqrt{\kappa} \cdot \Tilde{\sigma} \right) \right)
    -
    \mathcal{D}\left( \mathcal{N}^{ \bbZ }\left( 0, \Tilde{B} \right) \right)
    }_{TV}
    $$
    $$
    +
    \norm{
    \mathcal{D}\left( \mathcal{N}^{ \bbZ }\left( 0, \Tilde{B} \right) \right)
    -
    \mathcal{D}\left( \mathcal{N}^{ \Tilde{B} }\left( 0, \Tilde{B} \right) \right)
    }_{TV}
    $$
    $$
    +
    \norm{
    \mathcal{D}\left( \mathcal{N}^{ \Tilde{B} }\left( 0, \Tilde{B} \right) \right)
    -
    \mathcal{D}\left( D_{ \Tilde{B} } \right)
    }_{TV}
    \enspace .
    $$
    Since $\Tilde{B} := \ceil{ \sqrt{\kappa} \cdot \Tilde{\sigma} }$ then $\Tilde{B} - 1 < \sqrt{\kappa} \cdot \Tilde{\sigma} \leq \Tilde{B}$, and thus 
    $$
    \norm{
    \mathcal{D}\left( \mathcal{N}^{ \bbZ }\left( 0, \sqrt{\kappa} \cdot \Tilde{\sigma} \right) \right)
    -
    \mathcal{D}\left( \mathcal{N}^{ \bbZ }\left( 0, \Tilde{B} \right) \right)
    }_{TV}
    <
    \norm{
    \mathcal{D}\left( \mathcal{N}^{ \bbZ }\left( 0, \Tilde{B} - 1 \right) \right)
    -
    \mathcal{D}\left( \mathcal{N}^{ \bbZ }\left( 0, \Tilde{B} \right) \right)
    }_{TV}
    \enspace .
    $$
    Finally, by Fact \ref{fact:rounded_gaussian_properties}, all of the above 3 summands are bounded by $O\left( \frac{1}{\Tilde{B}} \right) \leq O\left( \frac{1}{\sqrt{\kappa}} \right)$, as needed.
\end{proof}

In the proof above we used the following known fact about the distance between uniform sums.

\begin{fact} [Moving a large uniform sum has little difference] \label{fact:moving_sum_by_1}
    Let $S$ and $S^{\left( 1 \right)}$ as defined in the statement and proof of Lemma \ref{lemma:distance_between_discrete_uniform_sum_and_discrete_gaussian}. Then,
    $$
    \norm{
    \mathcal{D}\left( S^{\left( 1 \right)} \right)
    -
    \mathcal{D}\left( S^{\left( 1 \right)} + 1 \right) }_{TV}
    \leq 
    O\left( \frac{1}{ 2^{\beta} \cdot \sqrt{\kappa} } \right)
    \enspace .
    $$
\end{fact}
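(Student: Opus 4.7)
The plan is to bound the target total variation distance by exploiting that $S^{(1)}$ is, up to a deterministic shift, the sum of $\kappa-1$ i.i.d.\ copies of the discrete uniform on $N := 2^{\beta+1}$ values. Since total variation distance is translation-invariant, it suffices to prove
\[
\bigl\|\mathcal{D}(T) - \mathcal{D}(T+1)\bigr\|_{TV} \leq O\!\left(\tfrac{1}{N\sqrt{\kappa}}\right),
\qquad T := \sum_{i=1}^{\kappa-1} U_i,
\]
where each $U_i \sim \mathrm{Unif}\{0,1,\ldots,N-1\}$.

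The first step is to re-express the quantity of interest as $\|\mathcal{D}(T) - \mathcal{D}(T+1)\|_{TV} = \tfrac{1}{2}\|g\|_1$, where $g(k) := p_T(k) - p_T(k-1)$, and to note that on the circle $\hat g(t) = (1 - e^{it})\,\phi_U(t)^{\kappa-1}$, with $\phi_U(t) = \tfrac{1}{N}\sum_{j=0}^{N-1} e^{itj}$. Using the well-known estimates $|\phi_U(t)|^2 \leq e^{-N^2 t^2/12}$ for $|t| \leq \pi/N$ and $|\phi_U(t)| \leq \pi/(N|t|)$ on $[\pi/N,\pi]$, combined with $|1 - e^{it}| \leq |t|$, direct integration yields the sup-norm bound $\|g\|_\infty = O(1/(\kappa N^2))$ (via Fourier inversion) and the Plancherel identity gives $\|g\|_2^2 = O(1/(\kappa^{3/2} N^3))$ (after the Gaussian substitution $u = tN\sqrt{\kappa}$).

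The second step is to convert these $L^2$ and $L^\infty$ estimates into an $L^1$ bound using the concentration of $T$. Hoeffding's inequality, applied to the bounded summands $U_i \in [0,N-1]$, shows that $T$ is concentrated in an interval of half-width $L = O(N\sqrt{\kappa \log \kappa})$ around its mean $\mu_T$; outside this window, $\sum_{|k-\mu_T|>L}\bigl(p_T(k)+p_T(k-1)\bigr)$ is polynomially small in $\kappa$. I would split $\|g\|_1$ into interior and tail parts, use Cauchy--Schwarz on the interior to obtain $\sqrt{2L}\cdot \|g\|_2 = O(1/(N\sqrt{\kappa}))$ up to a $\mathrm{polylog}(\kappa)$ overhead, and bound the tail directly by the tail probability.

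The main obstacle is the residual polylogarithmic factor from the Hoeffding-based truncation. This is harmless for the downstream application—Lemma~\ref{lemma:distance_between_discrete_uniform_sum_and_discrete_gaussian} only needs a final $O(1/\sqrt{\kappa})$ bound, into which any $\mathrm{polylog}(\kappa)$ overhead could in principle be absorbed by adjusting constants and the definition of $\kappa$. To match the stated log-free form of the fact exactly, the cleanest route is to appeal instead to a quantitative lattice local central limit theorem (e.g., Petrov's Edgeworth expansion for sums of i.i.d.\ integer-valued variables), which gives $p_T(k) = \phi_{\mu_T,\sigma_T}(k) + O(1/(\sigma_T \kappa))$ uniformly in $k$. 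Differencing yields $|g(k)| \leq |\phi'_{\mu_T,\sigma_T}(\xi_k)| + O(1/(\sigma_T \kappa))$, whose sum over $k$ is controlled by the Gaussian $L^1$ derivative norm $O(1/\sigma_T) = O(1/(N\sqrt{\kappa}))$, the error term contributing only $O(1/\kappa)$, which is absorbed into the main bound.
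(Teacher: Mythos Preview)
The paper states this as a ``known fact'' and gives no proof, so there is no paper argument to compare against. Your Fourier approach is sound and your characteristic-function estimates for $\phi_U$ are correct; the $L^2$ and $L^\infty$ bounds on $g$ are right, and the Cauchy--Schwarz/Hoeffding route does yield the bound up to a $\mathrm{polylog}(\kappa)$ factor, which, as you note, would already suffice for Lemma~\ref{lemma:distance_between_discrete_uniform_sum_and_discrete_gaussian}.

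Your local-CLT patch to remove the log, however, has a gap. Even granting the pointwise expansion $p_T(k)=\phi_{\mu_T,\sigma_T}(k)+O(1/(\sigma_T\kappa))$, the assertion that ``the error term contributes only $O(1/\kappa)$'' does not follow: you must sum that uniform error over the support of $T$, which has size $\Theta(\kappa N)$, giving $\Theta(\kappa N)/(\sigma_T\kappa)=\Theta(1/\sqrt{\kappa})$, not $O(1/\kappa)$. Restricting to a window of width $O(\sigma_T\sqrt{\log(N\kappa)})$ and handling the tails by Hoeffding brings the error down to $O(\sqrt{\log(N\kappa)}/\kappa)$, but this is $\le O(1/(N\sqrt{\kappa}))$ only when $N\lesssim \sqrt{\kappa/\log(N\kappa)}$. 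In the paper's setting $N=2^{\beta+1}=\Theta(d)$ while $\kappa=\Theta(n^2)$, and $d$ is an unconstrained input parameter, so this regime cannot be assumed. Likewise, ``$O(1/\kappa)$ is absorbed into $O(1/(N\sqrt{\kappa}))$'' would require $N\le C\sqrt{\kappa}$, which fails in general.

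There is a much shorter log-free argument that reuses your own Fourier work. The discrete uniform on $\{0,\dots,N-1\}$ is a log-concave sequence, and convolutions of log-concave sequences are log-concave, so $p_T$ is unimodal. For any unimodal integer distribution,
\[
\bigl\|\mathcal{D}(T)-\mathcal{D}(T+1)\bigr\|_{TV}=\tfrac12\sum_k|p_T(k)-p_T(k-1)|=\max_k p_T(k)=\|p_T\|_\infty,
\]
since the differences telescope on each monotone side of the mode. Then Fourier inversion gives
\[
\|p_T\|_\infty\le \tfrac{1}{2\pi}\int_{-\pi}^{\pi}|\phi_U(t)|^{\kappa-1}\,dt
= O\!\left(\tfrac{1}{N\sqrt{\kappa}}\right),
\]
using exactly your estimates $|\phi_U(t)|\le e^{-cN^2t^2}$ on $|t|\le\pi/N$ (Gaussian substitution contributes $O(1/(N\sqrt{\kappa}))$) and $|\phi_U(t)|\le \pi/(N|t|)$ on $[\pi/N,\pi]$ (contributing $O(1/(N\kappa))$). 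This gives the stated $O(1/(2^\beta\sqrt{\kappa}))$ without any truncation, concentration, or local-CLT machinery.
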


As part of this work we prove a lemma analogous to that about spheres from \cite{goldreich1998limits}. That is, discrete Gaussian distributions with (1) sufficiently close centers of mass, and (2) sufficiently large standard deviations, have the following upper bound on their total variation distance. 

\begin{lemma} [Close discrete Gaussians have bounded total variation distance] \label{lemma:wide_discrete_gaussians_are_close}
    Let $n \in \Nat$, let $B > 0$ and let $\eVector_{0} \in \bbZ^{n}$, then,
    $$
    \norm{ \mathcal{D}\left( D^{n}_{ B } \right) - \mathcal{D}\left( D^{n}_{ B } + \eVector_{0} \right) }_{TV}
    \leq
    \sqrt{
    1
    -
    e^{ -\frac{ \pi \cdot \left( \norm{ \eVector_{0} }^{2} + 2\cdot \norm{ \eVector_{0} }\cdot \sqrt{n}\cdot B \right) }{ B^{2} } }
    }
    \enspace .
    $$
\end{lemma}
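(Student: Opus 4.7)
The plan is to convert the total variation bound into a fidelity lower bound via the preliminary inequality $\norm{P - Q}_{TV} \leq \sqrt{1 - F(P, Q)^2}$, reducing the lemma to proving $F(D_B^n, D_B^n + \eVector_0)^2 \geq e^{-\pi(\norm{\eVector_0}^2 + 2\norm{\eVector_0}\sqrt{n}B)/B^2}$.

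Since $D_B^n$ is by construction the $n$-fold product of the one-dimensional truncated discrete Gaussian $D_B$, both $D_B^n$ and $D_B^n + \eVector_0 = (D_B + \eVector_0^{(1)}) \otimes \cdots \otimes (D_B + \eVector_0^{(n)})$ are product distributions, and the fidelity factorizes across coordinates as
\begin{equation*}
F(D_B^n, D_B^n + \eVector_0) = \prod_{i=1}^{n} F(D_B, D_B + \eVector_0^{(i)}).
\end{equation*}
My plan is then to reduce the multi-dimensional estimate to the one-dimensional claim $F(D_B, D_B + m)^2 \geq e^{-\pi(m^2 + 2B\lvert m\rvert)/B^2}$ for every integer $m$; multiplying these one-dimensional bounds across the $n$ coordinates yields $F^2 \geq e^{-\pi(\norm{\eVector_0}^2 + 2B\norm{\eVector_0}_1)/B^2}$, after which the elementary inequality $\norm{\eVector_0}_1 \leq \sqrt{n}\norm{\eVector_0}$ weakens the exponent to exactly match the target bound. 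This product-structure reduction is what lets us avoid the loss of a ``support ratio'' factor $\sum_{y \in S}\Phi(y)/Z$ that would appear if one tried to obtain the inequality by a purely $n$-dimensional Cauchy--Schwarz argument on the exponent.

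For the one-dimensional claim itself, I would start from the explicit form $F(D_B, D_B + m) = Z_1^{-1}\sum_{y=-B+m}^{B} e^{-\pi(y^2 + (y-m)^2)/(2B^2)}$ (assuming $m \geq 0$ by the symmetry of $D_B$ around zero, where $Z_1$ is the one-dimensional normalizing sum) and apply the algebraic identity $y^2 + (y-m)^2 = 2(y-m/2)^2 + m^2/2$ to extract the overall prefactor $e^{-\pi m^2/(4B^2)}$. This reduces the claim to comparing the shifted Gaussian lattice sum $\sum_{y=-B+m}^{B}e^{-\pi(y-m/2)^2/B^2}$ with an $e^{-\pi m/B}$-multiple of the centered sum $Z_1 = \sum_{y=-B}^{B} e^{-\pi y^2/B^2}$. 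This comparison is the main technical step: the shifted sum is missing the two tails of $Z_1$ near $\pm B$, but each such missing term is Gaussian-suppressed by a factor of order $e^{-\pi(1-O(m/B))^2}$, and a direct tail estimate (treating the integer and half-integer lattice cases uniformly depending on the parity of $m$) shows that this loss is comfortably absorbed by the $e^{-\pi m/B}$ slack afforded by the target exponent, since $\pi \approx 3.14$ is much larger than $e^{-\pi} \approx 0.043$.
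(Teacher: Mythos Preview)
Both your route and the paper's reduce to a fidelity lower bound, but the paper stays in $n$ dimensions throughout: from $F = P^{-1}\sum_{\eVector} e^{-\pi(\|\eVector\|^2 + \|\eVector-\eVector_0\|^2)/(2B^2)}$ it applies the triangle inequality $\|\eVector-\eVector_0\|^2 \le (\|\eVector\|+\|\eVector_0\|)^2$ and then replaces $\|\eVector\|$ by $\sqrt{n}\,B$ \emph{only in the cross term} $2\|\eVector\|\,\|\eVector_0\|$. This leaves the $\eVector$-dependence as exactly $e^{-\pi\|\eVector\|^2/B^2}$, so the remaining sum is precisely the normalizer $P$ and cancels---no tail estimate, no shifted-lattice comparison. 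Your coordinate factorization is valid, but note that the same one-line trick in one dimension, namely $(y-m)^2 \le y^2 + m^2 + 2B|m|$, already yields your target $F(D_B,D_B+m)^2 \ge e^{-\pi(m^2+2B|m|)/B^2}$ without completing the square at all. What your explicit restriction to $y\in[-B+m,B]$ and the tail comparison buy is honesty about the truncation: the paper tacitly uses the Gaussian formula for $D_B^n(\eVector-\eVector_0)$ even at points where $\eVector-\eVector_0$ falls outside the box $[-B,B]^n$, so its first displayed ``$=$'' is really an ``$\le$'' and a support-ratio factor of exactly the kind you anticipate is swept under the rug. Your sketch addresses this correctly in spirit, though the ``$\pi$ versus $e^{-\pi}$'' heuristic is not yet a proof and would need work to make uniform in $m$ and across the integer/half-integer parity split; in the regime relevant to the paper's application ($\|\eVector_0\|\ll B$) the boundary effect is negligible either way.
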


\begin{proof}
    We calculate a lower bound for the fidelity between the distributions, and then draw our conclusions for the total variation distance. Let us denote by
    $$
    P
    :=
    \sum_{ \eVector \in \bbZ^{n}_{\left( -B, \cdots, -1, 0, 1, \cdots, B \right)} }
    e^{ -\frac{ \pi \cdot \norm{ \eVector }^{2} }{ B^{2} } }
    \enspace ,
    $$
    the normalization factor of the $n$-dimensional discrete truncated Gaussian distribution. With accordance to Definition \ref{definition:discrete_truncated_gaussian},
    \begin{itemize}
        \item
        The probability density function of $\mathcal{D}\left( D^{n}_{ B } \right)$ is
        $$
        \forall \eVector \in \bbZ^{n}_{\left( -B, \cdots, -1, 0, 1, \cdots, B \right)} :
        D^{n}_{ B }\left( \eVector \right)
        :=
        \frac{ e^{ -\frac{ \pi \cdot \norm{ \eVector }^{2} }{ B^{2} } } }{ P }
        \enspace .
        $$

        \item
        The probability density function of $\mathcal{D}\left( D^{n}_{ B } + \eVector_{0} \right)$ is
        $$
        \forall \eVector \in \bbZ^{n}_{\left( -B, \cdots, -1, 0, 1, \cdots, B \right)} :
        \left( D^{n}_{ B } + \eVector_{0} \right)\left( \eVector \right)
        :=
        \frac{ e^{ -\frac{ \pi \cdot \norm{ \eVector - \eVector_{0} }^{2} }{ B^{2} } } }{ P }
        =
        D^{n}_{B}\left( \eVector - \eVector_{0} \right)
        \enspace .
        $$
    \end{itemize}
    The fidelity between $\mathcal{D}\left( D^{n}_{ B } \right)$ and $\mathcal{D}\left( D^{n}_{ B } + \eVector_{0} \right)$ is as follows.
    $$
    F\left( \mathcal{D}\left( D^{n}_{ B } \right), \mathcal{D}\left( D^{n}_{ B } + \eVector_{0} \right) \right)
    :=
    \sum_{\eVector \in \bbZ^{n}_{\left( -B, \cdots, -1, 0, 1, \cdots, B \right)}}
    \sqrt{
    D^{n}_{B}\left( \eVector \right) \cdot D^{n}_{B}\left( \eVector - \eVector_{0} \right)
    }
    $$
    $$
    =
    \sum_{\eVector \in \bbZ^{n}_{\left( -B, \cdots, -1, 0, 1, \cdots, B \right)}}
    \sqrt{
    \frac{ e^{ -\frac{ \pi \cdot \norm{ \eVector }^{2} }{ B^{2} } } }{ P }
    \cdot 
    \frac{ e^{ -\frac{ \pi \cdot \norm{ \eVector - \eVector_{0} }^{2} }{ B^{2} } } }{ P }
    }
    $$
    $$
    =
    \sum_{\eVector \in \bbZ^{n}_{\left( -B, \cdots, -1, 0, 1, \cdots, B \right)}}
    \frac{1}{P}
    \cdot
    e^{ -\frac{ \pi \cdot \left( \norm{ \eVector }^{2} + \norm{ \eVector - \eVector_{0} }^{2} \right) }{ 2 \cdot B^{2} } }
    $$
    $$
    \underset{(\text{triangle inequality})}{\geq}
    \sum_{\eVector \in \bbZ^{n}_{\left( -B, \cdots, -1, 0, 1, \cdots, B \right)}}
    \frac{1}{P}
    \cdot
    e^{ -\frac{ \pi \cdot \left( \norm{ \eVector }^{2} + \left( \norm{ \eVector } + \norm{ \eVector_{0} } \right)^{2} \right) }{ 2 \cdot B^{2} } }
    $$
    $$
    =
    \sum_{\eVector \in \bbZ^{n}_{\left( -B, \cdots, -1, 0, 1, \cdots, B \right)}}
    \frac{1}{P}
    \cdot
    e^{ -\frac{ \pi \cdot \left( \norm{ \eVector }^{2} + \norm{ \eVector }^{2} + \norm{ \eVector_{0} }^{2} + 2\cdot \norm{ \eVector_{0} }\norm{ \eVector } \right) }{ 2 \cdot B^{2} } }
    $$
    $$
    \underset{\left( \forall \eVector \in \bbZ^{n}_{\left( -B, \cdots, -1, 0, 1, \cdots, B \right)} : \norm{ \eVector } \leq \sqrt{n}\cdot B \right)}{\geq}
    \sum_{\eVector \in \bbZ^{n}_{\left( -B, \cdots, -1, 0, 1, \cdots, B \right)}}
    \frac{1}{P}
    \cdot
    e^{ -\frac{ \pi \cdot \left( 2\cdot \norm{ \eVector }^{2} + \norm{ \eVector_{0} }^{2} + 2\cdot \norm{ \eVector_{0} }\cdot \sqrt{n}\cdot B \right) }{ 2 \cdot B^{2} } }
    $$
    $$
    =
    e^{ -\frac{ \pi \cdot \left( \norm{ \eVector_{0} }^{2} + 2\cdot \norm{ \eVector_{0} }\cdot \sqrt{n}\cdot B \right) }{ 2 \cdot B^{2} } }
    \cdot
    \sum_{\eVector \in \bbZ^{n}_{\left( -B, \cdots, -1, 0, 1, \cdots, B \right)}}
    \frac{1}{P}
    \cdot
    e^{ -\frac{ \pi \cdot 2 \cdot \norm{ \eVector }^{2} }{ 2 \cdot B^{2} } }
    $$
    $$
    \underset{(\text{by definition of }P)}{=}
    e^{ -\frac{ \pi \cdot \left( \norm{ \eVector_{0} }^{2} + 2\cdot \norm{ \eVector_{0} }\cdot \sqrt{n}\cdot B \right) }{ 2 \cdot B^{2} } } \enspace .
    $$
    Our proof ends by using the known relation between fidelity and total variation distance,
    $$
    \norm{ \mathcal{D}\left( D^{n}_{ B } \right) - \mathcal{D}\left( D^{n}_{ B } + \eVector_{0} \right) }_{TV}
    \leq
    \sqrt{ 1 - F\left( \mathcal{D}\left( D^{n}_{ B } \right), \mathcal{D}\left( D^{n}_{ B } + \eVector_{0} \right) \right)^{2} }
    \enspace .
    $$
\end{proof}

\paragraph{Acknowledgements}

\vspace{2mm}
\noindent
We are grateful to Scott Aaronson and to Joseph Carolan for insightful discussions on the computational power of the models suggested in this work. 

\ifllncs
\bibliographystyle{plain}
\else
\bibliographystyle{alpha}
\fi

\bibliography{bibliography,abbrev0}

\end{document}